\long\def\comment#1{}
\newfont{\bbb}{msbm10 scaled 700}
\newfont{\bb}{msbm10 scaled 1100}
\newcommand{\CC}{\mbox{\bb C}}
\newcommand{\PP}{\mbox{\bb P}}
\newcommand{\ZZ}{\mbox{\bb Z}}
\newcommand{\FF}{\mbox{\bb F}}
\newcommand{\EE}{\mbox{\bb E}}
\newcommand{\av}{{\bf a}}
\newcommand{\bv}{{\bf b}}
\newcommand{\cv}{{\bf c}}
\newcommand{\dv}{{\bf d}}
\newcommand{\hv}{{\bf h}}
\newcommand{\mv}{{\bf m}}
\newcommand{\rv}{{\bf r}}
\newcommand{\sv}{{\bf s}}
\newcommand{\tv}{{\bf t}}
\newcommand{\uv}{{\bf u}}
\newcommand{\wv}{{\bf w}}
\newcommand{\vv}{{\bf v}}
\newcommand{\xv}{{\bf x}}
\newcommand{\yv}{{\bf y}}
\newcommand{\zv}{{\bf z}}
\newcommand{\zerov}{{\bf 0}}
\newcommand{\onev}{{\bf 1}}
\newcommand{\Am}{{\bf A}}
\newcommand{\Bm}{{\bf B}}
\newcommand{\Cm}{{\bf C}}
\newcommand{\Dm}{{\bf D}}
\newcommand{\Em}{{\bf E}}
\newcommand{\Fm}{{\bf F}}
\newcommand{\Gm}{{\bf G}}
\newcommand{\Hm}{{\bf H}}
\newcommand{\Id}{{\bf I}}
\newcommand{\Jm}{{\bf J}}
\newcommand{\Lm}{{\bf L}}
\newcommand{\Mm}{{\bf M}}
\newcommand{\Qm}{{\bf Q}}
\newcommand{\Sm}{{\bf S}}
\newcommand{\Tm}{{\bf T}}
\newcommand{\Um}{{\bf U}}
\newcommand{\Wm}{{\bf W}}
\newcommand{\Vm}{{\bf V}}
\newcommand{\Xm}{{\bf X}}
\newcommand{\Ym}{{\bf Y}}
\newcommand{\Zm}{{\bf Z}}
\newcommand{\Ac}{{\cal A}}
\newcommand{\Cc}{{\cal C}}
\newcommand{\Dc}{{\cal D}}
\newcommand{\Ic}{{\cal I}}
\newcommand{\Lc}{{\cal L}}
\newcommand{\Mc}{{\cal M}}
\newcommand{\Nc}{{\cal N}}
\newcommand{\Pc}{{\cal P}}
\newcommand{\Sc}{{\cal S}}
\newcommand{\Wc}{{\cal W}}
\newcommand{\Vc}{{\cal V}}
\newcommand{\alphav}{\hbox{\boldmath$\alpha$}}
\newcommand{\lambdav}{\hbox{\boldmath$\lambda$}}
\newcommand{\zetav}{\hbox{\boldmath$\zeta$}}
\newcommand{\diag}{{\hbox{diag}}}
\renewcommand{\det}{{\hbox{det}}}
\newcommand{\trace}{{\hbox{tr}}}
\newcommand{\SNR}{{\sf SNR}}
\newcommand{\INR}{{\sf INR}}
\newcommand{\eqdef}{\stackrel{\Delta}{=}}
\newcommand{\herm}{{\sf H}}
\newcommand{\transp}{{\sf T}}
\newcommand{\BLUE}{\color[rgb]{0,0,0.90}}
\newtheorem{theorem}{Theorem}
\newtheorem{lemma}{Lemma}
\newtheorem{corollary}{Corollary}
\newtheorem{proof}{Proof}
\newtheorem{remark}{Remark}
\newtheorem{example}{Example}
\newcommand{\argmin}{\operatornamewithlimits{argmin}}
\begin{document}

\sloppy

\title{Structured Lattice Codes for Some Two-User Gaussian Networks with Cognition,
Coordination and Two Hops}

\author{\authorblockN{Song-Nam~Hong,~\IEEEmembership{Student Member,~IEEE,}
        and~Giuseppe~Caire,~\IEEEmembership{Fellow,~IEEE}}
\authorblockA{Department of Electrical Engineering, University of Southern California, Los Angeles, CA, USA}
\authorblockA{(e-mail: \{songnamh, caire\}$@$usc.edu)}}

\maketitle


\newpage

\begin{abstract}
We study a number of two-user interference networks with multiple-antenna transmitters/receivers (MIMO),
transmitter side information in the form of linear combinations (over an appropriate finite-field) of the information messages, and
two-hop relaying. We start with  a Cognitive Interference Channel (CIC) where one of the transmitters (non-cognitive)
has knowledge of a rank-1 linear combination of the two information messages, while the other transmitter (cognitive) has access to
a rank-2 linear combination of the same messages.  This is referred to as the Network-Coded CIC,
since such linear combination may be the result of some random linear network coding scheme implemented in the backbone
wired network. For such channel we develop an achievable region based on a few novel concepts: Precoded Compute and Forward (PCoF)
with Channel Integer Alignment (CIA),
combined with standard Dirty-Paper Coding. We also develop a capacity region outer bound and find the sum symmetric Generalized Degrees of Freedom (GDoF) of the
Network-Coded CIC. Through the GDoF characterization, we show that knowing ``mixed data'' (linear combinations of the information messages)
provides an {\em unbounded} spectral efficiency gain over the classical CIC counterpart, if the ratio (in dB) of signal-to-noise (SNR) to interference-to-noise (INR)
is larger than certain threshold.  Then, we consider a Gaussian relay network having the two-user MIMO IC as the main building block.
We use PCoF with CIA to convert the MIMO IC into a {\em deterministic} finite-field IC. Then, we use a linear precoding scheme over the finite-field
to eliminate interference in the finite-field domain. Using this unified approach, we derive the {\em symmetric} sum rate of the
two-user MIMO IC with coordination, cognition, and two-hops. We also provide finite-SNR results (not just degrees of freedom) which show that the proposed coding schemes are competitive against state-of-the-art interference avoidance based on orthogonal access, for standard randomly generated
Rayleigh fading channels.
\end{abstract}

\begin{IEEEkeywords}
Interference Channel, Nested Lattice Codes, Compute and Forward, Generalized Degrees of Freedom, Network Coding
\end{IEEEkeywords}

\clearpage

\section{Introduction}\label{sec:intro}

Interference is one of the fundamental aspects of wireless communication networks.
Although the full characterization of the Interference Channel (IC) capacity is elusive, much progress has been made in recent years.
The capacity region of the two-user Gaussian IC was characterized within 1 bit, by using superposition coding with an appropriate power allocation
of the private and common message codewords, and by providing a new upper bounding technique (known as, Genie-aided bound)  \cite{Etkin}. Degrees of Freedom (DoF) results are obtained under the assumption of full channel knowledge for the two-user multiple input multiple output (MIMO) IC with arbitrary number of antennas at each node \cite{Jafar07}, for the $K$ user IC with time-varying or frequency-selective channels \cite{Cadambe-K}, for the $K$-user IC with constant channel coefficients \cite{Motahari}, and for the $K$-user IC with multiple antennas \cite{Gou-K}. Also, Generalized DoF (GDoF) results are found for the two-user MIMO IC \cite{Karmakar} and for {\em symmetric} $K$ user IC \cite{Jafar-G}. We refer the reader to the \cite{Jafar-B} for the further results of various interference networks.

In many practical communication systems, transmitters or receivers are not isolated.
For example, in cellular systems the base stations are connected via a wired backhaul network through which information messages and some form of channel state information or coordination can be shared \cite{Flanagan,Lin,Marict,Song-ISIT,Song-IT}.
In wired networks, routing is generally optimal only for the single-source single-destination case \cite{Ford}.
In the more general case of multiple sources and multiple destinations (multi-source multicasting), linear network coding is known to achieve
the min-cut max-flow bound \cite{Ahlswede}. In practice, random linear network coding is of particular interest for its simplicity.
In this case, intermediate nodes forward linear combinations of the incoming messages by randomly and independently choosing the coefficients
from an appropriate finite-field \cite{Ho}. Going back to the cellular systems case, if random linear network coding is
used in the backhaul network, the base stations obtain linear combination of the messages instead of individual messages.
If the backhaul link serving a given base station has capacity large enough, the rank (per unit time) of such linear combinations is equal to the number of
independent messages (per unit time), so that the base station knows all the messages. In contrast, if the backhaul link is a capacity bottleneck,
the rank (per unit time) is less than the number of independent messages (per unit time). In this case, the base station has access to ``mixed data'',
i.e., rank-deficient linear combinations of the messages. We refer to this model as the Network-Coded Cognitive IC (CIC).

An example of Network-Coded CIC is shown in Fig.~\ref{model3}, including a cellular BS and a home BS (e.g.,  a femtocell access point).
The cellular BS is connected to the data router, which generates both messages,
via a high capacity link supporting rate $2R_{0}$.
The home BS is connected to the same data router via lower capacity link supporting only rate $R_{0}$.
In this case, the data router sends two information messages to the cellular BS (equivalently, a rank-2 linear combination thereof)
and a rank-1 linear combination of the messages to the home BS.
In the case of routing, this linear combination has coefficients 0 and 1, reducing to the classical CIC, which has been extensively investigated in the literature
\cite{Maric,Wu,Jovicic}. In particular, the Gaussian CIC capacity region was approximately characterized within one bit in \cite{Rini}.
If general network coding is used instead of routing, the rank-1 linear combination has generally non-zero coefficients and therefore contains mixed data.
Notice that in the model of Fig.~\ref{model3} mixed data can be provided without violating the backhaul capacity constraint of $R_0$.
At this point, a natural question arises: {\em Does mixed data at the ``non-cognitive'' transmitter provide a capacity
increase ``for free'' (i.e., without any cost in terms of backhaul rate),
for the Network-Coded CIC over the  conventional CIC?}

\begin{figure}
\centerline{\includegraphics[width=14cm]{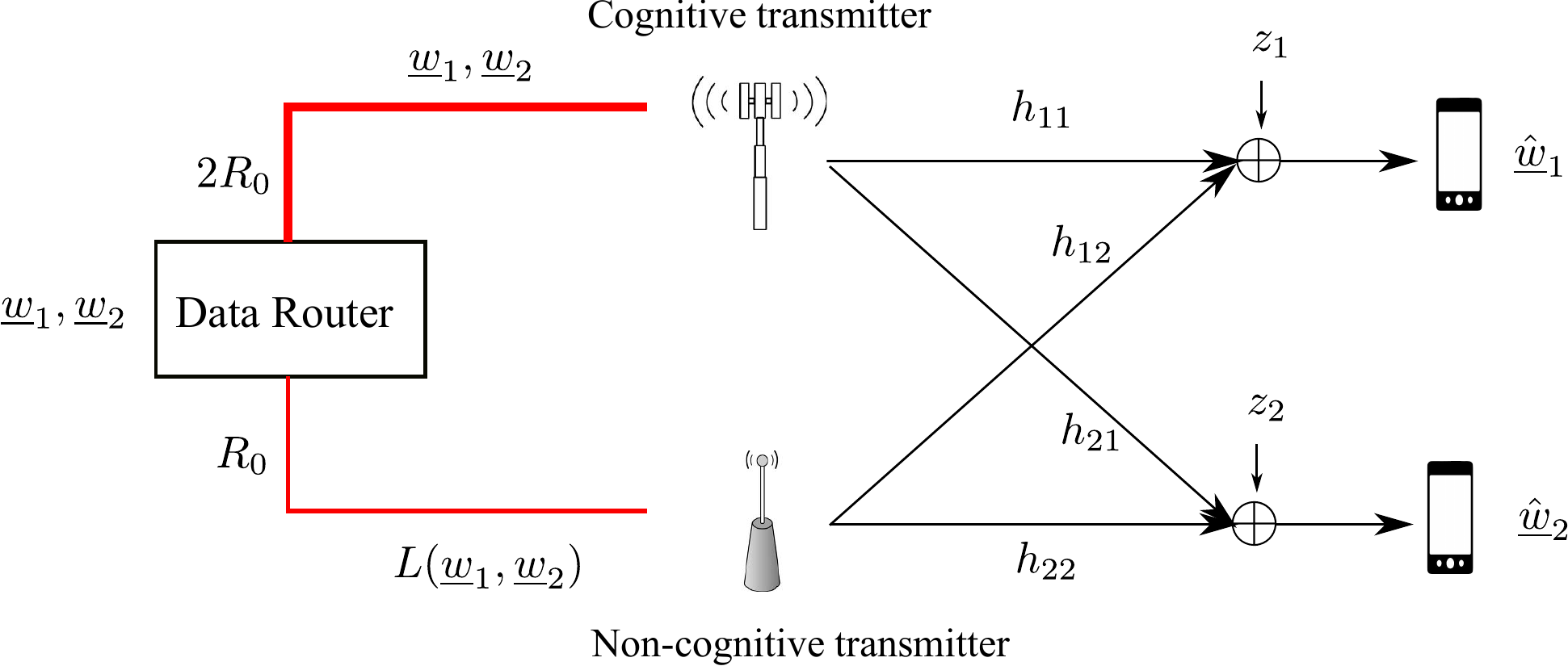}}
\caption{In the classical CIC, the data router sends the one of information messages to the non-cognitive transmitter (i.e., $L(\underline{\wv}_{1},\underline{\wv}_{2}) = \underline{\wv}_{2}$). In the Network-Coded CIC, the data router forwards ``mixed data" to the non-cognitive transmitter
(i.e., $L(\underline{\wv}_{1},\underline{\wv}_{2})=\underline{\wv}_{1} \oplus \underline{\wv}_{2}$).}
\label{model3}
\end{figure}

\begin{figure}
\centerline{\includegraphics[width=16cm]{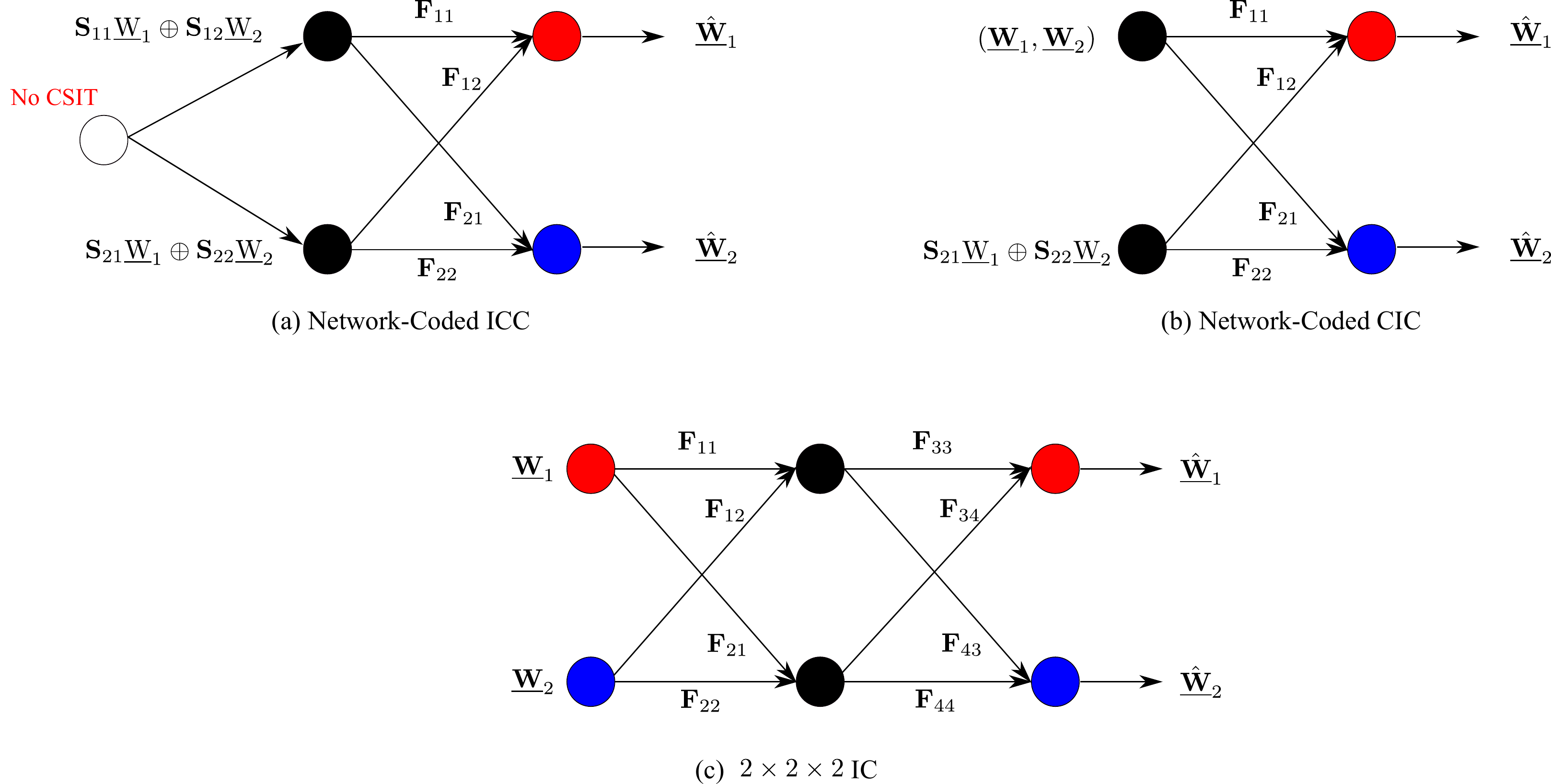}}
\caption{Two-User Gaussian networks with coordination, cognition, and two hops. }
\label{system-model}
\end{figure}

Proceeding along this line, we observe that a basic level of multicell cooperation
named {\em interference coordination} has been investigated and it is currently considered in industry for its practical aspects.
A simple two-user model for interference coordination consists of a data router, two $M$-antenna base stations,
and two $M$-antenna user receivers. The wired backhaul links from the router to the base stations have the same capacity equal to $R_0$.
The data router has no knowledge on channel state information (CSI), due to the separation between
physical layer and network layer.
However, the base stations have full CSI of both the direct and interfering links, obtained from the users through feedback channels.
Such CSI knowledge allows the base stations to coordinate in their sharing strategies such as power allocation and beamforming directions~\cite{Gesbert}.
From an information theoretic viewpoint,  this model is a two-user MIMO IC.
Assuming that network coding is used in the backhaul,  the data router can deliver linear combinations of the messages instead of individual messages,
at the same cost in terms of backhaul capacity constraint. Hence, the model becomes a two-user MIMO IC with mixed data at both transmitters (see Fig.~\ref{system-model} (a)), and shall be referred to as
the Network-Coded Interference Coordination Channel (ICC). In this paper, we address the following question:
 {\em Does mixed data at the transmitters provide a capacity increase ``for free" for the Network-Coded ICC over conventional interference coordination?}

Finally, building on the insight gained in the above Network-Coded cognitive networks,
we study the $2 \times 2 \times 2$ MIMO IC, as shown in Fig.~\ref{system-model} (c), consisting of two transmitters (sources),
two relays, and two receivers (destinations), where nodes have $M$ multiple antennas.
This model is non-cognitive but has some commonality with the previous models in the sense that it consists of two
cascaded two-user MIMO ICs where the relay can have access to mixed messages if proper alignment and coding over the finite-fields is used
in the first hop.  The $2 \times 2 \times 2$ Gaussian IC has received much attention recently, being one of the fundamental building blocks to characterize
two-flow networks \cite{Shomorony}. One natural approach is to consider this model as a cascade of two ICs.
In \cite{Simeone}, the authors apply the Han-Kobayashi scheme \cite{Han} for the first hop to split each message into private and common parts.
Relays can cooperate using the shared information (i.e., common messages) for the second hop, in order to enhance the data rates.
This approach is known to be highly suboptimal at high SNR, since two-user IC can only achieve 1 DoF.
In \cite{Cadambe} it was shown that $\frac{4}{3}$ DoF is achievable by viewing each hop as an X-channel.
This is accomplished using the {\em interference alignment} scheme for each hop.
More recently, the optimal DoF was obtained in \cite{Gou} using a new scheme called {\em aligned interference neutralization},
which appropriately combines interference alignment and interference neutralization.
Also, the $K\times K\times K$ Gaussian IC was recently studied in \cite{Shomorony-K}, where it is shown that the DoFs
cut-set upper bound  (equal to $K$) can be effectively achieved using {\em aligned network diagonalization}.

\subsection{Contributions}

\subsubsection{Network-Coded CIC:  single antenna case}

We characterize the capacity region of a finite-field Network-Coded CIC
using distributed zero-forcing precoding.
We notice that this region is equivalent to that of  a finite-field  vector broadcast channel. This shows that in this case partial cooperation
yields the same performance of full cooperation, as long as the non-cognitive
transmitter knows the mixed message rather than its own individual message only.
Thus, we conclude that mixed data at the non-cognitive transmitter {\em can increase capacity}.
It is worthwhile noticing that the finite-field model is itself meaningful in practical wireless communication systems,
by the observation that the main bottleneck of a digital receiver is the Analog to Digital Conversion (ADC), which is costly, power-hungry, and does not scale
with Moore's law. Rather, the number of bit per second produced by ADC is roughly a constant that depends on the power consumption \cite{Walden,Singh}.
Therefore, it makes sense to consider the ADC as part of the channel. This, together with the algebraic structure induced by lattice coding,
produces a finite-field model as shown by the authors in \cite{Song-IT,Song-ITW}. Motivated by this first successful result, we present a novel scheme
nicknamed {\em Precoded Compute-and-Forward} (PCoF) for Gaussian Network-Coded CIC.
CoF makes use of nested lattice codes, such that each receiver can reliably decode a linear combination with integer
coefficient of the interfering codewords \cite{Nazer}. Thanks to the fact that lattice are modules over the ring of integers,
this linear combination translates directly into a linear combination of the information messages defined over a suitable finite-field.
For brevity, we refer to this fact as ``lattice linearity'' in the following.
Finally, the interference in the finite-field domain is completely eliminated by distributed zero forcing
precoding (over finite-field). This scheme can be thought of as a distributed approach of Reverse CoF (RCoF), proposed by the authors
in \cite{Song-ISIT,Song-IT} for the downlink of distributed antenna systems.

\begin{figure}
\centerline{\includegraphics[width=14cm]{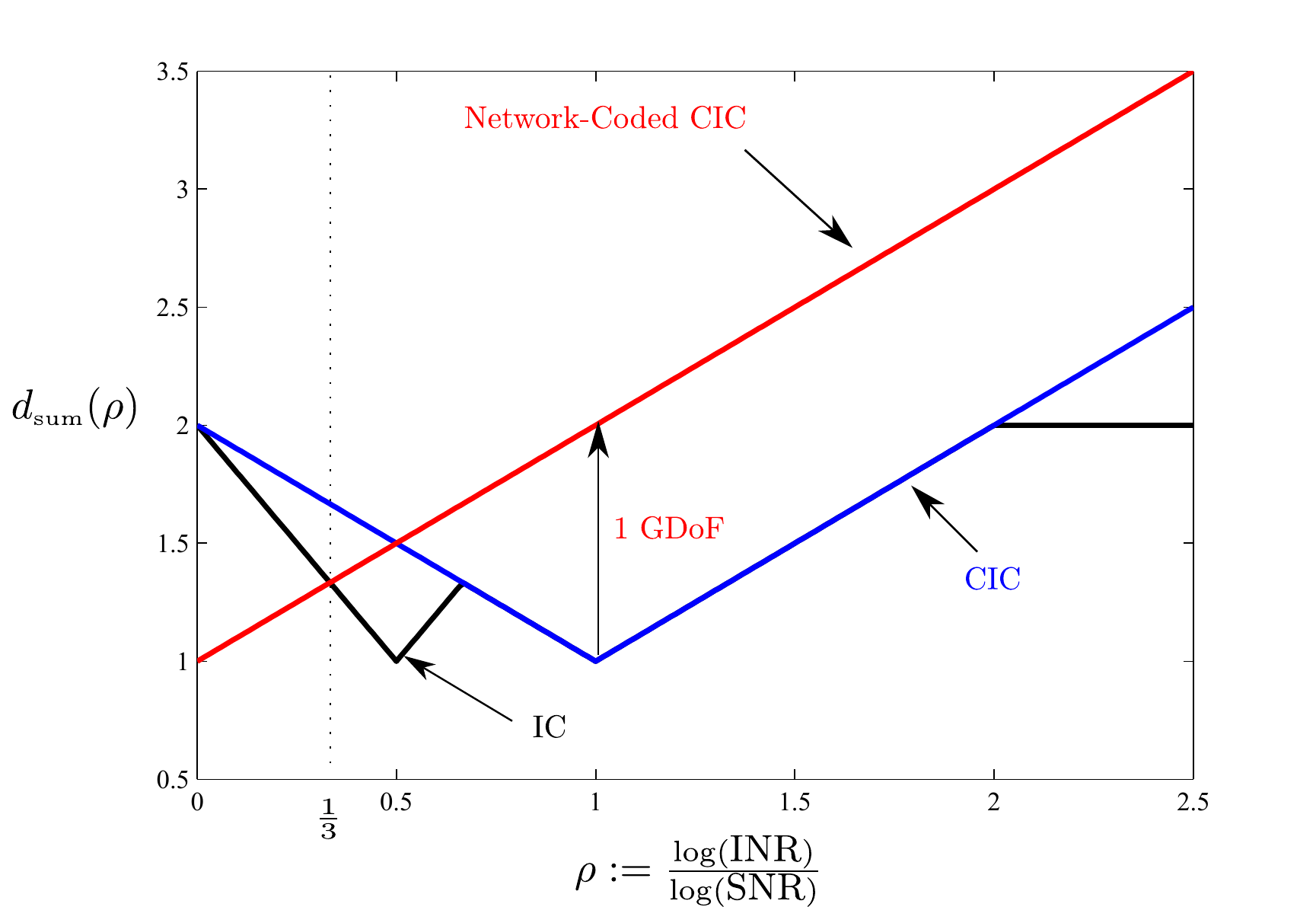}}
\caption{The generalized degrees-of-freedom (GDoF) of the two-user Gaussian Network-Coded CIC. For the interference regimes with $\rho \geq 1/2$, the gap between the Network-Coded CIC and CIC becomes arbitrarily large as SNR and INR goes to infinity.
This shows that mixed-data at the non-cognitive transmitter can provide the unbounded capacity gain at high SNR.}
\label{GDoF}
\end{figure}

Another novel contribution of this work is the characterization of the sum GDoF  (see
\cite{Etkin}  and definition in (\ref{GDoF-sum}))
of the Gaussian Network-Coded CIC. This is obtained by combining an improved achievability result
that makes use of both Dirty-Paper Coding (DPC) and PCoF, with a new outer bound on the sum rate.
As a consequence of the GDoF analysis, we show that mixed data can provide an {\em unbounded}
capacity gain with respect to conventional CICs.
As shown in Fig.~\ref{GDoF}, the sum GDoF  of the Network-Coded CIC is larger than
the sum GDoF of the standard IC when
$\rho = \frac{\log \INR}{\log \SNR}$, the ratio of the interference power over the direct link power (expressed in dB),
is larger than $1/3$, and it is larger than the  sum GDoF of the standard CIC when $\rho > 1/2$. In contrast,
for $\rho < 1/2$ it is better not to mix the data on the side information link
to the non-cognitive transmitter (i.e., use routing in the backhaul link).
It is also interesting to notice that for $\rho=1$ the use of mixed data provides the same
two degrees of freedom of full cooperation (two-users vector broadcast channel), mimicking the result of the finite-field case.

\subsubsection{Network-Coded MIMO IC}

As anticipated before, we have considered three communication models having the two-user MIMO IC as a building block:
Network-Coded ICC (representative of a cellular system downlink with interference coordination),
Network-Coded CIC with MIMO generalization, and $2\times 2\times 2$ IC  \cite{Gou,Shomorony}. In all these models,
we assume that all nodes have $M$ transmit/receive antennas.
Our coding scheme is based on the extension of the PCoF idea to the MIMO case.
This scheme consists of two phases: 1) Using the CoF framework in order to  transform the two-user MIMO IC into a {\em deterministic}
finite-field IC; 2) Using linear precoding on the finite-field domain in order to eliminate interference.
The main performance bottleneck of CoF consists of the non-integer penalty,
which ultimately limits the performance of CoF at high SNR \cite{Niesen1}. To overcome this bottleneck, we employ
{\em Channel Integer Alignment}  (CIA) in order to create an ``aligned" channel matrix for which exact integer
forcing is possible. We derive achievable {\em symmetric} sum rate results for all three channel modes
and prove that PCoF with CIA can achieve sum DoF equal to $2M-1$ in all cases. In particular, for the Network-Coded CIC, we prove that the optimal $2M$ sum DoF is achieved by appropriately combining DPC and PCoF as in the scalar case.  Beyond the DoF results, we further employ the lattice codes algebraic structure in order to
obtain good performance at finite SNRs. We use the {\em integer-forcing receiver} (IFR) approach of \cite{Zhan} and
{\em integer-forcing beamforming} (IFB), proposed by the authors in \cite{Song-ISIT,Song-IT},
in order to minimize the power penalty at the transmitters. We provide numerical results showing that
PCoF with CIA outperforms time-sharing even at reasonably moderate SNR,
with increasing performance gain as SNR increases.  Notice that PCoF with CIA cannot achieve the optimal DoF equal to $2M$. However, we would like to emphasize that unlike a DoF-optimal scheme based on either rational dimension framework or symbol extension framework, the achievable of the proposed scheme is computable at any finite SNR. Also, it can provide a satisfactory performance at ``practical" values of SNR and at manageable complexity but the DoF-optimal scheme cannot not guarantee a good performance at finite SNRs.

\subsection{Organization}

This paper is organized as follows.
In Section~\ref{sec:pre}, we summarize some definitions on lattices and lattice coding
and review CoF.
In Section~\ref{sec:CIC-FF}, we characterize the capacity region of finite-field Network-Coded CIC and present PCoF,
as a natural extension of finite-field scheme, for the Gaussian Network-Coded CIC.
Further, we derive an achievable rate region of the Gaussian Network-Coded CIC, by appropriately combining
PCoF and DPC, and characterize the sum GDoF.
In Section~\ref{sec:twoMIMO}, we characterize an achievable {\em symmetric} sum rate for the two-user MIMO ICs under investigation,
and derive the DoF of these channels. The sum rates are improved in Section~\ref{sec:SIC} using successive cancellation
with respect to CoF. In Section~\ref{sec:finite}, we optimize the (symmetric) sum rate
and provide some numerical results, showing good performance
at intermediate and ``practical'' values of SNR. Some concluding remarks are provided in Section~\ref{sec:con}.

\section{Preliminaries}\label{sec:pre}

In this section we provide some basic definitions and background results that will be extensively used in the sequel.

{\it Notation:} We use boldface capital letters $\Xm$ for matrices and boldface small letters $\xv$ for column vectors. In addition, we use ``underline"  to denote matrices whose horizontal dimension (column index) denotes ``time" and vertical dimension (row index) runs across the antennas. That is, the $K\times n$ matrix
\begin{equation}
\underline{\Xm} = \left[
  \begin{array}{c}
    \underline{\xv}_{1} \\
    \vdots \\
    \underline{\xv}_{K}
  \end{array}
\right]
\end{equation} contains, arranged by rows, the row vectors $\underline{\xv}_{k} \in \CC^{1 \times n}$ for $k=1,\ldots, K$. $\Xm^{\dag}$ denote the Hermitian transpose of the matrix $\Xm$. Also, $\trace{(\Xm)}$ and $\det(\Xm)$ denote the trace and the determinant of the square matrix $\Xm$. Also, $\Id_{M}$ denotes the $M \times M$ identity matrix.

\subsection{Nested Lattice Codes}\label{subsec:NLC}

Let $\ZZ[j]$ be the ring of Gaussian integers and $p$ be a prime. Let $\oplus$ denote the addition over $\FF_{q}$ with $q=p^2$, and let $g: \FF_{q} \rightarrow \CC$ be the natural mapping of $\FF_{q}$ onto $\{a+jb: a,b \in \ZZ_{p}\} \subset \CC$.
We recall the nested lattice code construction given in \cite{Nazer}.
Let $\Lambda = \{ \underline{\lambdav} = \underline{\zv} \Tm : \underline{\zv} \in \ZZ^n[j]\}$ be a lattice in $\CC^n$,
with full-rank generator matrix $\Tm \in \CC^{n \times n}$.
Let $\Cc  = \{ \underline{\cv} = \underline{\wv} \Gm : \wv \in \FF_{q}^r \}$ denote a linear code over $\FF_{q}$ with block length $n$ and dimension $r$, with generator matrix $\Gm$. The lattice $\Lambda_1$ is defined through ``construction A'' (see \cite{Erez2004} and references therein) as
\begin{equation} \label{construction-A}
\Lambda_1 = p^{-1} g(\Cc) \Tm + \Lambda,
\end{equation}
where $g(\Cc)$ is the image of $\Cc$ under the mapping $g$ (applied component-wise).
It follows that $\Lambda \subseteq \Lambda_1 \subseteq p^{-1} \Lambda$ is a chain of nested lattices, such that
$|\Lambda_1/\Lambda| = p^{2r}$ and $|p^{-1} \Lambda/\Lambda_1| = p^{2(n - r)}$.

For a lattice $\Lambda$ and $\underline{\rv} \in \CC^n$,
we define the lattice quantizer $Q_{\Lambda}(\underline{\rv}) = \argmin_{\tiny{\underline{\lambdav}} \in \Lambda}\|\underline{\rv} - \underline{\lambdav} \|^2$, the Voronoi region $\Vc_\Lambda = \{\underline{\rv} \in \CC^{n}: Q_{\Lambda}(\underline{\rv}) = \underline{\zerov}\}$,
the modulo reduction $[\underline{\rv}] \mod \Lambda = \underline{\rv} - Q_{\Lambda}(\underline{\rv})$,
and the per-component second moment
$\sigma_\Lambda^2 = \frac{1}{n \mbox{\small Vol}(\Vc)} \int_{\Vc} \| \underline{\rv} \|^2 d\underline{\rv}=\SNR$.

Given $\Lambda$ and $\Lambda_1$ above, we define the lattice code $\Lc = \Lambda_{1} \cap \Vc_\Lambda$ with rate
$R = \frac{1}{n} \log |\Lc| = \frac{r}{n}\log{q}$.
The set $p^{-1} g(\Cc)\Tm$ is a {\em system of representatives} of the cosets of $\Lambda$ in $\Lambda_1$.
This induces a natural labeling $f : \FF^r_{q} \rightarrow \Lc$ of the codewords of $\Lc$ by the information messages $\underline{\wv} \in \FF_q^r$
defined by $f(\underline{\wv}) = p^{-1} g(\underline{\wv} \Gm)\Tm \mod \Lambda$.

\subsection{Compute-and-Forward and Integer-Forcing}\label{subsec:CoF}

We recall here the CoF scheme of \cite{Nazer} applied to a particular case of Gaussian MIMO channel
with joint processing of the receiver antennas and  independent lattice coding at each transmit antenna.
Our reference model is given by
\begin{equation}  \label{2user-GMAC}
\underline{\Ym} = \Hm\Cm \underline{\Xm} + \underline{\Zm}
\end{equation}
where $\Hm \in \CC^{M \times M}$ is a full-rank channel matrix, $\Cm \in \ZZ[j]^{M \times S}$,
$\underline{\Xm} \in \CC^{S \times n}$, and where $S$ denotes the number of independent and independently encoded information streams (messages) sent by a virtual
``super-user'' collecting all channel inputs. Here, $\underline{\Zm}$ contains i.i.d. Gaussian noise samples
$\sim \Cc\Nc(0,1)$.
For $k = 1, \ldots, S$, each $k$-th independent message $\underline{\wv}_{k} \in \FF_{q}^{r}$ is encoded input the codeword
$\underline{\tv}_{k} = f(\underline{\wv}_{k})$ of the same lattice code $\Lc$ of rate $R$ and mapped to the channel input sequence
\begin{equation}
\underline{\xv}_{k}= [\underline{\tv}_{k} + \underline{\dv}_{k}] \mod \Lambda,
\end{equation}
where the {\em dithering sequences} $\{\underline{\dv}_{k}\}$ are mutually independent,
uniformly distributed over $\Vc_{\Lambda}$, and known to the receiver.\footnote{The dithering sequences in this paper have
these properties, and this fact will be understood in the sequel even though not explicitly stated.}
The encoded sequences $\{\underline{\xv}_{k}\}$ are arranged by rows into the transmit signal matrix $\underline{\Xm}$.
We also define the matrix $\underline{\Tm}$ of dimensions $S \times n$ containing
$\{\underline{\tv}_{k}\}$ arranged by rows,  and the dithering matrix $\underline{\Dm}$ with rows $\{\underline{\dv}_{k}\}$.

Channel matrices in the form $\Hm\Cm$ as in (\ref{2user-GMAC}) will appear several times in this paper as a consequence of
{\em channel  integer alignment}, explicitly designed such that $[\Cm \underline{\Tm}] \mod \Lambda$ has lattice codewords arranged by rows.\footnote{The modulo
$\Lambda$ reduction applied to matrices is intended row by row.}

The decoder's goal is to recover $L \leq M$ integer linear combinations of the $S$ lattice codewords, given by the rows $\underline{\sv}_\ell$
of the matrix $\underline{\Sm} = [\Bm^{\herm} \Cm \underline{\Tm}] \mod \Lambda$,
for some integer matrix $\Bm \in \ZZ[j]^{M \times L}$. Letting $\bv_\ell$ denote the $\ell$-th column of $\Bm$, the receiver computes
\begin{eqnarray}
\hat{\underline{\yv}}_\ell &=& \left[\alphav_\ell^{\herm} \underline{\Ym} - \bv_\ell^{\herm}\Cm\underline{\Dm}\right] \mod \Lambda\nonumber\\
&=&[\bv_\ell^{\herm}\Cm\underline{\Tm} + \alpha_\ell^{\herm}(\Hm\Cm\underline{\Xm} + \underline{\Zm})-\bv_\ell^{\herm}\Cm(\underline{\Tm}+\underline{\Dm})] \mod \Lambda\nonumber\\
&=& \left[\underline{\sv}_\ell + \underline{\zv}_{\mbox{\tiny{eff}}}(\Hm\Cm,\bv_\ell,\alphav_\ell) \right] \mod \Lambda\label{eq:cof}
\end{eqnarray}
where $\alphav_\ell \in \CC^{M\times 1}$ and  $\underline{\zv}_{\mbox{\tiny{eff}}}(\Hm\Cm,\bv_\ell,\alphav_\ell)$ is the $\ell$-th effective noise sequence,
distributed as: \footnote{This follows from the fact that $\underline{\Xm}$ has the same distribution of $\underline{\Dm}$.}
\begin{equation}
\underbrace{\left (\alphav_\ell^{\herm}\Hm - \bv_\ell^{\herm} \right )\Cm \underline{\tilde{\Dm}}}_{\mbox{non-integer penalty}}  \;\;\;\ +
\underbrace{\alphav_\ell^{\herm} \underline{\Zm}}_{\mbox{Gaussian noise}}
\end{equation} where each row of $\tilde{\Dm}$ is drawn independently according to a uniform distribution over $\Vc_{\Lambda}$.
Choosing $\alphav_\ell^{\herm}=\bv_\ell^{\herm}\Hm^{-1}$, the variance of the effective noise is given by
\begin{equation} \label{exact-IFR}
\sigma^{2}_{\mbox{\tiny{eff}},\ell} = \|(\Hm^{-1})^{\herm}\bv_\ell \|^2.
\end{equation}
This choice is referred to in \cite{Zhan} as the {\em exact} Integer Forcing Receiver (IFR).
In this way, the non-integer penalty of CoF is completely eliminated.
More in general, the decoding performance can be improved especially at low SNR by  minimizing the effective noise variance with respect to $\alphav_\ell$ for given
$\bv_\ell$ \cite{Zhan}. This yields
\begin{eqnarray}
\sigma^{2}_{\mbox{\tiny{eff}},\ell}
&=& \bv_\ell^{\herm}\Cm(\SNR^{-1}\Id+\Cm^{\herm}\Hm^{\herm}\Hm\Cm)^{-1}\Cm^{\herm}\bv_\ell.
\end{eqnarray}
Since $\bv_\ell$ and $\Cm$ are integer-valued, $\underline{\sv}_\ell = [\bv_\ell^{\herm}\Cm\underline{\Tm}] \mod \Lambda$ is a codeword of $\Lc$.
From \cite{Nazer}, we know that by applying lattice decoding to $\hat{\underline{\yv}}_\ell$ given in (\ref{eq:cof}) there exist sequences of lattice codes
$\Lc$ of rate $R$ and increasing block length $n$ such that $\underline{\sv}_\ell$ can be decoded successfully with arbitrarily high probability
as $n \rightarrow \infty$, provided that\footnote{We define $\log^{+}(x)\triangleq \max\{\log(x),0\}$.}
\begin{equation}
R <  \log^{+}\left (\frac{\SNR}{\sigma^2_{\mbox{\tiny{eff}},\ell}} \right ),  \label{eq:cofrate}
\end{equation}
where the expression in the right-hand side of (\ref{eq:cofrate}) is the {\em computation rate} for the modulo-$\Lambda$ additive noise
channel (\ref{eq:cof}) with given SNR and effective noise variance.
All the $L$ linear combinations can be reliably decoded if
\begin{equation}
R \leq \min_{\ell} \left \{ \log^+\left (\frac{\SNR}{\sigma^2_{\mbox{\tiny{eff}},\ell}} \right ) \right \}. \label{cofrate}
\end{equation}
The requirement that all inputs are encoded with the same lattice code with rate constrained by the worst computation rate over the desired
linear combinations (columns of $\Bm$) can be relaxed by allowing the inputs to be encoded at different rates, using a family of nested lattice codes,
and using {\em successive cancellation} with respect to CoF, according to the scheme proposed and analyzed in \cite{Ordentlich}.
The application of this idea to the networks treated in this paper is examined in Section \ref{sec:SIC}.

Using the linearity of lattice encoding,\footnote{{\em Lattice encoding linearity} refers to the
isomorphism between $\FF_q^r$ and $\Lc$ induced by the natural labeling $f$ defined before.}
the corresponding $L$ linear combinations over $\FF_{q}$ for the messages are given by
\begin{eqnarray}
\underline{\Um} &=& g^{-1}([\Bm^{\herm}] \mod p\ZZ[j])g^{-1}([\Cm] \mod p\ZZ[j])\underline{\Wm}\nonumber\\
&=& [\Bm^{\herm}]_{q}[\Cm]_{q}\underline{\Wm}, \label{cofeq}
\end{eqnarray}
where we use the notation $[\Bm^{\herm}]_{q}  \triangleq g^{-1}([\Bm^{\herm}] \mod p\ZZ[j])$ and $[\Cm]_{q}\triangleq g^{-1}([\Cm] \mod p\ZZ[j])$. Throughout the paper, we use the notation $[\Mm]_{q} \triangleq g^{-1}([\Mm] \mod p\ZZ[j])$ for any integer matrix $\Mm$.

\section{Network-Coded Cognitive Interference Channel}\label{sec:CIC-FF}

A two-user Gaussian Network-Coded CIC consists of a Gaussian interference channel where
transmitter 1 (the cognitive transmitter) knows both user 1 and user 2 information messages (or, equivalently, two
independent linear combinations thereof) and transmitter 2 (the non-cognitive transmitter) only knows only one linear combination of the messages.
Without loss of generality, we assume that transmitter 1 knows
($\underline{\wv}_{1},\underline{\wv}_{2})$, and transmitter 2 has
$\underline{\wv}_{1} \oplus \underline{\wv}_{2}$, where $\underline{\wv}_{k} \in \FF_{q}^{r}$ denotes
the information message desired at receiver $k$, at rate $R_k$ bit/symbol, for $k=1,2$.
We assume that if $R_{1}  \neq R_{2}$ then the lowest rate message is zero-padded
such that both messages have a common length,  given by $r = \max\{nR_{1},nR_{2}\}$, where $n$ denotes the
coding block length.  A block of $n$ channel uses of the discrete-time complex baseband two-user IC is described by
\begin{eqnarray}
\underline{\yv}_{1} &=& h_{11} \underline{\xv}_{1} + h_{12} \underline{\xv}_{2} + \underline{\zv}_{1}\label{eq:channel1}\\
\underline{\yv}_{2} &=& h_{21} \underline{\xv}_{1} + h_{22} \underline{\xv}_{2} + \underline{\zv}_{2}, \label{eq:channel2}
\end{eqnarray}
where $\underline{\zv}_{k} \in \CC^{n \times 1}$ contains i.i.d. Gaussian noise samples $\sim \Cc\Nc(0,1)$
and $h_{ij} \in \CC$ denotes the channel coefficients, assumed to be constant over the whole block of length $n$ and known to all nodes.
Also, we have a common per-user power constraint, given by $\frac{1}{n} \EE[\|\underline{\xv}_{k}\|^2]\leq \SNR$, for $k=1,2$.
Each receiver $k$ observes the channel output $\underline{\yv}_{k}$ and produces an estimate $\hat{\underline{\wv}}_{k}$
of the desired message $\underline{\wv}_{k}$.
A rate pair $(R_{1},R_{2})$ is achievable if there exists a family of codes satisfying the power constraint,
such that the average decoding error probability satisfies
$\lim_{n \rightarrow \infty}\PP(\hat{\underline{\wv}}_{k} \neq \underline{\wv}_{k}) = 0$, for both $k=1,2$.

\subsection{Capacity Region for finite-field Network-Coded CIC}

\begin{figure}[t]
\centerline{\includegraphics[width=14cm]{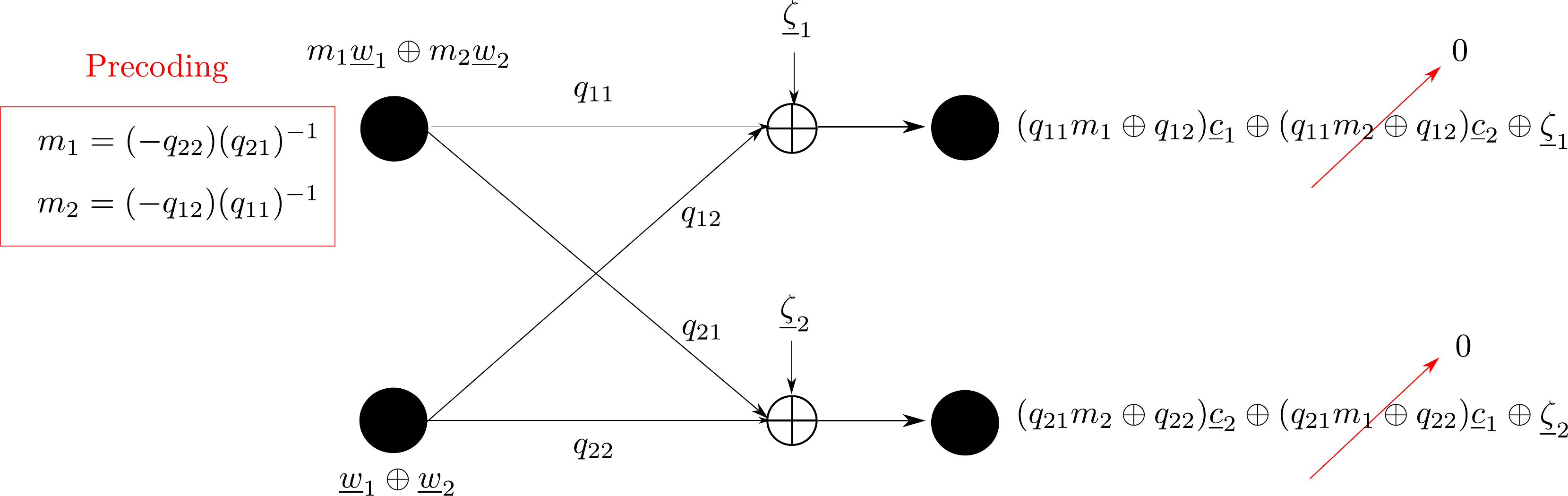}}
\caption{Distributed zero-forcing precoding for finite-field Network-Coded CIC.
Differently from RLNC, the cognitive transmitter carefully chooses the coefficients of linear
combination according to the channel coefficients $q_{ij}$'s.}
\label{FF-Precoding}
\end{figure}

In order to build an intuition for Gaussian channel, we consider the corresponding finite-field model and show that
{\em distributed zero-forcing precoding} achieves the capacity of finite-field Network-Coded CIC.
A block of $n$ channel uses of the discrete-time finite-field IC is described by
\begin{equation}
\left[
  \begin{array}{c}
    \underline{\yv}_{1} \\
    \underline{\yv}_{2} \\
  \end{array}
\right] =\Qm
 \left[
  \begin{array}{c}
    \underline{\xv}_{1} \\
    \underline{\xv}_{2} \\
  \end{array}
\right] \oplus \left[
  \begin{array}{c}
    \underline{\zetav}_{1} \\
    \underline{\zetav}_{2} \\
  \end{array}
\right]
\end{equation} where $\zetav_{k}=(\zeta_{k,1},\ldots,\zeta_{k,n}) \in \FF_{q}^{n}$ contains i.i.d. additive noise samples $\sim \prod_{\ell=1}^{n}P_{\zeta}(\zeta_{k,\ell})$, and $q_{ij} \in \FF_{q}$
is the $(i,j)$-th element of $\Qm$, denoting the channel coefficients from transmitter $j$ to receiver $i$,
assumed to be constant over the whole block of length $n$ and known to all nodes. Also, it is assumed that $H(\zeta_{k}) < \log{q}$ for $k=1,2$, in order to ensure a non-zero channel capacity for each receiver.

\begin{theorem}\label{thm:FF}
If $\det(\Qm) \neq 0$ and $q_{11}, q_{21} \neq 0$, the capacity region of the finite-field Network-Coded CIC is the set of all
rate pairs $(R_{1},R_{2})$ such that
\begin{eqnarray}
R_{k} &\leq& \log{q} - H(\zeta_{k})\mbox{ for }k=1,2.\label{eq:th1}
\end{eqnarray}
\end{theorem}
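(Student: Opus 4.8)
The plan is to establish the two inclusions separately. The converse is a one-shot single-user argument applied at each receiver, and achievability is obtained by \emph{distributed zero-forcing precoding} over $\FF_q$ built on a single shared linear code --- precisely the finite-field prototype of the PCoF scheme developed later.

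For the converse, I would fix any scheme with vanishing error probability and apply Fano's inequality at receiver $k$: $nR_k \le I(\underline{\wv}_k;\underline{\yv}_k) + n\epsilon_n$ with $\epsilon_n\to 0$. Since $\underline{\wv}_k$ is a component of $(\underline{\wv}_1,\underline{\wv}_2)$, the chain rule gives $I(\underline{\wv}_k;\underline{\yv}_k)\le I(\underline{\wv}_1,\underline{\wv}_2;\underline{\yv}_k)=H(\underline{\yv}_k)-H(\underline{\yv}_k\mid\underline{\wv}_1,\underline{\wv}_2)$. Conditioning also on the transmitted pair $(\underline{\xv}_1,\underline{\xv}_2)$ can only decrease the second term, and given the inputs the output is a deterministic shift of the i.i.d.\ noise $\underline{\zetav}_k$, so $H(\underline{\yv}_k\mid\underline{\wv}_1,\underline{\wv}_2)\ge H(\underline{\zetav}_k)=nH(\zeta_k)$; with $H(\underline{\yv}_k)\le n\log q$ this yields $R_k\le\log q-H(\zeta_k)$ after dividing by $n$ and letting $n\to\infty$. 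Note that this part uses neither $\det(\Qm)\neq 0$ nor $q_{11},q_{21}\neq 0$.

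For achievability, I would choose field constants $\alpha=-q_{21}^{-1}q_{22}$ and $\beta=-q_{11}^{-1}q_{12}$ (well defined because $q_{11},q_{21}\neq 0$; here $-$ is the additive inverse in $\FF_q$), fix a linear code over $\FF_q$ with generator matrix $\Gm$ of length $n$ and dimension $r=\max\{nR_1,nR_2\}$ (messages zero-padded to length $r$ as in the model), and let the cognitive transmitter send $\underline{\xv}_1=\alpha(\underline{\wv}_1\Gm)\oplus\beta(\underline{\wv}_2\Gm)$ while the non-cognitive transmitter sends $\underline{\xv}_2=(\underline{\wv}_1\oplus\underline{\wv}_2)\Gm=(\underline{\wv}_1\Gm)\oplus(\underline{\wv}_2\Gm)$ --- it is the \emph{linearity} of $\Gm$ that lets transmitter $2$ build this codeword from its mixed data alone. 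A one-line computation over $\FF_q$ then shows that at receiver $1$ the coefficient multiplying $\underline{\wv}_2\Gm$ is $q_{11}\beta\oplus q_{12}=0$, while the coefficient multiplying $\underline{\wv}_1\Gm$ is $c_1=q_{11}\alpha\oplus q_{12}=-q_{21}^{-1}\det(\Qm)\neq 0$; symmetrically, receiver $2$ sees the interference-free stream with coefficient $c_2=q_{11}^{-1}\det(\Qm)\neq 0$. This is exactly where the three hypotheses are consumed: $q_{11},q_{21}\neq 0$ make $\alpha,\beta$ and the rescalings legal, and $\det(\Qm)\neq 0$ makes the effective direct coefficients nonzero. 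After scaling by $c_k^{-1}$, receiver $k$ faces the point-to-point additive-noise channel $\underline{\wv}_k\Gm\oplus c_k^{-1}\underline{\zetav}_k$ over $\FF_q$, whose noise is i.i.d.\ with entropy $H(\zeta_k)$, so it decodes reliably whenever the rate of the (sub)code it uses is below $\log q-H(\zeta_k)$.

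The step I expect to be the main obstacle is forcing a \emph{single} code $\Gm$ to serve both receivers at (possibly) different rates. When $R_1\neq R_2$, the receiver with the larger message decodes the full rate-$\max\{R_1,R_2\}$ code, and the other decodes the rate-$\min\{R_1,R_2\}$ subcode obtained by zeroing the padded message coordinates. I would handle this by random linear coding: drawing the entries of $\Gm$ i.i.d.\ uniform on $\FF_q$, the full code achieves any rate strictly below $\log q-H(\zeta_k)$ on its additive channel, and the relevant subcode is generated by a subset of the rows of $\Gm$ and is hence itself a uniformly random linear code, so it too achieves any rate strictly below $\log q-H(\zeta_k)$ for the other receiver. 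A union bound over these two (not necessarily independent) decoding-error events shows a single good $\Gm$ exists for all large $n$, establishing achievability of every $(R_1,R_2)$ with $R_k<\log q-H(\zeta_k)$; the region with non-strict inequalities then follows since the capacity region is closed.
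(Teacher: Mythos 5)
Your proposal is correct and follows essentially the same route as the paper: the converse is the trivial full-knowledge/additive-noise bound $R_k \le \log q - H(\zeta_k)$, and achievability is the same distributed zero-forcing precoding with a shared linear code, your explicit choice $\alpha=-q_{21}^{-1}q_{22}$, $\beta=-q_{11}^{-1}q_{12}$ being exactly the paper's null-space solution $\mv\in\mbox{Null}(\Cm)$ normalized to $m_3=1$, with effective direct gains proportional to $\det(\Qm)$. The only cosmetic differences are that the paper cites Dobrushin for linear-code optimality on the decoupled additive-noise channels where you give a random linear coding argument (also handling the nested-rate issue), and its outer bound is stated via the full-cooperation broadcast channel rather than your per-receiver Fano computation.
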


\begin{IEEEproof} We first derive a simple upper bound by assuming full transmitter cooperation. In this case, this model reduces to the finite-field vector broadcast channel. A trivial upper-bound on the broadcast capacity region is given by \cite{Kim}:
\begin{equation}
R_{k} \leq \max_{P_{X_{1},X_{2}}}I(X_{1},X_{2};Y_{k}) \mbox{ for } k=1,2\label{eq:upper}
\end{equation}
Due to the additive noise nature of the channel, we have $I(X_{1},X_{2};Y_{k}) = H(Y_{k}) - H(\zeta_{k})$.
Furthermore, $H(Y_{k}) \leq \log{q}$ and this upper bound is achieved by letting $(X_{1},X_{2}) \sim \mbox{Uniform}$
over $\FF_{q}^{2}$. This bound coincides with (\ref{eq:th1}).

Next, we derive an achievable rate using distributed zero-forcing precoding technique. Without loss of generality, it is assumed that $H(\zeta_{1}) \leq H(\zeta_{2})$. We use two nested linear codes $\Cc_{2} \subseteq \Cc_{1}$ where $\Cc_{k}$ has rate $R_{k} =  \frac{r_{k}}{n}\log{q}$. Let $\underline{\wv}_{1}$ and $\underline{\wv}_{2}$ be the zero-padded information messages to common length $r_{1}$.  Also, let $\Gm$ denote a full-rank generator matrix of the linear code $\Cc_{1}$. The detailed procedures of distributed zero forcing technique is as follows (see Fig.~\ref{FF-Precoding}).
\begin{itemize}
\item Transmitter 1 produces the codewords $\underline{\cv}_{1} = \underline{\wv}_{1}\Gm$ and $\underline{\cv}_{2} = \underline{\wv}_{2}\Gm$, and transmits the precoded codeword $\underline{\xv}_{1} = m_{1}\underline{\cv}_{1} \oplus m_{2} \underline{\cv}_{2}$, for some coefficients $m_{1}, m_{2} \in \FF_{q}$;
\item Transmitter 2 produces the codeword $\underline{\cv}_{1} \oplus \underline{\cv}_{2} = (\underline{\wv}_{1}\oplus \underline{\wv}_{2})\Gm$ and transmits the precoded codeword $\underline{\xv}_{2} = m_{3} (\underline{\cv}_{1} \oplus \underline{\cv}_{2})$ with coefficient $m_{3} \in \FF_{q}$.
\item Receiver 1 observes:
\begin{eqnarray}
\underline{\yv}_{1} &=& q_{11}\underline{\xv}_{1} \oplus q_{12}\underline{\xv}_{2} \oplus \underline{\zetav}_{1}\\
&=& \lambda_{11}\underline{\cv}_{1} \oplus \lambda_{12}\underline{\cv}_{2} \oplus \underline{\zetav}_{1}
\end{eqnarray} where $\lambda_{11} = (q_{11}m_{1} \oplus q_{12}m_{3})$ and $\lambda_{12} = (q_{11}m_{2} \oplus q_{12}m_{3})$.
\item Receiver 2 observes:
\begin{eqnarray}
\underline{\yv}_{2} &=& q_{21}\underline{\xv}_{1} \oplus q_{22}\underline{\xv}_{2} \oplus \underline{\zetav}_{2}\\
&=& \lambda_{22}\underline{\cv}_{2} \oplus \lambda_{21}\underline{\cv}_{1} \oplus \underline{\zetav}_{2}
\end{eqnarray}where $\lambda_{22} = (q_{21}m_{2} \oplus q_{22}m_{3})$ and $\lambda_{21} = (q_{21}m_{1} \oplus q_{22}m_{3})$.
\end{itemize}
The goal is to find a precoding vector $\mv=(m_{1},m_{2},m_{3})^{\transp}$ to cancel interference at both receivers (i.e., such that
$\lambda_{12}=\lambda_{21}=0$), while preserving the desired codewords (i.e., such that $\lambda_{11},\lambda_{22} \neq 0$).
Equivalently, we want to find a non-zero vector $\mv$ to satisfy the following conditions:
\begin{itemize}
\item Condition 1 ({\em canceling the interferences})
\begin{equation}\label{eq:cond1}
\Cm\mv = \zerov
\end{equation} where
\begin{equation}
\Cm = \left[
  \begin{array}{ccc}
    0 & q_{11} & q_{12} \\
    q_{21} & 0 & q_{22} \\
  \end{array}
\right].\label{eq:C}
\end{equation}
\item Condition 2 ({\em preserving the desired signals})
\begin{equation}\label{eq:cond2}
\det(\Qm\Mm) = \lambda_{11} \lambda_{22} \neq 0
\end{equation} where  it is assumed that the condition (\ref{eq:cond1}) is satisfied (e.g., $\lambda_{12}=\lambda_{21}=0$) and
\begin{equation}
\Mm = \left[
     \begin{array}{cc}
       m_{1} & m_{2} \\
       m_{3} & m_{3}\\
     \end{array}
   \right].\label{eq:B}
\end{equation}
\end{itemize}
Since $\mbox{Rank}(\Cm) \leq 2$, there exist non-zero vectors $\mv^{*} \in \mbox{Null}(\Cm)$ that satisfies Condition 1.
Since $\Qm$ has full rank, Condition 2 is equivalent to requiring that $\Mm$ has rank 2, i.e.,
that $m_3(m_1 - m_2) \neq 0$. In short, we have to find the conditions for which a vector $\mv$ in the null-space of $\Cm$
satisfies $m_3 \neq 0$ and $m_1 \neq m_2$. Assuming $q_{11} \neq 0$ and $q_{21} \neq 0$, we have that $\Cm \mv = \zerov$ yields
\[ m_2 = - \frac{q_{12}}{q_{11}} m_3, \;\;\;\; m_1 = - \frac{q_{22}}{q_{21}} m_3. \]
Using this in the expression of $\Mm$, we find that $\det(\Mm) \neq 0$ if we choose $m_3 \neq 0$ and if
\[  - \frac{q_{12}}{q_{11}} + \frac{q_{22}}{q_{21}} = \frac{- q_{12}q_{21} + q_{11} q_{22}}{q_{11}q_{21}} = \frac{\det(\Qm)}{q_{11}q_{21}} \neq 0 \]
By assumption, the above condition is always true, therefore we conclude that a vector $\mv^*$ satisfying Conditions 1 and 2 can
always be found. In this case, the precoded channel decouples into two parallel additive noise channels
\begin{eqnarray}
\underline{\yv}_{1} &=& \lambda_{11}\underline{\cv}_{1} \oplus \underline{\zetav}_{1} \\
\underline{\yv}_{2} &=& \lambda_{22} \underline{\cv}_{2} \oplus \underline{\zetav}_{2},
\end{eqnarray}
for which rates $R_{k} \leq \log{q} - H(\zeta_{k})$ are clearly achievable by linear coding \cite{Dobrushin}.
\end{IEEEproof}

\begin{remark} The capacity region of finite-field Network-Coded CIC under the assumptions of Theorem \ref{thm:FF}
is equivalent to the capacity region of the corresponding finite-field vector broadcast channel.
In other words, partial network-coded cooperation and full cooperation yield the same performance. \hfill $\lozenge$
\end{remark}

\begin{remark} It is interesting to notice that if $q_{11} = 0$ and $\det(\Qm) \neq 0$, then $q_{21}, q_{12} \neq 0$. This
implies that $\mv$ in the null space of $\Cm$ takes on the form $(0, m_2, 0)^\transp$ for some $m_2 \neq 0$.
If  $q_{21} = 0$ and $\det(\Qm) \neq 0$, then $q_{11}, q_{22} \neq 0$. This
implies that $\mv$ in the null space of $\Cm$ takes on the form $(m_1, 0, 0)^\transp$ for some $m_1 \neq 0$.
In both cases, $\det(\Mm) = 0$ and interference cannot be removed without eliminating the useful signal at one of the
two receivers. \hfill $\lozenge$
\end{remark}

The observation in the above remark is strengthened by the following infeasibility result:

\begin{lemma}
 If the conditions of Theorem \ref{thm:FF} do not hold, the sum capacity is strictly less than the sum of the individual channel capacities of
the two additive noise channels from each transmitter to its intended receiver without interference (given by $\log q - H(\zeta_k)$, $k = 1,2$).
\end{lemma}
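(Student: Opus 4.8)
The plan is to establish the Lemma by a converse argument, distinguishing the three ways in which the hypotheses of Theorem~\ref{thm:FF} can fail: (i)~$q_{11}=0$, (ii)~$q_{21}=0$, and (iii)~$q_{11},q_{21}\neq 0$ but $\det(\Qm)=0$ (when $q_{11},q_{21}\neq 0$, this is the only remaining way to violate the hypotheses). In each case I would upper bound $C_{\mathrm{sum}}$ strictly below $(\log q-H(\zeta_1))+(\log q-H(\zeta_2))$. The standing facts I would use are: Fano's inequality, $nR_k\le I(\underline{\wv}_k;\underline{\yv}_k)+n\epsilon_n$; the single-link bound $I(\underline{\wv}_k;\underline{\yv}_k)\le H(\underline{\yv}_k)-H(\underline{\zetav}_k)\le n(\log q-H(\zeta_k))$; and the structural fact that $\underline{\xv}_2$ is a function of $\underline{\wv}_\oplus\defines\underline{\wv}_1\oplus\underline{\wv}_2$ (and of message-independent local randomness) alone, whereas $\underline{\xv}_1$ may depend on both messages.

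For cases (i) and (ii) I would argue as follows for (i) (case (ii) being symmetric under interchange of the receivers). Since $q_{11}=0$, $\underline{\yv}_1=q_{12}\underline{\xv}_2\oplus\underline{\zetav}_1$ depends on the messages only through $\underline{\wv}_\oplus$, so $\underline{\wv}_1\to\underline{\wv}_\oplus\to\underline{\yv}_1$ is a Markov chain and $I(\underline{\wv}_1;\underline{\yv}_1)\le I(\underline{\wv}_1;\underline{\wv}_\oplus)$. After zero-padding to a common length, $\underline{\wv}_\oplus$ is uniform over its alphabet, so $H(\underline{\wv}_\oplus)=\max\{H(\underline{\wv}_1),H(\underline{\wv}_2)\}$, while $H(\underline{\wv}_\oplus\mid\underline{\wv}_1)=H(\underline{\wv}_2)$ by independence; hence $nR_1\le(nR_1-nR_2)^{+}+n\epsilon_n$, which forces $\min\{R_1,R_2\}\to 0$. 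Combined with the single-link bounds this gives $C_{\mathrm{sum}}\le\max\{\log q-H(\zeta_1),\log q-H(\zeta_2)\}$, which is strictly below the target because $\min\{\log q-H(\zeta_1),\log q-H(\zeta_2)\}>0$ under the standing assumption $H(\zeta_k)<\log q$.

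For case (iii), $\det(\Qm)=0$ together with $q_{11},q_{21}\neq 0$ means the second row of $\Qm$ equals $\mu\defines q_{21}/q_{11}\neq 0$ times the first, so both receivers observe noisy copies of the \emph{same} $\FF_q$-valued signal $\underline{\sv}\defines q_{11}\underline{\xv}_1\oplus q_{12}\underline{\xv}_2$: $\underline{\yv}_1=\underline{\sv}\oplus\underline{\zetav}_1$, $\underline{\yv}_2=\mu\underline{\sv}\oplus\underline{\zetav}_2$, and in particular $\underline{\yv}_2\ominus\mu\underline{\yv}_1=\underline{\zetav}_2\ominus\mu\underline{\zetav}_1$. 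I would hand receiver~1 the genie $(\underline{\yv}_2,\underline{\wv}_2)$; since it still decodes $\underline{\wv}_1$, Fano and $\underline{\wv}_1\!\perp\!\underline{\wv}_2$ give $nR_1\le I(\underline{\wv}_1;\underline{\yv}_1,\underline{\yv}_2\mid\underline{\wv}_2)+n\epsilon_n$, while $nR_2\le I(\underline{\wv}_2;\underline{\yv}_1,\underline{\yv}_2)+n\epsilon_n$, so the chain rule yields $n(R_1+R_2)\le I(\underline{\wv}_1,\underline{\wv}_2;\underline{\yv}_1,\underline{\yv}_2)+2n\epsilon_n\le H(\underline{\yv}_1,\underline{\yv}_2)-H(\underline{\zetav}_1)-H(\underline{\zetav}_2)+2n\epsilon_n$. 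Then $H(\underline{\yv}_1,\underline{\yv}_2)=H(\underline{\yv}_1)+H(\underline{\yv}_2\mid\underline{\yv}_1)\le n\log q+H(\underline{\zetav}_2\ominus\mu\underline{\zetav}_1)$, giving $C_{\mathrm{sum}}\le\log q+H(\zeta_2\ominus\mu\zeta_1)-H(\zeta_1)-H(\zeta_2)$, which lies strictly below $2\log q-H(\zeta_1)-H(\zeta_2)$ exactly when $H(\zeta_2\ominus\mu\zeta_1)<\log q$.

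The hard part is this last strict inequality. It holds unconditionally whenever $H(\zeta_1)+H(\zeta_2)<\log q$ (since $H(\zeta_2\ominus\mu\zeta_1)\le H(\zeta_1)+H(\zeta_2)$); in general one must rule out the degenerate noise pmfs that make the convolution $\zeta_2\ominus\mu\zeta_1$ uniform on $\FF_q$, which by the Fourier/uncertainty characterization over $\FF_q$ occurs iff for every nontrivial additive character $\chi$ at least one of $\widehat{P_{\zeta_1}}(\chi),\widehat{P_{\zeta_2}}(\mu\chi)$ vanishes (for instance with the noises uniform over suitable $\FF_p$-subspaces of $\FF_q=\FF_{p^2}$) --- a non-generic situation. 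So case (iii) needs a mild non-degeneracy of the noise, which I would state explicitly; cases (i) and (ii), which are what the preceding remark is really concerned with, need no such caveat.
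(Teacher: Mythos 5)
Your cases (i) and (ii) are correct and land on exactly the paper's bound: the paper invokes its converse inequalities $\min\{R_1,R_2\}\le\min\{I(X_1;Y_1|X_2),I(X_1;Y_2|X_2)\}$ and $\max\{R_1,R_2\}\le\log q-\min\{H(\zeta_1),H(\zeta_2)\}$ and notes that the relevant conditional mutual information vanishes when $q_{11}=0$ or $q_{21}=0$, whereas you reach the same conclusion directly through the data-processing chain $\underline{\wv}_k\rightarrow\underline{\wv}_1\oplus\underline{\wv}_2\rightarrow\underline{\yv}_k$ (using that $\underline{\xv}_2$ depends on the messages only through $\underline{\wv}_1\oplus\underline{\wv}_2$); these are essentially the same argument in different clothing. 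Where you genuinely differ is case (iii): the paper's proof treats only $q_{11}=0$ and $q_{21}=0$ and is silent about $\det(\Qm)=0$ with $q_{11},q_{21}\neq 0$, although that is also a violation of the hypotheses of Theorem~\ref{thm:FF}. Your genie-aided bound $R_1+R_2\le\log q+H(\zeta_2\ominus\mu\zeta_1)-H(\zeta_1)-H(\zeta_2)$ there is sound, and your caveat is not excessive caution: when the two rows of $\Qm$ are proportional, both receivers see the same scalar $\underline{\sv}=q_{11}\underline{\xv}_1\oplus q_{12}\underline{\xv}_2$ through their own additive noise, and if, say, $q=p^2$ with $\zeta_1$ uniform on an $\FF_p$-line $L_1$ and $\zeta_2$ uniform on a line $L_2$ with $\mu L_1\oplus L_2=\FF_q$, the cognitive transmitter (which knows both messages and can predict $\underline{\xv}_2$) can encode $\underline{\wv}_1$ and $\underline{\wv}_2$ into the complementary quotients $\FF_q/L_1$ and $\FF_q/\mu^{-1}L_2$, and each receiver decodes its message error-free at rate $\log q-H(\zeta_k)=\log p$; thus the sum of the two interference-free capacities is exactly achieved and the strict inequality fails. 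So the lemma as literally stated does need the non-degeneracy hypothesis you propose (equivalently, it should be read — as the paper's own proof implicitly does and as the preceding Remark suggests — as addressing only the $q_{11}=0$ or $q_{21}=0$ violations); on this point your treatment is more complete and more careful than the paper's, not deficient relative to it.
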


\begin{IEEEproof}
We will show that if the conditions of Theorem \ref{thm:FF} are not satisfied, then it is not possible to achieve the
sum rate of the two individual point to point (perfectly decoupled) channels, i.e., the sum rate is strictly lower than
$2\log{q} - (H(\zeta_{1})+H(\zeta_{2}))$.
We employ the upper bounds derived in Appendix \ref{proof:GDoF} such as
\begin{eqnarray}
\min\{R_{1},R_{2}\}&\leq&\min\{I(X_{1};Y_{1}|X_{2}), I(X_{1};Y_{2}|X_{2})\}\label{eq1}\\
\max\{R_{1},R_{2}\}&\leq& \max\{I(X_{1},X_{2};Y_{1}),I(X_{1},X_{2};Y_{2})\}\\
&=&\log{q}- \min\{H(\zeta_{1}),H(\zeta_{2})\}\label{eq2}.
\end{eqnarray} Notice that the sum rate is equal to $\min\{R_{1},R_{2}\} + \max\{R_{1},R_{2}\}$. When $q_{11} = 0$, the receiver 1 observes the $Y_{1} = q_{12}X_{2} \oplus \zeta_{1}$. Then, we have that $\min\{R_{1},R_{2}\}=0$ since $I(X_{1};Y_{1}|X_{2}) = H(Y_{1}|X_{2}) - H(Y_{1}|X_{1},X_{2}) = 0$. Using (\ref{eq1}) and (\ref{eq2}), we have that $R_{1} + R_{2} \leq \log{q} - \min\{H(\zeta_{1}),H(\zeta_{2})\}$. Similarly when $q_{21}= 0$, i.e, $Y_{2}=q_{22}X_{2} \oplus \zeta_{2}$, the  $\min\{R_{1},R_{2}\}=0$ is also zero because of $I(X_{1};Y_{2}|X_{2}) = 0$. Thus, we have that $R_{1} + R_{2} \leq \log{q} - \min\{H(\zeta_{1}),H(\zeta_{2})\}$. In both cases, the sum rates are strictly less than
$2\log{q} - (H(\zeta_{1})+H(\zeta_{2}))$.
\end{IEEEproof}

\subsection{Scaled Precoded CoF}\label{subsec:PCoF-scalar}

Motivated by the above result, we present a novel scheme named {\em Precoded Compute-and-Forward} (PCoF) for the
Gaussian Network-Coded CIC. Using CoF decoding, each receiver can reliably decode an integer linear combination of the
lattice codewords sent by transmitters.
Then, the ``interference" in the finite-field
domain can be completely eliminated by distributed zero-forcing precoding,
provided that the conditions of Theorem \ref{thm:FF} are satisfied. Using this scheme, we have:
\begin{theorem}\label{th:LC}
Scaled PCoF applied to Gaussian Network-Coded CIC with $\Hm=[h_{ij}] \in \CC^{2 \times 2}$ achieves the rate pairs $(R_{1},R_{2})$ such that
\[ R_k \leq \log^+ \left (\frac{\SNR}{\bv_{k}^{\herm}(\SNR^{-1}\Id+\tilde{\hv}_{k}\tilde{\hv}_{k}^{\herm})^{-1}\bv_{k}} \right ), \]
for any full rank integer matrix $\Bm=[\bv_{1},\bv_{2}]$ with $b_{11},b_{21} \neq 0  \mod p\ZZ[j]$
and $\beta_{k} \in \CC$ with $|\beta_{k}|\leq 1$, where $\tilde{\hv}_{k} = [\beta_{1}h_{k 1}, \beta_{2} h_{k 2}]$. \hfill \IEEEQED
\end{theorem}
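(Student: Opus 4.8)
The plan is to combine the Compute-and-Forward machinery of Section~\ref{subsec:CoF} with the distributed zero-forcing precoding argument of Theorem~\ref{thm:FF}, after first rescaling the channel to absorb the transmitter power-shaping coefficients $\beta_k$. First I would introduce the scaling: let transmitter~$1$ send $\underline{\xv}_1 = \beta_1 \underline{\xv}_1'$ and transmitter~$2$ send $\underline{\xv}_2 = \beta_2\underline{\xv}_2'$, where $\underline{\xv}_k'$ are the lattice-coded (dithered) signals carrying the precoded finite-field symbols, so that the power constraint $\frac{1}{n}\EE[\|\underline{\xv}_k\|^2]\le\SNR$ is respected because $|\beta_k|\le 1$ and the underlying lattice code has second moment $\SNR$. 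Absorbing $\beta_j$ into the channel, receiver~$k$ sees the effective channel coefficients $\tilde h_{k1}=\beta_1 h_{k1}$, $\tilde h_{k2}=\beta_2 h_{k2}$, i.e. $\underline{\yv}_k = \tilde{\hv}_k[\underline{\xv}_1';\underline{\xv}_2'] + \underline{\zv}_k$. This puts us exactly in the CoF reference model (\ref{2user-GMAC}) with $\Cm=\Id$ and a (generally non-square, rank-one-per-receiver) effective channel, so the decoding computation in (\ref{eq:cof})--(\ref{cofrate}) applies verbatim.

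Next I would run CoF decoding at each receiver to recover one integer linear combination of the two lattice codewords. Receiver~$k$ picks the integer vector $\bv_k=[b_{1k},b_{2k}]^\herm$ and, optimizing over $\alphav_k$, decodes the combination $\underline{\sv}_k=[\bv_k^\herm\underline{\Tm}]\bmod\Lambda$ reliably provided $R\le\log^+(\SNR/\sigma^2_{\mathrm{eff},k})$ with $\sigma^2_{\mathrm{eff},k}=\bv_k^\herm(\SNR^{-1}\Id+\tilde{\hv}_k^\herm\tilde{\hv}_k)^{-1}\bv_k$ — here I would cite directly the MMSE effective-noise formula derived in Section~\ref{subsec:CoF}, with the substitution $\Hm\Cm\mapsto\tilde{\hv}_k$ (a $1\times 2$ row, so $\tilde{\hv}_k^\herm\tilde{\hv}_k$ is the rank-one $2\times 2$ matrix appearing in the claim). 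By lattice linearity (\ref{cofeq}), $\underline{\sv}_k$ corresponds to the finite-field combination with coefficient vector $[\bv_k^\herm]_q$ applied to the transmitted finite-field symbols; stacking the two receivers gives the finite-field channel matrix $\Qm=[\bv_1,\bv_2]^\herm \bmod p\ZZ[j]$ acting on the precoded symbol vector, which is exactly the deterministic finite-field IC of Theorem~\ref{thm:FF}.

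It remains to check that the hypotheses of Theorem~\ref{thm:FF} hold for this induced $\Qm$, so that distributed zero-forcing precoding at the two transmitters can null the cross terms while preserving the two desired streams. The condition $\det(\Qm)\ne 0$ is the requirement that $\Bm=[\bv_1,\bv_2]$ be full rank modulo $p\ZZ[j]$, which is hypothesized; the conditions $q_{11},q_{21}\ne 0$ translate into $b_{11},b_{21}\ne 0 \bmod p\ZZ[j]$, also hypothesized (these are the ``$q_{11},q_{21}\neq 0$'' entries of $\Qm=\Bm^\herm$, i.e. the coefficients of the cognitive transmitter's symbol seen by each receiver must be nonzero for the precoder to reach both receivers). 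Given these, Theorem~\ref{thm:FF} supplies a nonzero precoding vector $\mv^*$ decoupling the finite-field channel into two clean additive-noise streams, over which messages $\underline{\wv}_k$ at rates $R_k\le R$ are communicated; since each lattice codeword is used with common rate $R$, choosing $R$ per (\ref{cofrate}) with the two $\sigma^2_{\mathrm{eff},k}$ above, and noting the decoupled finite-field streams are noiseless up to the already-accounted CoF effective noise, yields the claimed achievable pair. The main obstacle I anticipate is the bookkeeping that the finite-field precoding does not disturb the Gaussian-domain decodability: one must verify that applying the finite-field precoder $\Mm$ at the transmitters before lattice mapping is consistent with the CoF computation — concretely, that $[\bv_k^\herm]_q[\Mm]_q$ is the ``effective'' finite-field coefficient seen at receiver~$k$ and that the interference-nulling there does not require any change to the integer vectors $\bv_k$ used in the physical-layer decoding. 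This is exactly the kind of two-layer (lattice / finite-field) consistency check that PCoF is designed to make work, so I expect it to go through, but it is the step that needs care rather than a routine invocation.
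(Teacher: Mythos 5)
Your scheme is, in structure, exactly the paper's: scale the two inputs by $\beta_{1},\beta_{2}$ (power is fine since $|\beta_k|\leq 1$ and $\sigma^2_\Lambda=\SNR$), apply the CoF receiver mapping with integer vector $\bv_k$ at receiver $k$, and remove the finite-field interference by the distributed zero-forcing precoder of Theorem~\ref{thm:FF} applied to $\Qm=[\Bm^{\herm}]_{q}$, whose hypotheses are supplied by the full-rank and $b_{11},b_{21}\neq 0 \mod p\ZZ[j]$ assumptions. The two-layer consistency check you flag at the end is precisely the short lemma the paper proves, namely that $\Qm\Mm=\diag(\lambda_{11},\lambda_{22})$ over $\FF_q$ implies $[\Bm^{\herm}g(\Mm)] \mod p\ZZ[j]=\diag(g(\lambda_{11}),g(\lambda_{22}))$, so the same $\bv_k$ used for physical-layer decoding also sees the zero-forced finite-field coefficients; this goes through as you anticipate and is not a gap (nor is writing $\tilde{\hv}_k^{\herm}\tilde{\hv}_k$ versus $\tilde{\hv}_k\tilde{\hv}_k^{\herm}$, which is only a row/column convention for the same rank-one $2\times 2$ matrix).

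The genuine gap is in your rate bookkeeping at the last step. You encode both messages with a single lattice code of common rate $R$ and choose $R$ ``per (\ref{cofrate})'', i.e. $R\leq\min_{k}\log^{+}\bigl(\SNR/\sigma^2_{\mathrm{eff},k}\bigr)$, with both messages at rates $R_k\leq R$. That achieves only the symmetric point cut off at the \emph{smaller} of the two computation rates, whereas the theorem asserts that each $R_k$ can be pushed up to its \emph{own} receiver's computation rate; after precoding, receiver $k$ only has to decode a unit multiple of its own codeword, so there is no reason to impose the worse receiver's constraint on the better one. To get the stated region you need what the paper uses and you omitted: a family of nested lattice codes $\Lambda\subseteq\Lambda_2\subseteq\Lambda_1$ built from nested linear codes over $\FF_q$ (the lower-rate message zero-padded), so that the precoded transmit signals $m_1\underline{\tv}_1\oplus m_2\underline{\tv}_2$ and $m_3(\underline{\tv}_1\oplus\underline{\tv}_2)$ still lie in the densest code $\Lc_1$ — preserving the $\ZZ[j]$-module structure your consistency step relies on — while receiver $k$ decodes within its own code $\Lc_k$ of rate $R_k$ and is therefore limited only by its own effective-noise variance. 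With that modification your argument coincides with the paper's proof.
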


In order to achieve different coding rates while preserving the lattice $\ZZ[j]$-module structure, we use a {\em family} of nested lattices $\Lambda \subseteq \Lambda_{2} \subseteq \Lambda_{1}$, where $\Lambda_k = p^{-1} g(\Cc_k) \Tm + \Lambda$ with $\Lambda = \ZZ^n[j] \Tm$
and where $\Cc_k$ denotes the linear code over $\FF_{q}$ generated by the first $r_k$ rows of a generator matrix $\Gm$,
with $r_2 \leq r_1$.  The corresponding nested lattice codes are given by $\Lc_{k} = \Lambda_{k} \cap \Vc_\Lambda$, and have rate
$R_k = \frac{r_k}{n} \log q$.
We let $\Bm =[\bv_{1}, \bv_{2}] \in \ZZ[j]^{2 \times 2}$, where $\bv_{k}$ denotes the integer coefficients vector used at receiver $k$ for the modulo-$\Lambda$ receiver mapping (see (\ref{eq:cof})), and we let $\Qm =[\Bm^{\herm}]_{q} \in \FF_{q}^{2\times 2}$.
For the time being, it is assumed that
$\det(\Qm), q_{11},q_{21} \neq 0$ over $\FF_q$.  PCoF proceeds as follows:
\begin{itemize}
\item Transmitters 1 and 2 produce the precoded messages:
\begin{eqnarray}
\underline{\uv}_{1} &=& m_{1}\underline{\wv}_{1} \oplus m_{2}\underline{\wv}_{2}\\
\underline{\uv}_{2} &=& m_{3}(\underline{\wv}_{1}\oplus \underline{\wv}_{2}),
\end{eqnarray}
respectively, where $\mv=(m_{1},m_{2},m_{3})$ is a non-zero vector $\mv \in \mbox{Null}(\Cm)$ where $\Cm$ is related to $\Qm$ as defined in (\ref{eq:C}).
\item Each transmitter $k$ produces the lattice codeword $\underline{\vv}_{k} = f(\underline{\uv}_{k}) \in \Lc_{1}$ (the densest lattice code) and transmits
the channel inputs $\underline{\xv}_{k} = [\underline{\vv}_{k} + \underline{\dv}_{k}] \mod \Lambda$, where $\dv_{k}$ are
dithering sequences.
\end{itemize}
By lattice linearity we have:
\begin{eqnarray}
\underline{\vv}_{1} &=& [g(m_{1})\underline{\tv}_{1} + g(m_{2})\underline{\tv}_{2}] \mod \Lambda\\
\underline{\vv}_{2} &=& [g(m_{3})\underline{\tv}_{1} + g(m_{3})\underline{\tv}_{2}] \mod \Lambda
\end{eqnarray} where $\underline{\tv}_{k} = f(\underline{\wv}_{k})$. As in the proof of Theorem \ref{thm:FF}, we choose the precoding vector
$\mv=(m_{1},m_{2},m_{3})$ to satisfy Condition 2, such that
\begin{equation}\label{eq:cond5}
\Qm\Mm = \diag(\lambda_{11},\lambda_{22}) \mbox{ for some } \lambda_{11},\lambda_{22} \neq 0
\end{equation}
where $\Mm$ is related to $\mv$ as defined in (\ref{eq:B}).

Each receiver $k$ applied the CoF receiver mapping (\ref{eq:cof}) with integer coefficients vector $\bv_{k}$ and (scalar) scaling factor  $\alpha_{k}$, yielding
\begin{eqnarray}
\underline{\hat{\yv}}_{k} &=& \left[\bv_{k}^{\herm}\left[
                                                     \begin{array}{c}
                                                       \underline{\vv}_{1} \\
                                                       \underline{\vv}_{2} \\
                                                     \end{array}
                                                   \right] + \underline{\zv}_{\mbox{\tiny{eff}}}(\hv_{k},\bv_{k},\alpha_{k})
\right] \mod \Lambda\\\label{eq:map}
&=& \left[\bv_{k}^{\herm}g(\Mm)
\left[
                                                     \begin{array}{c}
                                                       \underline{\tv}_{1} \\
                                                       \underline{\tv}_{2} \\
                                                     \end{array}
                                                   \right] + \underline{\zv}_{\mbox{\tiny{eff}}}(\hv_{k},\bv_{k},\alpha_{k})
\right] \mod \Lambda\\
&\stackrel{(a)}{=}&\left[([\bv_{k}^{\herm}g(\Mm)] \mod p\ZZ[j])
\left[
                                                     \begin{array}{c}
                                                       \underline{\tv}_{1} \\
                                                       \underline{\tv}_{2} \\
                                                     \end{array}
                                                   \right] + \underline{\zv}_{\mbox{\tiny{eff}}}(\hv_{k},\bv_{k},\alpha_{k})
\right] \mod \Lambda\\
&\stackrel{(b)}{=}&\left[g(\lambda_{kk})\underline{\tv}_{k} + \underline{\zv}_{\mbox{\tiny{eff}}}(\hv_{k},\bv_{k},\alpha_{k})
\right] \mod \Lambda
\end{eqnarray}
where $\hv_k = [h_{k1}, h_{k2}]$,
where (a) follows from the fact that $[p \underline{\tv}] \mod \Lambda = \underline{\zerov}$ for any codeword $\underline{\tv} \in \Lc_{k}$,
and where (b) is due to the following result:

\begin{lemma} Let $\Qm =[\Bm^{\herm}]_{q}$. If $\Qm\Mm=\diag(\lambda_{11},\lambda_{22})$ over $\FF_q$, then
\begin{eqnarray}
[\Bm^{\herm} g(\Mm)] \mod p\ZZ[j] = \diag(g(\lambda_{11}),g(\lambda_{22})).
\end{eqnarray}
\end{lemma}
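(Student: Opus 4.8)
The plan is to exploit the fact that the labeling map $g$ is not merely a set bijection but a \emph{ring isomorphism} from $\FF_q$ onto $\ZZ[j]/p\ZZ[j]$ (this is the algebraic property underlying Construction A in \cite{Nazer}; it uses $q=p^2$ with $p$ inert in $\ZZ[j]$, so that $\ZZ[j]/p\ZZ[j]$ is a field with $q$ elements and $g$ can be taken to be that isomorphism). Concretely, for all $a,b\in\FF_q$ one has $g(a\oplus b)\equiv g(a)+g(b)$ and $g(a\otimes b)\equiv g(a)\,g(b)$ modulo $p\ZZ[j]$. First I would lift these scalar identities to matrices: for any $\Am,\Bm$ over $\FF_q$ of compatible sizes, $g(\Am\Bm)\equiv g(\Am)\,g(\Bm)\pmod{p\ZZ[j]}$ entrywise, where the left-hand product is computed over $\FF_q$ and the right-hand product is the ordinary matrix product over $\ZZ[j]$; this is immediate, since each entry of a matrix product is a finite sum of products and $g$ respects both operations modulo $p\ZZ[j]$.

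Next I would unwind the definitions. By construction, $\Qm=[\Bm^{\herm}]_q=g^{-1}([\Bm^{\herm}]\bmod p\ZZ[j])$, which is exactly the statement that $g(\Qm)\equiv\Bm^{\herm}\pmod{p\ZZ[j]}$. Moreover, the entries of $g(\Mm)$ already lie in $\{a+jb:a,b\in\ZZ_p\}$, so reducing $g(\Mm)$ modulo $p\ZZ[j]$ returns $g(\Mm)$ itself. Using that reduction modulo $p\ZZ[j]$ is a ring homomorphism on $\ZZ[j]^{2\times2}$, we then get
\[
[\Bm^{\herm} g(\Mm)]\bmod p\ZZ[j]=\big([\Bm^{\herm}]\bmod p\ZZ[j]\big)\big([g(\Mm)]\bmod p\ZZ[j]\big)=g(\Qm)\,g(\Mm).
\]
By the matrix identity from the previous step, $g(\Qm)\,g(\Mm)\equiv g(\Qm\Mm)\pmod{p\ZZ[j]}$ with $\Qm\Mm$ computed over $\FF_q$. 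Finally, the hypothesis $\Qm\Mm=\diag(\lambda_{11},\lambda_{22})$ together with $g(0)=0$ gives $g(\Qm\Mm)=\diag(g(\lambda_{11}),g(\lambda_{22}))$, which is the claim.

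The only real subtlety — and the step I would be most careful about — is the assertion that $g$ respects \emph{both} ring operations modulo $p\ZZ[j]$, not just addition; this is what forces $q=p^2$ with $p\equiv 3\bmod 4$ (so $\ZZ[j]/p\ZZ[j]$ is genuinely a field isomorphic to $\FF_q$), and once it is in place the rest is bookkeeping. I would also remark that the Hermitian (conjugate) transpose causes no difficulty here: complex conjugation descends to the Frobenius automorphism $x\mapsto x^p$ of $\FF_q$ under the isomorphism, but since $\Qm$ is \emph{defined} as $g^{-1}([\Bm^{\herm}]\bmod p\ZZ[j])$ this is already absorbed into $\Qm$, and the argument above only ever uses the relation $g(\Qm)\equiv\Bm^{\herm}\pmod{p\ZZ[j]}$.
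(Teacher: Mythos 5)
Your proposal is correct and follows essentially the same route as the paper's own proof: both use $[\Bm^{\herm}]\bmod p\ZZ[j]=g(\Qm)$, the multiplicativity/additivity of $g$ modulo $p\ZZ[j]$ to pass from $g(\Qm)g(\Mm)$ to $g(\Qm\Mm)$, and then the hypothesis $\Qm\Mm=\diag(\lambda_{11},\lambda_{22})$. You simply make explicit the ring-isomorphism justification (and the inert-prime condition and the Hermitian-transpose remark) that the paper leaves implicit; the only nitpick is that your middle display should keep the outer $\bmod\, p\ZZ[j]$ on $g(\Qm)g(\Mm)$, since that product need not lie in the fundamental domain, but your subsequent congruence step already handles this.
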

\begin{proof} Using $[\Bm^{\herm}] \mod p\ZZ[j] = g(\Qm)$, we have:
\begin{eqnarray}
[\Bm^{\herm} g(\Mm)] \mod p\ZZ[j] &=& [([\Bm^{\herm}] \mod p\ZZ[j])g(\Mm)] \mod p\ZZ[j]\\
&=& [g(\Qm)g(\Mm)] \mod p\ZZ[j]\\
&=& \left[g\left(\Qm\Mm\right)\right] \mod p\ZZ[j]\\
&=& \left[g\left(\diag(\lambda_{11},\lambda_{22})\right)\right] \mod p\ZZ[j]\\
&=& \diag(g(\lambda_{11}),g(\lambda_{22}))
\end{eqnarray}
\end{proof}
From the results summarized in Section \ref{subsec:CoF}, we know that lattice decoding applied to the observation
$\hat{\underline{\yv}}_{k}$ at each receiver $k$ can reliably decode the desired message if
\begin{equation}
 R_{k} \leq  \log^+ \left ( \frac{\SNR}{\bv_{k}^{\herm}(\SNR^{-1}\Id+\hv_{k} \hv_{k}^{\herm})^{-1}\bv_{k}} \right ),\label{eq:rateconst_temp}
\end{equation}
 where the above rate-expression is obtained from (\ref{eq:cofrate}) with $\Cm=\Id$.
This rate can be improved if each transmitter $k$ scales its signal by some factor $\beta_{k} \in \Pc$,
where $\Pc = \{\beta \in \CC: |\beta| \leq 1\}$ denotes the unit disk in $\CC$, since it can create more favorable channel coefficients for the integer conversion at each receiver \cite{Song-IT}.
This choice of $\beta_{k}$ guarantees that the power constraint is satisfied at each transmitter.
The effective channel matrix induced by this scaling is given by
\begin{equation}
\tilde{\Hm}(\beta_{1},\beta_{2})= \left[
                            \begin{array}{cc}
                              \beta_{1}h_{11} & \beta_{2}h_{12} \\
                              \beta_{1}h_{21} & \beta_{2}h_{22} \\
                            \end{array}
                          \right].
\end{equation} Using (\ref{eq:rateconst_temp}) and the effective channel matrix, each receiver $k$ can reliably decode the desired message if
\begin{equation}
 R_{k} \leq  \log^+ \left ( \frac{\SNR}{\bv_{k}^{\herm}(\SNR^{-1}\Id+\tilde{\hv}_{k} \tilde{\hv}_{k}^{\herm})^{-1}\bv_{k}} \right ),
\end{equation} where $\tilde{\hv}_{k} = [\beta_{1}h_{k 1}, \beta_{2} h_{k 2}]$. This completes the proof.

\subsection{An achievable rate region for the Gaussian Network-Coded CIC}\label{sec:CIC-GA}

It was shown in Section \ref{sec:CIC-FF} that distributed zero-forcing precoding is optimal for finite-field Network-Coded CIC.
In the Gaussian case, however, the channel coefficients are not integers and hence Scaled PCoF may not be optimal due to the non-integer penalty.
Using the fact that transmitter 1 has non-causal information of message 2, we can
completely eliminate the interference of signal from transmitter 2 at receiver 1 by using DPC \cite{Costa}.  Also, we can remove the non-integer penalty
at the receiver 2 by using Scaled PCoF with a careful choice of the scaling factor of transmitter 2.
In other words, while Scaled PCoF cannot simultaneously remove the non-integer penalty
at both receivers, it can {\em completely} eliminate the non-integer penalty at receiver 2 (see Remark~\ref{remark:non-integer}), while interference at receiver 1 is handled by DPC precoding. Using this scheme, we have:

\begin{theorem}\label{thm:DPC} If $\det(\Hm) \neq 0$ and $h_{11},h_{21}\neq 0$, Scaled PCoF and DPC applied to Gaussian Network-Coded CIC  achieves the rate
pairs $(R_{1},R_{2})$ such that
\begin{eqnarray}
R_{1} &\leq& \log(1+|h_{11}|^2\SNR)\label{eq:rate1}\\
R_{2} &\leq&  \log^+\left (\frac{\SNR}{\sigma^2_{\mbox{\tiny{eff}}}(\beta)} \right ),
\end{eqnarray}
for any $\bv \in \ZZ[j]^{2}$ with $b_{1},b_{2} \neq 0 \mod  p\ZZ[j]$ and any $\beta \in \CC$ with $|\beta|=1$,
where
\begin{equation} \sigma^2_{\mbox{\tiny{eff}}}(\beta) = \left | b_1 \frac{\beta \tilde{h}_{22}}{h_{21}} - b_2 \right |^2 \SNR + \left | \frac{b_1}{h_{21}} \right |^2. \label{eq:noise-term}\end{equation}
\hfill \IEEEQED
\end{theorem}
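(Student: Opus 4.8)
The plan is to graft Costa's dirty-paper coding at the cognitive transmitter onto the Scaled PCoF construction of Theorem~\ref{th:LC}, using the $\ZZ[j]$-module structure of the coding lattice so that the two ingredients decouple. Concretely: transmitter~2 computes the $\FF_q$-scaled message $m_3(\underline{\wv}_1\oplus\underline{\wv}_2)$ with $m_3 = -[b_1]_q([b_2]_q)^{-1}$ (well defined since $b_1,b_2\not\equiv 0 \bmod p\ZZ[j]$), maps it to $\underline{\vv}_2 = f(m_3(\underline{\wv}_1\oplus\underline{\wv}_2))\in\Lc_1$, and transmits $\underline{\xv}_2 = \beta\,[\underline{\vv}_2+\underline{\dv}_2]\bmod\Lambda$ with $|\beta|=1$, so the per-user power constraint holds with equality. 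Transmitter~1 encodes $\underline{\wv}_1$ into $\underline{\tv}_1 = f(\underline{\wv}_1)$ and, since it knows $\underline{\wv}_1\oplus\underline{\wv}_2$, the dithers and the channel, it knows $\underline{\xv}_2$ exactly; it then uses $\underline{\tv}_1$ as the auxiliary codeword of a (lattice) DPC encoder treating the self-interference $h_{12}\underline{\xv}_2$ at receiver~1 as noncausally known state, transmitting $\underline{\xv}_1 = [\underline{\tv}_1+\underline{\dv}_1-\alpha_1 h_{12}\underline{\xv}_2]\bmod\Lambda$. By the standard DPC result (Costa; Erez--Shamai--Zamir for the lattice implementation), after receiver~1 applies its matched scaling and a modulo-$\Lambda$ reduction the channel collapses to the interference-free point-to-point modulo-$\Lambda$ additive-noise channel of received power $|h_{11}|^2\SNR$, which (using $h_{11}\neq 0$) yields $R_1 \leq \log(1+|h_{11}|^2\SNR)$.

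The technical core is receiver~2, which observes $\underline{\yv}_2 = h_{21}\underline{\xv}_1+h_{22}\underline{\xv}_2+\underline{\zv}_2$. Substituting $\underline{\xv}_1 = \underline{\tv}_1+\underline{\dv}_1-\alpha_1 h_{12}\underline{\xv}_2-\underline{\qv}_1$ with $\underline{\qv}_1\in\Lambda$ shows that transmitter~2's codeword reaches receiver~2 through the effective coefficient $\tilde{h}_{22}$ (a bookkeeping quantity equal to $h_{22}$ up to the $\alpha_1 h_{12}h_{21}$ DPC correction and the unit-modulus scaling). Receiver~2 then scales by $\alpha = b_1/h_{21}$ (valid since $h_{21}\neq 0$), so the coefficient of $\underline{\tv}_1$ becomes the \emph{exact} Gaussian integer $b_1$, and reduces modulo $\Lambda$ after subtracting the known dither contributions. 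The $\ZZ[j]$-module property of $\Lambda$ is used twice: $b_1\underline{\qv}_1\in\Lambda$ vanishes, and writing $\underline{\xv}_2 = \beta([\underline{\vv}_2+\underline{\dv}_2]\bmod\Lambda)$ and splitting its coefficient as $\tfrac{b_1\beta\tilde{h}_{22}}{h_{21}} = b_2 + \bigl(\tfrac{b_1\beta\tilde{h}_{22}}{h_{21}}-b_2\bigr)$ leaves an integer part $b_2$ (whose lattice-quantization term again vanishes modulo $\Lambda$) plus a residual proportional to $\underline{\xv}_2$. What remains is the clean modulo-$\Lambda$ additive-noise channel $[\underline{\sv}+\underline{\zv}_{\mathrm{eff}}]\bmod\Lambda$ with $\underline{\sv} = [b_1\underline{\tv}_1+b_2\underline{\vv}_2]\bmod\Lambda$ and $\sigma^2_{\mathrm{eff}}(\beta) = \bigl|\tfrac{b_1\beta\tilde{h}_{22}}{h_{21}}-b_2\bigr|^2\SNR + \bigl|\tfrac{b_1}{h_{21}}\bigr|^2$, the first term being the non-integer penalty on transmitter~2's codeword (power $\SNR$) and the second the scaled Gaussian noise --- exactly (\ref{eq:noise-term}).

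To finish, by the computation-rate bound recalled in Section~\ref{subsec:CoF}, $\underline{\sv}$ is decodable whenever its rate is below $\log^{+}(\SNR/\sigma^2_{\mathrm{eff}}(\beta))$; by lattice linearity $\underline{\sv}$ corresponds to the $\FF_q$-combination $[b_1]_q\underline{\wv}_1\oplus[b_2]_q m_3(\underline{\wv}_1\oplus\underline{\wv}_2)$, whose $\underline{\wv}_1$-component cancels by the choice of $m_3$, so $\underline{\sv}$ is an invertible $\FF_q$-multiple of the codeword of $\underline{\wv}_2$ and, by the zero-padding convention, lies in a nested lattice code of rate $R_2$. Hence receiver~2 recovers $\underline{\wv}_2$ as long as $R_2\leq\log^{+}(\SNR/\sigma^2_{\mathrm{eff}}(\beta))$, and $\beta$ on the unit circle is a free parameter to minimize $\sigma^2_{\mathrm{eff}}(\beta)$; this is the claimed region.

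I expect the main obstacle to be precisely the compatibility of DPC at transmitter~1 with CoF/integer-forcing at receiver~2: one must verify that the quantization point $\underline{\qv}_1$ from transmitter~1's modulo reduction, together with the state-dependent term $\alpha_1 h_{12}\underline{\xv}_2$, only reshapes receiver~2's effective coefficient $\tilde{h}_{22}$ and, after multiplication by the Gaussian integer $b_1$, is completely absorbed by the final modulo-$\Lambda$ reduction --- leaving exactly the two listed noise contributions and nothing else (in particular no DPC self-noise leaking into receiver~2). The careful dither bookkeeping and the precise definition of $\tilde{h}_{22}$, including the value of $\alpha_1$, are where the real work sits; the hypotheses $\det(\Hm)\neq 0$ and $h_{11},h_{21}\neq 0$ are exactly what keep the DPC at receiver~1 and the scalings $\alpha_1$ and $\alpha = b_1/h_{21}$ nondegenerate.
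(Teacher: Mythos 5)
Your proposal is correct and follows essentially the same route as the paper: lattice DPC at the cognitive transmitter to clear receiver~1 (giving $R_1 \leq \log(1+|h_{11}|^2\SNR)$), and the CoF receiver mapping at receiver~2 with $\alpha_2 = b_1/h_{21}$ so that $b_1$ multiplies transmitter~1's signal exactly, leaving precisely the residual term $(b_1\beta\tilde h_{22}/h_{21}-b_2)$ on transmitter~2's codeword plus scaled Gaussian noise, i.e., $\sigma^2_{\mbox{\tiny{eff}}}(\beta)$ in (\ref{eq:noise-term}). The only differences are cosmetic bookkeeping: you place the finite-field zero-forcing scalar at transmitter~2 (via $m_3=-[b_1]_q[b_2]_q^{-1}$) whereas the paper scales $\underline{\wv}_1$ at transmitter~1 by $m=(q_1)^{-1}(-q_2)$, and you fold the $1/h_{11}$ normalization into the DPC coefficient $\alpha_1$ rather than into the pre-subtraction, both of which lead to the same effective coefficient $\tilde h_{22}$ and the same rate region.
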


\begin{remark}\label{remark:non-integer} Differently from using only Scaled PCoF in Section~\ref{subsec:PCoF-scalar}, the proposed scheme in this section can completely eliminate the non-integer penalty term in (\ref{eq:noise-term}), by choosing $\beta = \frac{h_{21}}{\tilde{h}_{22}\gamma} $, $b_{1} = \gamma$, and $b_{2} = 1$, where $\gamma = \left\lceil\left|\frac{h_{21}}{\tilde{h}_{22}}\right|\right\rceil$. These choices provide an almost optimal performance at high SNRs. However, they may not give an optimal performance in the moderate SNRs, since the variance of additive noise term also increases especially with a large $b_{1}$. Therefore, we find an optimal allocation parameter $\beta$ to maximize an achievable sum-rate and it will be used to plot the performances of the proposed scheme.

\hfill $\lozenge$
\end{remark}

\begin{figure*}
\centerline{\includegraphics[width=16cm]{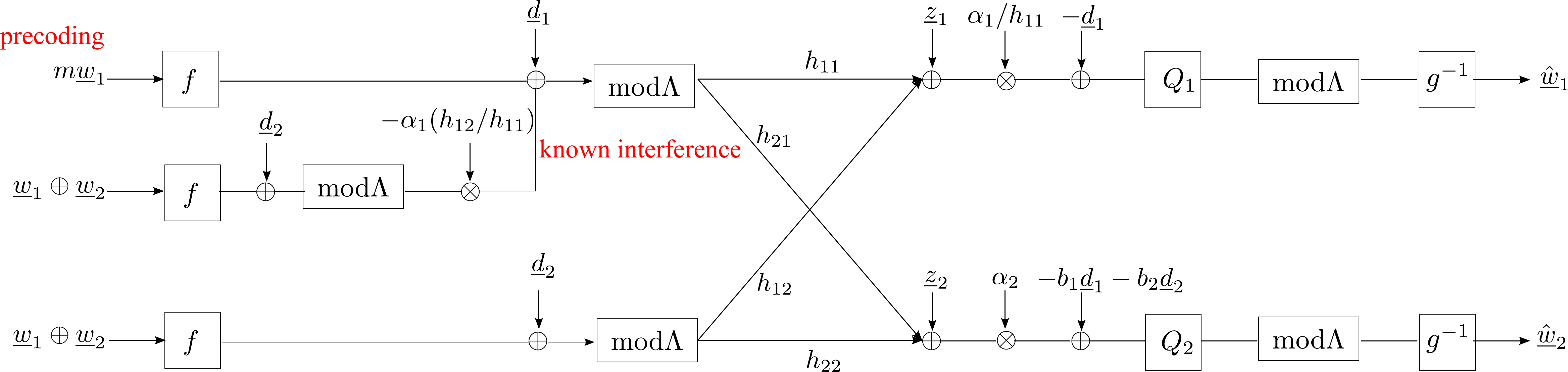}}
\caption{Encoding and decoding structures of the proposed achievability scheme. Transmitter 1 uses the DPC to cancel the interference at its intended receiver 1,
and also performs precoding over finite-field to eliminate the interference at receiver 2.}
\label{decoding}
\end{figure*}

We let $\bv=[b_{1},b_{2}] \in \ZZ[j]^{2}$ denote the integer coefficients vector used at receiver 2 for the CoF
receiver mapping (\ref{eq:cof}), and we let $q_{k} = [b_{k}]_{q}$.
Again, it is assumed that  $q_{1},q_{2} \neq 0$ over $\FF_q$.
The proposed achievability scheme proceeds as follows (see Fig.~\ref{decoding}):
\begin{itemize}
\item Transmitter 2 produces the lattice codeword $\underline{\vv}_{2}= f(\underline{\wv}_{1} \oplus \underline{\wv}_{2})$ and produces the channel input with power scaling factor $\beta \in \CC$ with $|\beta|=1$:
\begin{equation}
 \underline{\xv}_{2}= \beta \underline{\xv}_{2}'
\end{equation} where $\underline{\xv}_{2}' =  [\underline{\vv}_{2} + \underline{\dv}_{2}] \mod \Lambda$.
\item Transmitter 1 produces the precoded message $m \underline{\wv}_{1}$ where $m \in \FF_{q}$ is given by
\begin{equation}\label{eq:con}
q_{1}m \oplus q_{2} = 0 \Rightarrow m = (q_{1})^{-1}(-q_{2}),
\end{equation}
where $(q_{1})^{-1}$ denotes the multiplicative inverse of $q_{1}$ and $(-q_{2})$ denotes the additive inverse of $q_{2}$. Then,
it uses DPC for the known interference signal $h_{12}\underline{\xv}_{2}$ and forms:
\begin{equation}
\underline{\xv}_{1} = [\underline{\vv}_{1} - \alpha_{1}(h_{12}/h_{11})\underline{\xv}_{2} + \underline{\dv}_{1}] \mod \Lambda,
\end{equation} for some $\alpha_{1} \in \Cc$, where $\underline{\vv}_{1} = f(m\underline{\wv}_{1})$.
\end{itemize}
By lattice linearity we have:
\begin{eqnarray}
\underline{\vv}_{1} &=& [g(m)\underline{\tv}_{1}] \mod \Lambda\\
\underline{\vv}_{2} &=& [\underline{\tv}_{1} + \underline{\tv}_{2}] \mod \Lambda
\end{eqnarray} where $\underline{\tv}_{1} = f(\underline{\wv}_{1})$ and $\underline{\tv}_{2} = f(\underline{\wv}_{2})$.
Receivers 1 and 2 observe the $\underline{\yv}_{1}$ and $\underline{\yv}_{2}$ given in (\ref{eq:channel1}) and (\ref{eq:channel2}), respectively.
Receiver 1 performs the inflated modulo-lattice mapping as
$\hat{\underline{\yv}}_{1} = [\alpha_{1}\underline{\yv}_{1}/h_{11} - \underline{\dv}_{1}] \mod \Lambda$.
This results in the mod-$\Lambda$ additive noise channel given by:
\begin{eqnarray}
\hat{\underline{\yv}}_{1} &=& [(\alpha_{1}/h_{11}) [h_{11}\underline{\xv}_{1} + h_{12} \underline{\xv}_{2} + \underline{\zv}_{1}] -\underline{\dv}_{1}]\mod \Lambda \\
&=&[\underline{\vv}_{1} - \underline{\vv}_{1} + \alpha_{1}\underline{\xv}_{1} + \alpha_{1}(h_{12}/h_{11}) \underline{\xv}_{2} + (\alpha_{1}/h_{11})\underline{\zv}_{1} - \underline{\dv}_{1}] \mod \Lambda\\
&\stackrel{(a)}{=}& [\underline{\vv}_{1} -(1-\alpha_{1})\underline{\xv}_{1} + (\alpha_{1}/h_{11})\underline{\zv}_{1} ] \mod \Lambda,
\end{eqnarray} where (a) is due to the fact that $\underline{\xv}_{1} = [\underline{\vv}_{1} - \alpha_{1}(h_{12}/h_{11})\underline{\xv}_{2} + \underline{\dv}_{1}] \mod \Lambda$. Hence, the resulting channel from $\underline{\vv}_{1}$ to $\hat{\underline{\yv}}_{1}$ is equivalent in distribution to the point-to-point additive modulo-$\Lambda$ channel
\begin{equation*}
\hat{\underline{\yv}}_{1} =  [\underline{\vv}_{1} -(1-\alpha_{1})\underline{\uv}_{1} + (\alpha_{1}/h_{11})\underline{\zv}_{1} ] \mod \Lambda,
\end{equation*} where $\underline{\uv}_{1}$  is a random variable uniformly distributed on $\Vc_{\Lambda}$ and is statistically independent of $\underline{\zv}_{1}$ and $\underline{\vv}_{1}$ by the independence and uniformity of dithering and by the Crypto Lemma.
From standard DPC results \cite{Zamir}, choosing
\begin{equation}
\alpha_{1}=\alpha_{1,\mbox{\tiny{MMSE}}} \eqdef \frac{\SNR |h_{11}|^2}{1 + \SNR |h_{11}|^2},\label{def:alpha1}
\end{equation}
the coding rate $R_1$ is achievable if
\begin{equation}
R_{1} \leq \log(1+|h_{11}|^2\SNR).
\end{equation}
Letting $\tilde{\hv}(\beta)=[h_{21},\beta\tilde{h}_{22}]$ with $\tilde{h}_{22}=h_{22} - \alpha_{1,\mbox{\tiny{MMSE}}}h_{12}h_{21}/h_{11}$,
receiver 2 applies the CoF receiver mapping (\ref{eq:cof}) with integer coefficients $\bv$ and scaling factor $\alpha_{2} = b_{1}/h_{21}$, yielding
\begin{eqnarray*}
\hat{\underline{\yv}}_{2} &=& [\alpha_{2}\underline{\yv}_{2} - b_{1}\underline{\dv}_{1} - b_{2}\underline{\dv}_{2}] \mod \Lambda \\
&=&[b_{1}\underline{\vv}_{1} + b_{2} \underline{\vv}_{2} +\alpha_{2}(h_{21}\underline{\xv}_{1}+h_{22}\underline{\xv}_{2} + \underline{\zv}_{2})- b_{1}[\underline{\vv}_{1}+\underline{\dv}_{1}]-b_{2}[\underline{\vv}_{2}+\underline{\dv}_{2}]] \mod \Lambda\\
&=&[b_{1}\underline{\vv}_{1}+b_{2}\underline{\vv}_{2}+\alpha_{2}h_{21}[\underline{\vv}_{1}-\alpha_{1,\mbox{\tiny{MMSE}}}(h_{12}/h_{11})\underline{\xv}_{2}+\underline{\dv}_{1}+\underline{\lambdav}] + \alpha_{2}h_{22}\underline{\xv}_{2}+\alpha_{2}\underline{\zv}_{2} \\
&&- b_{1}[\underline{\vv}_{1}+\underline{\dv}_{1}]-b_{2}\underline{\xv}_{2}'] \mod  \Lambda \\
&=&[b_{1}\underline{\vv}_{1} + b_{2} \underline{\vv}_{2} + (\alpha_{2}h_{21}-b_{1})[\underline{\vv}_{1}+\underline{\dv}_{1}] + (\alpha_{2}\beta\tilde{h}_{22} - b_{2})\underline{\xv}_{2}' +\alpha_{2}h_{21}\underline{\lambdav} + \alpha_{2}\underline{\zv}_{2}] \mod \Lambda\\
&\stackrel{(a)}{=}& \left[\bv^{\transp}\left[
                                                 \begin{array}{c}
                                                   \underline{\vv}_{1} \\
                                                   \underline{\vv}_{2} \\
                                                 \end{array}
                                               \right]+(b_{1}\beta\tilde{h}_{22}/h_{21}-b_{2})\underline{\xv}_{2}'+(b_{1}/h_{21})\underline{\zv}_{2}\right] \mod \Lambda,
 \end{eqnarray*} where $\underline{\lambdav}=Q_{\Lambda}(\underline{\vv}_{1}-\alpha_{1,\mbox{\tiny{MMSE}}}\beta(h_{12}/h_{11})\underline{\xv}_{2}+\underline{\dv}_{1})$ and $(a)$ is due to the fact that $\alpha_{2}h_{21}\underline{\lambdav}= b_{1}\underline{\lambdav} \in \Lambda$. As explained above, the resulting channel is equivalent in distribution to the following modulo-$\Lambda$ channel
\begin{eqnarray*}
\hat{\underline{\yv}}_{2}
&=& \left[\bv^{\transp}\left[
                                                 \begin{array}{c}
                                                   \underline{\vv}_{1} \\
                                                   \underline{\vv}_{2} \\
                                                 \end{array}
                                               \right]+(b_{1}\beta\tilde{h}_{22}/h_{21}-b_{2})\underline{\uv}_{2}+(b_{1}/h_{21})\underline{\zv}_{2}\right] \mod \Lambda\\
&=&\Big[\left(\bv^{\transp}\left[
                        \begin{array}{cc}
                          g(m) & 0 \\
                          1 & 1 \\
                        \end{array}
                      \right] \mod p\ZZ[j]\right)
\left[
                                                 \begin{array}{c}
                                                   \underline{\tv}_{1} \\
                                                   \underline{\tv}_{2} \\
                                                 \end{array}
                                               \right]+ \underline{\zv}_{\mbox{\tiny{eff}}}(\tilde{\hv}(\beta),\bv)
\Big]  \mod \Lambda\\
&\stackrel{(a)}{=}& [([b_{2}] \mod p\ZZ[j])\underline{\tv}_{2} +\underline{\zv}_{\mbox{\tiny{eff}}}(\tilde{\hv}(\beta),\bv)] \mod \Lambda
\end{eqnarray*} where $\underline{\uv}_{2}$ is uniformly distributed on $\Vc_{\Lambda}$ and is independent of $\underline{\vv}_{1}$, $\underline{\vv}_{2}$, and $\underline{\zv}_{2}$ by the independence and uniformity of dithering and by the Crypto Lemma, and $(a)$ follows from the fact that
$m$ is chosen to satisfy (\ref{eq:con}), i.e., $b_{1}g(m) + b_{2} \mod p\ZZ[j] = 0$. Furthermore, we define
\begin{equation}
 \underline{\zv}_{\mbox{\tiny{eff}}}(\tilde{\hv}(\beta),\bv) = (b_{1}\beta\tilde{h}_{22}/h_{21} - b_{2})\underline{\uv}_{2} + (b_{1}/h_{21})\underline{\zv}_{2}.\label{eq:enoise}
\end{equation}
Receiver 2 decodes $\underline{\tv}_2$ by applying lattice decoding to $\hat{\underline{\yv}}_{2}$ if
\begin{equation}
R_{2} \leq  \log^+\left (\frac{\SNR}{\sigma^2_{\mbox{\tiny{eff}}}(\beta)} \right ),\label{eq:R2_proof}
\end{equation} where
\begin{equation} \sigma^2_{\mbox{\tiny{eff}}}(\beta) = \left | b_1 \frac{\beta \tilde{h}_{22}}{h_{21}} - b_2 \right |^2 \SNR + \left | \frac{b_1}{h_{21}} \right |^2.
\end{equation} This completes the proof of Theorem~\ref{thm:DPC}.

\begin{example} We evaluate the performance of proposed schemes with respect to their average achievable sum rate, where
averaging is with respect to the channel realizations with i.i.d. coefficients $h_{ij} \sim \Cc\Nc(0,1)$.
Also, we considered the performance of full-cooperation (i.e., vector broadcast channel with sum-power constraint (see for example \cite{Yu} for an efficient algorithm to compute the vector broadcast channel sum-capacity).
In Fig.~\ref{simulation1}, Scaled PCoF shows the satisfactory performance in the moderate SNRs (i.e., $\SNR < 20$ dB).
Yet, this scheme suffers from the non-integer penalty at high SNRs.
Remarkably, Scaled PCoF with DPC (and optimization with respect to the scaling factor $\beta$ in Theorem \ref{thm:DPC}) performs within a constant gap
with respect to full-cooperation at any SNR.  \hfill $\lozenge$
\end{example}

\begin{figure}
\centerline{\includegraphics[width=14cm]{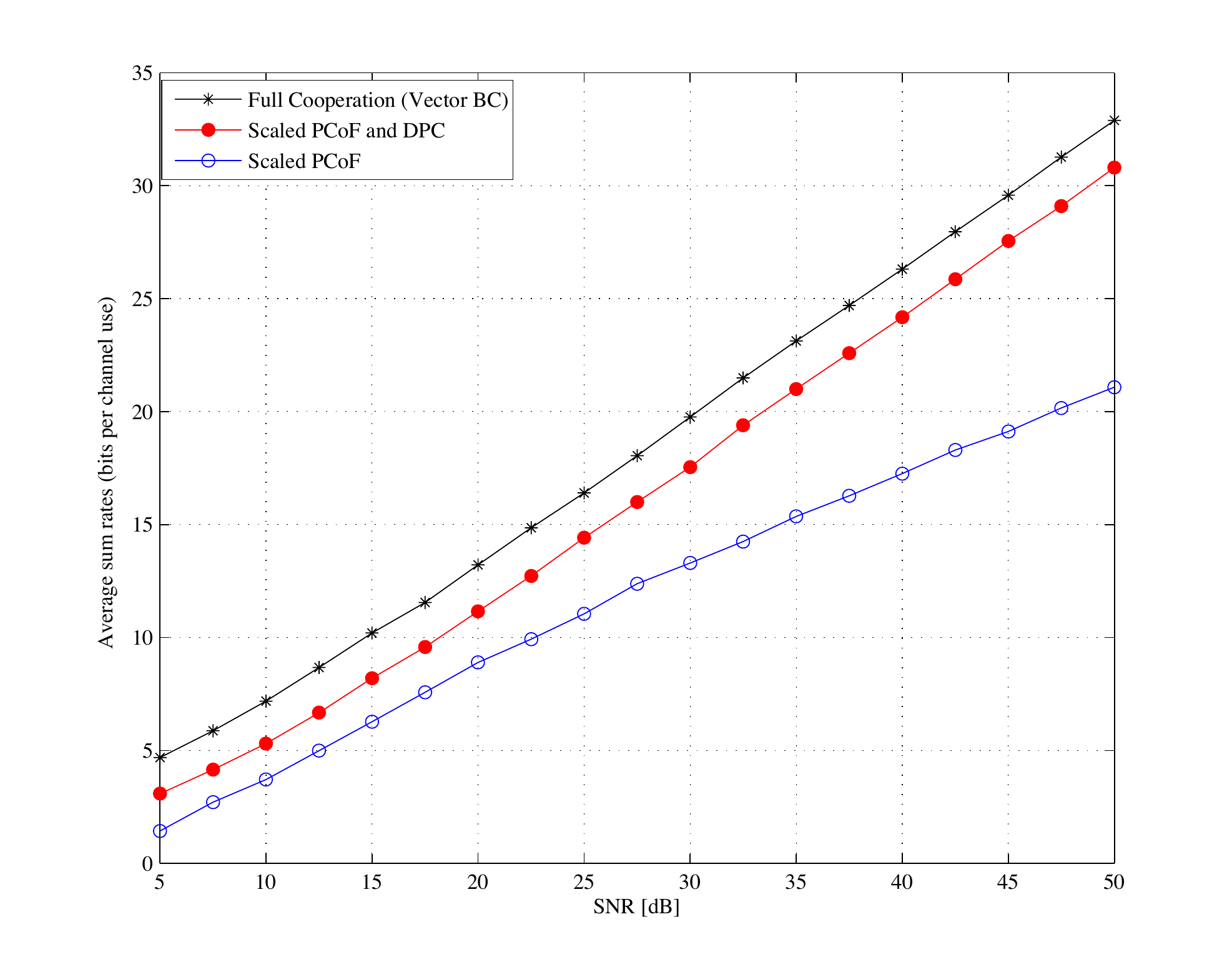}}
\caption{Average sum rate for Gaussian Network-Coded CIC with i.i.d. channel coefficients $\sim \Cc\Nc(0,1)$.}
\label{simulation1}
\end{figure}

\subsection{Generalized Degrees of Freedom}\label{subsec:GDoF}

In the high SNR regime, a useful proxy for the performance of wireless networks
is provided by the Generalized Degrees-of-Freedom (GDoFs), which characterize the capacity pre-log factor
in different relative scaling regimes of the channel coefficients, as SNR grows to infinity \cite{Etkin}.
In this section we study the symmetric GDoFs.  In particular, we consider the following channel model:
\begin{eqnarray}
\underline{\yv}_{1} &=& h_{11}\sqrt{\SNR} \underline{\xv}_{1} + h_{12}\sqrt{\INR} \underline{\xv}_{2} + \underline{\zv}_{1}\\
\underline{\yv}_{2} &=& h_{21}\sqrt{\INR} \underline{\xv}_{1} + h_{22}\sqrt{\SNR} \underline{\xv}_{2} + \underline{\zv}_{2}
\end{eqnarray}where $h_{ij} \in \CC$ are bounded non-zero constants independent of $\SNR, \INR$,
$\underline{\zv}_{k}$ is the i.i.d. Gaussian noise $\sim \Cc\Nc(0,1)$, and $\frac{1}{n}\EE[\|\underline{\xv}_{k}\|^2] \leq 1$ for $k=1,2$.
The channel is parameterized by $\SNR$ and $\INR$, both growing to infinity such that
$\INR = \SNR^{\rho}$ as $\SNR \rightarrow \infty$, where $\rho \geq 0$ defines the relative strength
of the direct and interference paths.


Letting $\Cc(\SNR,\rho)$ denote the capacity region of network-coded CIC for given $\SNR$ and $\rho$, the symmetric GDoF region (denoted by $\Dc(\rho)$ is defined by
\begin{equation}
\Dc(\rho) = \left\{(d_{1}(\rho),d_{2}(\rho): d_{i}(\rho) = \lim_{\SNR \rightarrow \infty}\frac{R_{i}}{\log{\SNR}} \mbox{ such that } (R_{1},R_{2}) \in \Cc(\SNR,\rho) \right\}.
\end{equation}The main result of this section is given by:
\begin{theorem}\label{thm:GDoFregion}
For the Gaussian Network-Coded CIC, the symmetric GDoF region ($\Dc(\rho)$) is the set of the DoF tuples $(d_{1}(\rho),d_{2}(\rho))$ satisfying the following constraints:
\begin{eqnarray}
d_{1}(\rho) &\leq& \max\{1,\rho\}\\
d_{2}(\rho) &\leq& \max\{1,\rho\}\\
d_{\mbox{\tiny{sum}}}(\rho)=d_{1}(\rho)+d_{2}(\rho) &\leq& 1+\rho.\label{sum-GDoF}
\end{eqnarray}
\end{theorem}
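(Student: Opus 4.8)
The plan is to prove matching inner and outer bounds, and I expect the two places where real work is needed to be the sum outer bound (whose whole content is a genie argument exploiting the XOR side‑information) and the achievability of the ``skew'' GDoF vertices.

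\emph{Converse.} The individual constraints $d_k(\rho)\le\max\{1,\rho\}$ are the easy part: dropping the independence of the channel inputs gives $R_k\le \tfrac1n I(\underline{\xv}_1^n,\underline{\xv}_2^n;\underline{\yv}_k^n)+\epsilon_n$, and since $\underline{\yv}_k$ is a scalar complex Gaussian whose useful‑signal power is at most $(|h_{k1}|\sqrt{\SNR}+|h_{k2}|\sqrt{\INR})^2$ under the per‑user power constraint, this is $\max\{1,\rho\}\log\SNR+O(1)$. For the new constraint $d_1(\rho)+d_2(\rho)\le 1+\rho$ I would use the specific side‑information structure. Since the non‑cognitive transmitter's input $\underline{\xv}_2$ is a deterministic function of $S:=\underline{\wv}_1\oplus\underline{\wv}_2$, handing $S$ to receiver $k$ as a genie lets it cancel $\underline{\xv}_2$; moreover, a good decoder at receiver $k$ recovers $\underline{\wv}_k$, and the pair $(\underline{\wv}_k,S)$ determines $\underline{\wv}_{k'}$ (even with the zero‑padding convention, because $\underline{\wv}_{k'}=S\oplus(-\underline{\wv}_k)$), so the genie‑aided receiver $k$ decodes \emph{both} messages. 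Fano then gives $n(R_1+R_2)\le I(\underline{\wv}_1,\underline{\wv}_2;\underline{\yv}_k^n,S)+n\epsilon_n=H(S)+I(\underline{\wv}_1,\underline{\wv}_2;\underline{\yv}_k^n\mid S)+n\epsilon_n$, and since $\underline{\xv}_2^n$ is a function of $S$ one bounds $I(\underline{\wv}_1,\underline{\wv}_2;\underline{\yv}_k^n\mid S)\le I(\underline{\xv}_1^n;\underline{\yv}_k^n\mid\underline{\xv}_2^n)$; using $H(S)\le n\max\{R_1,R_2\}$ this yields $\min\{R_1,R_2\}\le \tfrac1n I(\underline{\xv}_1^n;\underline{\yv}_k^n\mid\underline{\xv}_2^n)+\epsilon_n$ for both $k=1$ and $k=2$. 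Gaussian max‑entropy plus the power constraint give $I(\underline{\xv}_1^n;\underline{\yv}_1^n\mid\underline{\xv}_2^n)\le n\log(1+|h_{11}|^2\SNR)$ and $I(\underline{\xv}_1^n;\underline{\yv}_2^n\mid\underline{\xv}_2^n)\le n\log(1+|h_{21}|^2\INR)$, hence $\min\{d_1,d_2\}\le\min\{1,\rho\}$; combined with $\max\{d_1,d_2\}\le\max\{1,\rho\}$ from the individual bound, $d_1+d_2=\min+\max\le1+\rho$. All the entropy inequalities here are the direct $n$‑letter ones, so no single‑letterization is needed.

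\emph{Achievability.} For each $\rho$ the region $\{d_1,d_2\le\max\{1,\rho\},\ d_1+d_2\le1+\rho\}$ is a polygon with vertices $(0,0)$, the two axis points $(\max\{1,\rho\},0)$ and $(0,\max\{1,\rho\})$, and the two skew points $(\max\{1,\rho\},\min\{1,\rho\})$ and $(\min\{1,\rho\},\max\{1,\rho\})$, so by time‑sharing it suffices to achieve these five tuples. The axis points are immediate: if $\underline{\wv}_{k'}=0$ then $S=\underline{\wv}_k$, so \emph{both} transmitters know $\underline{\wv}_k$ and can beamform it to receiver $k$, attaining the full point‑to‑point DoF $\max\{1,\rho\}$ of that link (receiver $k'$ has nothing to decode). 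For the skew vertex in which receiver $1$ gets DoF $1$ I would invoke Theorem \ref{thm:DPC} applied to the GDoF‑scaled channel with the parameter choice of Remark \ref{remark:non-integer}: DPC at the cognitive transmitter gives $R_1\le\log(1+|h_{11}|^2\SNR)$, i.e.\ $d_1=1$, while the Channel‑Integer‑Alignment choice of $(\bv,\beta)$ makes the Scaled‑PCoF non‑integer penalty vanish and drives $\sigma^2_{\mathrm{eff}}$ down to $\Theta(\SNR^{\,1-\min\{1,\rho\}})$, giving $d_2=\min\{1,\rho\}$ for every $\rho\ge0$. The symmetric skew vertex is the delicate one: I would obtain it from the analogous construction in which the cognitive transmitter pre‑cancels interference at receiver $2$ and helps deliver $\underline{\wv}_1$, combined with power control to balance received powers and — in the sub‑regime where treating‑as‑noise/integer‑forcing alone falls short — a finite‑field rate split of $\underline{\wv}_1\oplus\underline{\wv}_2$ so that the component receiver $2$ cannot separate is carried over the cross link at its matched DoF; time‑sharing over the five vertices then fills $\Dc(\rho)$.

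I expect the genuinely hard steps to be: (i) the converse bound $d_1+d_2\le1+\rho$ — it is short, but the observation that the XOR side‑information makes a genie‑aided receiver decode both messages is the entire new idea; and (ii) on the achievability side, showing that the integer‑forcing vector and transmit scalings can always be chosen (a Diophantine/counting statement) so that $\sigma^2_{\mathrm{eff}}$ hits the target exponent $1-\min\{1,\rho\}$ in every regime of $\rho$, and assembling the correct scheme for the symmetric skew vertex without a channel symmetry to fall back on.
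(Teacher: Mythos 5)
Your converse is fine, and it is essentially the paper's argument with slightly different bookkeeping: the paper conditions receiver $k$'s rate on the genie $(W_{11}\oplus W_{21},W_{\triangle})$, invokes the Crypto Lemma for independence, and uses that $X_2^n$ is a function of that side information to get $nR_{\min}\le I(X_1^n;Y_k^n|X_2^n)+n\epsilon_n$ for $k=1,2$; your route (sum-rate Fano with genie $S=\underline{\wv}_1\oplus\underline{\wv}_2$, then subtracting $H(S)\le nR_{\max}$) yields exactly the same inequality, and the individual bounds and the $\min+\max$ assembly coincide with the paper's.

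The genuine gap is in the achievability. First, your analysis of the DPC\,+\,Scaled-PCoF scheme under-claims what it gives: you assert $\sigma^2_{\mathrm{eff}}=\Theta(\SNR^{1-\min\{1,\rho\}})$ and hence $d_2=\min\{1,\rho\}$, but with $\beta^\star=h_{21}\sqrt{\INR}/(\tilde h_{22}\gamma)$, $b_1=\gamma=\lceil|h_{21}\sqrt{\INR}/\tilde h_{22}|\rceil$, $b_2=1$ the effective noise is $\gamma^2/(|h_{21}|^2\INR)$ (unit transmit power) and $\gamma$ stays bounded as $\SNR\to\infty$ because $|\tilde h_{22}|$ grows like $\SNR^{|1-\rho|/2}$; the computation rate at receiver 2 is governed by the cross link $\INR$, so this one scheme achieves $(d_1,d_2)=(1,\rho)$ for \emph{every} $\rho$, not $(1,\min\{1,\rho\})$. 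Under your claim the vertex $(1,\rho)$ is unreachable for $\rho>1$, so your five-vertex program does not close. Second, the other skew vertex $(\rho,1)$ --- which is where the real work lies --- is left as an acknowledged sketch. The paper needs two distinct constructions depending on whether $\rho<1$ or $\rho\ge 1$: split the larger message, superpose at the non-cognitive transmitter with GDoF-tuned power levels $\SNR^{\rho-1}$ (resp.\ $\SNR^{1-\rho}$), apply DPC at the cognitive transmitter toward the appropriate receiver, and at the receiving side first decode the ``private'' stream treating the rest as noise, subtract it, and then run the CoF mapping with the same penalty-killing choice of $(\beta,b_1,b_2)$. A key enabling observation for $\rho\ge1$ is that, because of zero-padding, $\underline{\wv}_1\oplus\underline{\wv}_2=[\underline{\wv}_{11}\oplus\underline{\wv}_{21},\,\underline{\wv}_{12}]$, so the non-cognitive transmitter knows the excess part $\underline{\wv}_{12}$ of user 1's message in the clear and can carry it on its strong link while receiver 1 decodes and strips it. Your proposal names some of these ingredients (power control, finite-field rate split) but does not specify which stream goes to which receiver at what power, nor verify the resulting rate exponents, so as written the achievability of the sum-facet corner points is incomplete.
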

\begin{IEEEproof}
See Appendix~\ref{proof:GDoF}.
\end{IEEEproof}

%

In order to demonstrate the benefit gain of the mixed message at the non-cognitive transmitter,
we compare the sum GDoF (defined by $d_{\mbox{\tiny{sum}}}(\rho)$) of Gaussian IC and Gaussian CIC.
The sum GDoF of Gaussian IC is computed in \cite{Etkin}, and it is given by
\begin{equation}
d_{\mbox{\tiny{sum}}}(\rho) = \left\{
                                \begin{array}{ll}
                                  2(1-\rho), & 0\leq \rho < \frac{1}{2} \\
                                  2\rho, & \frac{1}{2} \leq \rho < \frac{2}{3} \\
                                  2-\rho, & \frac{2}{3} \leq \rho < 1 \\
                                  \rho, & 1 \leq \rho < 2 \\
                                  2, & \rho \geq 2.
                                \end{array}
                              \right.
\end{equation}
Also, from the constant gap result in \cite{Rini}, we can immediately compute the sum GDoF
of Gaussian CIC as
\begin{eqnarray}
d_{\mbox{\tiny{sum}}}(\rho)
&=& \left\{
      \begin{array}{ll}
        2-\rho, & \rho \leq 1 \\
        \rho, & \rho > 1.
      \end{array}
    \right.
\end{eqnarray}
The sum symmetric GDoF of these three channel models are shown in Fig.~\ref{GDoF}.
It is also immediate to observe that the sum GDoF of full-cooperation is given by
\begin{equation} \label{sum-GDoF-fullcoop}
d_{\mbox{\tiny{sum}}}(\rho) = 2\times \max\{1,\rho\}.
\end{equation}
In this case, the upper bound can be obtained from the $2 \times 2$ MIMO capacity with full CSI,
and an easily analyzable achievable scheme consists of employing simple linear precoding given by $\beta(\Hm')^{-1}\Bm$, where $\beta$ denotes a scaling value to normalize the precoding matrix and
\begin{eqnarray}
&&\Hm' =\left[
        \begin{array}{cc}
          h_{11} & \sqrt{\frac{\INR}{\SNR}}h_{12} \\
          \sqrt{\frac{\INR}{\SNR}}h_{21} & h_{22}\\
        \end{array}
      \right] \mbox { and } \Bm = \left[
        \begin{array}{cc}
          1 & 0 \\
          0 & 1 \\
        \end{array}
      \right] \mbox{ for }\rho \leq 1\\
&& \Hm' =\left[
        \begin{array}{cc}
          \sqrt{\frac{\SNR}{\INR}}h_{11} & h_{12} \\
          h_{21} & \sqrt{\frac{\SNR}{\INR}}h_{22}\\
        \end{array}
      \right] \mbox { and } \Bm = \left[
        \begin{array}{cc}
          0 & 1 \\
          1 & 0 \\
        \end{array}
      \right] \mbox{ for } \rho > 1,
\end{eqnarray} where notice that $\Hm'$ is a constant-valued matrix when $\SNR \rightarrow \infty$ for any $\rho$. Then, receiver $k$ can observe an interference-free signal as
\begin{eqnarray}
\underline{\yv}_{k}=\left\{
                      \begin{array}{ll}
                        \beta \sqrt{\SNR}\underline{\xv}_{k} + \underline{\zv}_{k}, & \hbox{for $\rho \leq 1$} \\
                        \beta \sqrt{\INR}\underline{\xv}_{k} + \underline{\zv}_{k}, & \hbox{for $\rho > 1$.}
                      \end{array}
                    \right.
\end{eqnarray} Since $\beta$ is a constant, the sum-DoF in (\ref{sum-GDoF-fullcoop}) is achieved. Observing that (\ref{sum-GDoF}) and (\ref{sum-GDoF-fullcoop}) coincide for $\rho=1$,
we conclude that the sum DoF of the Network-Coded CIC coincides with the sum DoF of full-cooperation, while the sum GDoF
is strictly worse than full cooperation when $\rho \neq 1$. Furthermore, the network-coded cognition yields higher sum
GDoFs than the conventional cognition when $\rho \geq 1/2$ and higher sum GDoFs than the standard IC
when $\rho \geq 1/3$.

\begin{remark}
Apparently, having a rank-1 linear combination of both messages at transmitter 2 instead of just message 2
hurts for small $\rho$ (weak interference) and it is helpful in the intermediate to strong interference regime. Obviously,
in a system where the backhaul network is rate-constrained but can be optimized with respect to the employed
network code used, one would dispatch to the non-cognitive  transmitter its own message only if the wireless segment operates in the regime
of weak interference, and a linear combination of the two messages if it operates in the medium or strong interference regimes, thus obtaining
the upper envelope of the conventional and Network-Coded CIC sum GDoF. \hfill $\lozenge$
\end{remark}


\section{Two-User MIMO IC: Coordination, Cognition, Two-Hop}\label{sec:twoMIMO}

In this section, we study three communication channels (see Fig.~\ref{system-model}) with the two-user MIMO IC as a
building block, namely, Network-Coded ICC (representative of a cellular system downlink with interference coordination),
Network-Coded CIC (the MIMO generalization of the model of Section \ref{sec:CIC-FF}), and $2\times 2\times 2$ IC (a {\em canonical}
two-flows two-hop network that has attracted considerable attention in recent literature \cite{Gou,Shomorony}).
In all these models, we assume that all nodes have $M$ transmit/receive antennas.
Let $\underline{\wv}_{k,\ell} \in \FF_{q}^{r}$, $\ell=1,\ldots,M$, denote the independent messages intended for destination $k$, for $k=1,2$.
For simplicity of exposition, we define the message matrix $\underline{\Wm}_{k}$ with rows $\underline{\wv}_{k,1},\ldots,\underline{\wv}_{k,M}$,
where $\underline{\wv}_{k,\ell}$ can be all-zero vectors for $\ell > S_{k}$ if user $k$ has $S_{k}$ independent
information messages.

In the Network-Coded ICC, the  source has no knowledge of the CSI and can deliver fixed (i.e., not dependent on the wireless channel matrices)
linear combinations of the information messages to each transmitter, such that
each transmitter $k$ knows $M$ linear combinations as $\Sm_{k1} \underline{\Wm}_{1}\oplus\Sm_{k2}\underline{\Wm}_{2}$, for suitable integer matrices
$\Sm_{ki}$.  In the wireless channel, a block of $n$ channel uses of the discrete-time complex baseband MIMO IC is described by
\begin{equation}
\left[
  \begin{array}{c}
    \underline{\Ym}_{1} \\
    \underline{\Ym}_{2} \\
  \end{array}
\right]=\left[
                                             \begin{array}{cc}
                                               \Fm_{11} & \Fm_{12} \\
                                               \Fm_{21} & \Fm_{22} \\
                                             \end{array}
                                           \right]\left[
  \begin{array}{c}
    \underline{\Xm}_{1} \\
    \underline{\Xm}_{2} \\
  \end{array}
\right]+\left[
  \begin{array}{c}
    \underline{\Zm}_{1} \\
    \underline{\Zm}_{2} \\
  \end{array}
\right]\label{model:brn}
\end{equation}
where the matrices $\underline{\Xm}_{k}$ and $\underline{\Ym}_{k}$ contain, arranged by rows, the channel input sequences
$\underline{\xv}_{k,\ell} \in \CC^{1 \times n}$, the channel output sequences $\underline{\yv}_{k,\ell} \in \CC^{1 \times n}$,
and where $\Fm_{j k} \in \CC^{M \times M}$ denotes the channel matrix between transmitter $k$ and receiver $j$.
The Network-Coded CIC has the wireless channel component given in (\ref{model:brn}), but in this case
the two transmitters have {\em different} knowledge on the messages. In particular,
transmitter 1 (the cognitive transmitter) knows both messages $\underline{\Wm}_{1},\underline{\Wm}_{2}$
and transmitter 2 (the non-cognitive transmitter) only knows linear combinations $\Sm_{21}\underline{\Wm}_{1}\oplus \Sm_{22}\underline{\Wm}_{2}$, where the rank of the linear combinations is not sufficient to recover the individual messages. Finally, we consider the $2\times 2 \times 2$ IC, as shown in Fig.~\ref{system-model} (c),
where each transmitter $k$ (referred to as ``source'' in this relay setting) has a message for its intended destination $k$, for $k=1,2$.
In this model, the first hop is also described by (\ref{model:brn}) and in the second hop  a block of $n$ channel uses of the discrete-time complex MIMO IC
is described by
\begin{equation}
\left[
  \begin{array}{c}
    \underline{\Ym}_{3} \\
    \underline{\Ym}_{4} \\
  \end{array}
\right]=\left[
                                             \begin{array}{cc}
                                               \Fm_{33} & \Fm_{34} \\
                                               \Fm_{43} & \Fm_{44} \\
                                             \end{array}
                                           \right]\left[
  \begin{array}{c}
    \underline{\Xm}_{3} \\
    \underline{\Xm}_{4} \\
  \end{array}
\right]+\left[
  \begin{array}{c}
    \underline{\Zm}_{3} \\
    \underline{\Zm}_{4} \\
  \end{array}
\right].\label{model:2hop}
\end{equation}
where we denote the two transmitter-receiver pairs in the second hop by $k = 3,4$, and
where $\underline{\Zm}_{k}$ contains i.i.d. Gaussian noise samples $\sim \Cc\Nc(0,1)$.
We assume that the elements of $\Fm_{j k}$ are drawn i.i.d. according to a continuous distribution (i.e., Gaussian distribution).
The channel matrices are assumed to be constant over the whole block of length $n$ and known to all nodes, and
we consider a total power constraint equal to $P_{\rm{sum}}$ at each transmitter (both sources and relays). Also, it is assumed that relays operate in a full-duplex mode.

Before stating the main results of this section, it is useful to introduce the following notation.
With reference to Section \ref{subsec:CoF}, for a set of modulo-$\Lambda$ additive noise channel of the type
(\ref{eq:cof}), induced by nested lattice coding, by the channel matrix $\Hm_k \Cm_k$ and by the integer combining matrix $\Bm_k$ with columns $\bv_{k,\ell}$,
for $k = 1,2$ and $\ell = 1, \ldots, L_k$, for some integer $L_k$, we define
\begin{equation}
R_{\rm comp}(\Hm_k\Cm_k, \Bm_k, \SNR) = \min_{\ell = 1, \ldots, L_k} \left \{ \log^+ \left (\frac{\SNR}{\sigma^2_{\mbox{\tiny{eff}},k,\ell}} \right ) \right \}, \label{eq:Rcomp}
\end{equation}
where
\begin{eqnarray}
\sigma^{2}_{\mbox{\tiny{eff}},k,\ell}
&=& \bv_{k,\ell}^{\herm}\Cm_k(\SNR^{-1}\Id+\Cm_k^{\herm}\Hm_k^{\herm}\Hm_k\Cm_k)^{-1}\Cm_k^{\herm} \bv_{k,\ell}.
\end{eqnarray} Also, we define the {\em constant} matrices:
\begin{eqnarray}
\Cm_{12}&\eqdef&\left[
                                                                \begin{array}{c}
                                                                  \mbox{0}_{1 \times (M-1)} \\
                                                                  \Id_{M-1} \\
                                                                \end{array}
                                                              \right]\label{eq:C12}\\
\Cm_{22}&\eqdef&\left[
                                                                \begin{array}{c}
                                                                  \Id_{M-1} \\
                                                                  \mbox{0}_{1 \times (M-1)} \\
                                                                \end{array}
                                                              \right].\label{eq:C22}
\end{eqnarray}

With this notation, we have:

\begin{theorem}\label{thm:122}For the Network-Coded ICC and Network-Coded CIC,
PCoF with CIA  can achieve the {\em symmetric} sum rate of $(2M-1)R$ with all messages of the  same rate given by
\begin{equation}
R=\min_{k=1,2}\left\{R_{\rm comp}(\Hm_{k}\Cm_{k},\Bm_{k},\SNR)\right\}
\end{equation}
for any full-rank integer matrices $\Am_{1} \in \ZZ[j]^{M \times M}, \Am_{2} \in \ZZ[j]^{(M-1) \times (M-1)}$ and $\Bm_{1}, \Bm_{2} \in \ZZ[j]^{M \times M}$,
and any alignment precoding matrices $\Vm_{k}$ satisfying the alignment conditions in (\ref{cond:ALI}), where
\begin{eqnarray}
\Hm_{k} &=& \Fm_{k1}\Vm_{1}, \;\;\;\; \Cm_{k}=\left[
\begin{array}{ccc}
\Am_{1}  & \Cm_{k2}\Am_{2}
\end{array}
\right]\\
\SNR&=&\min_{k=1,2}\left\{\frac{P_{\rm{sum}}}{\trace{\left(\Vm_{k}\Am_{k}\Am_{k}^{\herm}\Vm_{k}^{\herm}\right)}}\right\}.
\end{eqnarray}
\hfill \IEEEQED
\end{theorem}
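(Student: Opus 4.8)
\textbf{Proof plan for Theorem~\ref{thm:122}.}
The plan is to mimic the two-phase structure already used in the scalar Network-Coded CIC (Theorem~\ref{thm:FF} and Theorem~\ref{th:LC}): first use alignment precoding plus CoF to reduce each of the two MIMO links to a bank of modulo-$\Lambda$ additive noise channels carrying integer linear combinations of the lattice codewords, and then eliminate the residual interference in the finite-field domain by choosing the network-coding coefficients appropriately. Concretely, I would have each transmitter $k$ precode its $M$ (for $k=1$) or $M-1$ (for $k=2$) message streams through the integer matrices $\Am_1,\Am_2$ and then through the alignment precoder $\Vm_k$, so that the effective input seen through the channel is $\Vm_k \Am_k \underline{\Tm}_k$ with $\underline{\Tm}_k$ the stacked lattice codewords. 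The alignment conditions (\ref{cond:ALI}) are designed precisely so that the interfering stream, after passing through $\Fm_{k\,\bar k}\Vm_{\bar k}$, falls into a common subspace with a part of the desired channel $\Fm_{kk}\Vm_k$, which is why the effective channel at receiver $k$ collapses to $\Hm_k \Cm_k$ with $\Cm_k=[\Am_1 \;\; \Cm_{k2}\Am_2]$, i.e.\ only $2M-1$ effective streams are present and they are arranged so that integer-forcing with the matrix $\Bm_k$ is exact (no non-integer penalty on the aligned directions). The power scaling $\SNR=\min_k P_{\rm sum}/\trace(\Vm_k\Am_k\Am_k^\herm\Vm_k^\herm)$ is just the normalization that makes $\frac1n\EE[\|\underline{\Xm}_k\|^2]\le P_{\rm sum}$ hold at both transmitters simultaneously.

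Given that reduction, the per-receiver computation rate is exactly $R_{\rm comp}(\Hm_k\Cm_k,\Bm_k,\SNR)$ from (\ref{eq:Rcomp})--(\ref{eq:Rcomp}) applied with the $2M-1$ effective streams, so by the CoF theorem recalled in Section~\ref{subsec:CoF} each receiver can reliably recover the $M$ integer combinations given by the columns of $\Bm_k$ provided $R\le R_{\rm comp}(\Hm_k\Cm_k,\Bm_k,\SNR)$; taking the minimum over $k=1,2$ gives the claimed common rate $R$. The remaining task is the finite-field step: by lattice linearity (the isomorphism $f:\FF_q^r\to\Lc$) the decoded combinations at receiver $k$ are $[\Bm_k^\herm]_q[\Cm_k]_q \underline{\Wm}'$, where $\underline{\Wm}'$ stacks the precoded messages. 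Here I would invoke the same distributed-zero-forcing argument as in Theorem~\ref{thm:FF} and Lemma~2: since the cognitive transmitter (resp.\ the coordinating source pair) can choose the finite-field combining matrices $\Sm_{ki}$ freely, and since $\Am_1,\Am_2,\Bm_1,\Bm_2$ are full rank so that $[\Bm_k^\herm]_q,[\Cm_k]_q$ are invertible on the relevant blocks, the coefficients can be selected so that the $2M-1$-dimensional finite-field system seen by the two receivers decouples into $2M-1$ interference-free scalar finite-field channels, one per message, each of which supports rate $R$ by linear coding over $\FF_q$. Summing gives the symmetric sum rate $(2M-1)R$. For the Network-Coded ICC the argument is symmetric in the two transmitters; for the Network-Coded CIC one transmitter is cognitive, which only makes the finite-field zero-forcing easier (it is the MIMO analogue of Theorem~\ref{thm:FF}).

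The main obstacle I anticipate is verifying that the alignment precoders $\Vm_k$ satisfying (\ref{cond:ALI}) actually exist for generic (continuously distributed) channel matrices $\Fm_{jk}$, and that with such $\Vm_k$ the effective matrix genuinely has the block form $\Hm_k[\Am_1\;\;\Cm_{k2}\Am_2]$ with the interference confined to a single aligned dimension. This is a dimension-counting / genericity argument: one must show that the subspace equations forcing the image of transmitter $\bar k$'s precoder to align with a chosen $(M-1)$-dimensional subspace of receiver $k$'s signal space are solvable, and that the resulting $\Vm_k$ are full rank almost surely. The fixed matrices $\Cm_{12},\Cm_{22}$ in (\ref{eq:C12})--(\ref{eq:C22}) encode exactly which coordinate is ``sacrificed'' for alignment at each receiver (the first versus the last), and checking that this choice is consistent across both receivers — i.e.\ that the two alignment conditions can be met jointly without over-constraining the $\Vm_k$ — is the delicate point; everything after that (the CoF rate bookkeeping and the finite-field decoupling) is a routine repetition of the scalar proofs already given.
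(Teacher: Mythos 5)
Your overall architecture matches the paper's: alignment precoding so that each receiver sees the effective channel $\Hm_k\Cm_k$ with exact integer forcing, CoF decoding at rate $R_{\rm comp}(\Hm_k\Cm_k,\Bm_k,\SNR)$, the power normalization giving the stated $\SNR$, and a finite-field precoding step to remove interference; the alignment feasibility you worry about is indeed what the paper disposes of by citing \cite{Gou} (with an explicit eigenvector-based construction in the DoF appendix). However, there is a genuine gap in your second phase. You assert that the finite-field system "decouples" because the combining matrices "can be selected freely" and call it a routine repetition of the scalar proof, but the whole point of this step is that the nodes who must apply the decoupling precoders have \emph{constrained} knowledge, and the scalar null-space argument of Theorem~\ref{thm:FF} does not transplant directly. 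For the Network-Coded ICC the source has no CSI, so it cannot invert anything channel-dependent such as $[\Bm_k^{\herm}]_q[\Cm_k]_q$; the paper's resolution is that after the transmitters absorb $[\Am_k]_q^{-1}$ and each receiver absorbs $[\Bm_k^{\herm}]_q^{-1}$, the end-to-end finite-field channel is the \emph{fixed} system matrix $\Qm_{\rm sys}$ of (\ref{def:Q}), determined only by the alignment pattern, and Lemma~\ref{lem:Qfull} shows $\det(\Qm_{\rm sys})=1$ so the source can blindly pre-apply $\Qm_{\rm sys}^{-1}$. Your proposal never isolates this channel-independence, which is exactly what makes the scheme compatible with the ICC model.

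For the Network-Coded CIC the issue is sharper, and your remark that cognition "only makes the finite-field zero-forcing easier" points the wrong way: transmitter 2 knows only the $M-1$ rank-deficient combinations $\Sm_1\underline{\Wm}'_1\oplus\Sm_2\underline{\Wm}_2$, so one must exhibit precoders computable from that side information alone which still diagonalize the end-to-end map. The paper does this constructively via Lemma~\ref{lem:pre}: with $\Mm_1=(\Id_M\oplus(-\Qm_{12}\Qm_{21}))^{-1}$ and $\Mm_2=-(\Id_{M-1}\oplus(-\Qm_{21}\Qm_{12}))^{-1}$ one verifies by the matrix inversion lemma that $\Qm_{\rm sys}\,\mathrm{blockdiag}(\Mm_1,\Mm_2)\,\Qm_{\rm sys}=\mathrm{blockdiag}(\Id_M,-\Id_{M-1})$, and then checks that the required inputs $\Mm_1(\underline{\Wm}_1\oplus\Qm_{12}\Sm_1^{-1}\Sm_2\underline{\Wm}_2)$ and $\Mm_2\Sm_1^{-1}(\Sm_1\underline{\Wm}'_1\oplus\Sm_2\underline{\Wm}_2)$ factor through what each transmitter actually knows. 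An existence claim ("coefficients can be selected") does not substitute for this construction, because the distributed knowledge constraints could in principle be over-restrictive; supplying Lemma~\ref{lem:pre} (or an equivalent explicit construction) is what closes your argument.
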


\begin{theorem}\label{thm:222}
For the $2 \times 2 \times 2$ IC,  PCoF with CIA can achieve the {\em symmetric} sum rate of $(2M-1)R$ with all messages of the same rate given by
\begin{equation}
R=\left\{ \min_{k=1,2}\{R_{\rm comp}(\Hm_{k}\Cm_{k},\Bm_{k},\SNR)\}, \min_{k=3,4}\{R_{\rm comp}(\Hm_{k}\Cm_{k},\Bm_{k}, \SNR')\}\right\},
\end{equation}
for any full-rank integer matrices $\Am_{1}, \Am_{3} \in \ZZ[j]^{M \times M}, \Am_{2}, \Am_{4} \in \ZZ[j]^{(M-1) \times (M-1)}$ and $\Bm_{k} \in \ZZ[j]^{M \times M}, k=1,\ldots, 4$,
and any alignment precoding matrices
$\Vm_{k}$ satisfying the alignment conditions in (\ref{cond:ALI}), where
\begin{eqnarray}
\Hm_{k} &=& \Fm_{k1}\Vm_{1}, \;\;\;\; \Cm_{k} =
\left[
\begin{array}{ccc}
\Am_{1}  & \Cm_{k2}\Am_{2}
\end{array}
\right], \;\;\;\; k=1,2\\
\Hm_{k}&=&\Fm_{k3}\Vm_{3}, \;\;\;\;
\Cm_{k} = \left[
\begin{array}{ccc}
\Am_{3}  & \Cm_{(k-2)2}\Am_{4}
\end{array}
\right], \;\;\;\; k=3,4\\
\SNR&=&\min_{k=1,2}\left\{\frac{P_{\rm{sum}}}{\trace{\left(\Vm_{k}\Am_{k}\Am_{k}^{\herm}\Vm_{k}^{\herm}\right)}}\right\}\\
\SNR'&=& \min_{k=3,4}\left\{\frac{P_{\rm{sum}}}{\trace{\left(\Vm_{k}\Am_{k}\Am_{k}^{\herm}\Vm_{k}^{\herm}\right)}}\right\}.
\end{eqnarray}
\hfill \IEEEQED
\end{theorem}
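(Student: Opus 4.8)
The plan is to handle the two hops of the $2\times2\times2$ IC separately, invoking Theorem~\ref{thm:122} once per hop. The key observation is that after Compute-and-Forward decoding on the first hop each relay holds a collection of $\FF_q$-linear combinations of the message streams, which is exactly the ``mixed-data'' side information that defines a Network-Coded ICC on the second hop; hence the second hop is again an instance covered by Theorem~\ref{thm:122}, and the proof reduces to chaining the two constructions and then closing the loop over $\FF_q$.

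\emph{First hop.} Source~$1$ encodes $M$ streams and source~$2$ encodes $M-1$ streams (one of the $2M$ possible streams being sacrificed for alignment) with the nested lattice codes of Section~\ref{subsec:NLC}, after finite-field precoding and an alignment precoder $\Vm_1$ (resp.\ $\Vm_2$). The alignment conditions (\ref{cond:ALI}) are imposed precisely so that CIA makes the contribution of source~$2$ through $\Fm_{k2}\Vm_2$ coincide, in the integer/lattice sense of Section~\ref{subsec:CoF}, with $\Fm_{k1}\Vm_1$ times the integer block $\Cm_{k2}\Am_2$; the received lattice point at relay~$k$ is then $\Hm_k\Cm_k\underline{\Tm}$ modulo $\Lambda$ with the single effective channel $\Hm_k=\Fm_{k1}\Vm_1$ and $\Cm_k=[\Am_1\ \ \Cm_{k2}\Am_2]$, where $\Cm_{12},\Cm_{22}$ of (\ref{eq:C12})--(\ref{eq:C22}) encode the slot that must be dropped at each relay to make the cross term align -- exactly the construction in the proof of Theorem~\ref{thm:122}. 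Applying the Compute-and-Forward bound of Section~\ref{subsec:CoF} with integer combining matrix $\Bm_k$, and using lattice linearity (\ref{cofeq}), relay~$k$ reliably recovers the $M$ rows of $\underline{\Um}_k=[\Bm_k^\herm]_q[\Cm_k]_q\underline{\Wm}$ provided $R\le R_{\rm comp}(\Hm_k\Cm_k,\Bm_k,\SNR)$, with $\SNR=\min_{k=1,2}P_{\rm{sum}}/\trace(\Vm_k\Am_k\Am_k^\herm\Vm_k^\herm)$ forced by the per-source total power constraint.

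\emph{Second hop and recovery.} Relay~$k$ now knows $\underline{\Um}_k=\Sm_{k1}\underline{\Wm}_1\oplus\Sm_{k2}\underline{\Wm}_2$ for integer matrices $\Sm_{ki}$ determined by the first-hop matrices, i.e.\ precisely the side-information model of a Network-Coded ICC with the relays playing the role of transmitters. Theorem~\ref{thm:122} then applies verbatim to the second hop: after finite-field re-encoding at the relays, alignment precoders $\Vm_3,\Vm_4$ and integer combining matrices $\Bm_3,\Bm_4$, destination~$k$ ($k=3,4$) reliably decodes $M$ further $\FF_q$-linear combinations of $\underline{\Wm}$ as long as $R\le R_{\rm comp}(\Hm_k\Cm_k,\Bm_k,\SNR')$ with $\Hm_k=\Fm_{k3}\Vm_3$, $\Cm_k=[\Am_3\ \ \Cm_{(k-2)2}\Am_4]$ and $\SNR'=\min_{k=3,4}P_{\rm{sum}}/\trace(\Vm_k\Am_k\Am_k^\herm\Vm_k^\herm)$. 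Finally, exactly as in Theorem~\ref{thm:FF}, the finite-field precoding matrices at the sources (shaping $[\Cm_k]_q$) and at the relays are chosen as a distributed zero-forcing solution over $\FF_q$, so that the composition of all the $\FF_q$-linear maps along the two hops reduces, at destination~$3$, to an invertible map onto $\underline{\Wm}_1$ with the coefficients of $\underline{\Wm}_2$ nulled, and symmetrically at destination~$4$. Since each of the $2M-1$ streams traverses both hops, the per-stream rate is the minimum of the four computation rates above and the sum rate is $(2M-1)R$.

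\emph{Main obstacle.} The delicate point is the joint feasibility of (i) the CIA alignment conditions (\ref{cond:ALI}) in \emph{both} hops -- constraints on $\Vm_k$ and $\Am_k$ relative to the i.i.d.\ channel matrices $\Fm_{jk}$ -- and (ii) the nonsingularity of the end-to-end $\FF_q$-linear maps needed for the two distributed zero-forcing steps, which in turn demands that the first hop deliver ``generic enough'' combinations to the relays (full row rank of $[\Cm_k]_q$ and the correct rank structure of the resulting $\Sm_{ki}$). Both are handled by genericity: the alignment conditions cut out a nonempty Zariski-open set and are met almost surely under the continuous channel law, while the required finite-field invertibility is a nonempty Zariski-open condition on the free integer parameters reduced modulo $p\ZZ[j]$, hence satisfied once $q=p^2$ is taken large enough. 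The remaining bookkeeping -- the two effective SNRs, the fact that the aligned channels are constant in $\SNR$, full-duplex relaying, and the zero-padding of unequal-rate streams -- is routine and parallels the proofs of Theorems~\ref{thm:FF} and~\ref{thm:122}.
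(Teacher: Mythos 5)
Your overall architecture coincides with the paper's: use CoF with CIA to turn each hop into the deterministic finite-field IC (\ref{eq:fmodel}) with the constant system matrix $\Qm_{{\rm sys}}$ of (\ref{def:Q})--(\ref{Q12-Q21}), then precode over $\FF_q$ to remove interference, and your first-hop analysis and the rate/power bookkeeping are consistent with the paper. The gap is in the second hop. Theorem~\ref{thm:122} does not apply ``verbatim'' there: in the Network-Coded ICC the source freely chooses to deliver the $\Qm_{{\rm sys}}^{-1}$-precoded combinations of both messages to the two transmitters, whereas here each relay is stuck with the particular block of combinations the first hop produced (relay~1 holds $\underline{\Wm}_1\oplus\Qm_{12}\underline{\Wm}_2$, relay~2 holds $\Qm_{21}\underline{\Wm}_1\oplus\underline{\Wm}_2$), and any further finite-field precoding must be distributed, i.e.\ block-diagonal, each relay acting only on its own rows. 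What has to be proved is that there exist $\Mm_1,\Mm_2$ such that $\Qm_{{\rm sys}}\,\diag(\Mm_1,\Mm_2)\,\Qm_{{\rm sys}}$ is block-diagonal and invertible. This is exactly the paper's Lemma~\ref{lem:pre}, which exhibits $\Mm_1=(\Id_M\oplus(-\Qm_{12}\Qm_{21}))^{-1}$ and $\Mm_2=-(\Id_{M-1}\oplus(-\Qm_{21}\Qm_{12}))^{-1}$ and verifies by direct computation (matrix inversion lemma) that the cascade equals $\diag(\Id_M,-\Id_{M-1})$; the needed inverses exist over every $\FF_q$ because $\Qm_{21}\Qm_{12}$ is nilpotent (cf.\ the unipotent structure used in Lemma~\ref{lem:Qfull}).

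Your substitute for this step---asserting that the required distributed zero-forcing over $\FF_q$ is ``handled by genericity'' and by taking $q=p^2$ large enough---does not close the gap. After the scheme cancels $[\Am_k]_q$ at the transmitters and $[\Bm_k^{\herm}]_q$ at the receivers, the residual finite-field channel on each hop is the fixed matrix $\Qm_{{\rm sys}}$, independent of the wireless channel realizations and of the free integer parameters, so there is no generic parameter to perturb and no Zariski-open condition to invoke; moreover the theorem is claimed for the stated construction at any prime $p$, so an argument valid only for large $q$ would not establish it even if correct. Existence of block-diagonal neutralizing precoders is not a generic property of a cascade of two finite-field ICs with per-node precoding; it holds here because of the specific structure of $\Qm_{12},\Qm_{21}$, i.e.\ the finite-field aligned-interference-neutralization identity of Lemma~\ref{lem:pre}. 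To complete your proof you must state and prove that identity (or an equivalent existence claim for the distributed relay precoders); once that is in place, the remainder of your argument---the minimum over the four computation rates, the definitions of $\SNR$ and $\SNR'$, and the $(2M-1)R$ sum rate---goes through as in the paper.
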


The next result shows that the per-message rate $R$ grows as $\log{\SNR}$ when $P_{\rm{sum}} \rightarrow \infty$, thus obtaining an achievable sum DoF result for all the
above channel models:
\begin{corollary}\label{cor:DoF}
PCoF with CIA achieves sum DoF equal to $(2M-1)$ for the Network-Coded ICC, Network-Coded CIC,
and $2\times2\times 2$ IC, when all nodes have $M$ multiple antennas.
\end{corollary}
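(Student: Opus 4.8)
The plan is to deduce Corollary~\ref{cor:DoF} from Theorems~\ref{thm:122} and~\ref{thm:222}: those theorems already give an achievable symmetric sum rate of $(2M-1)R$, so it suffices to show that the per‑message rate $R$ grows like $\log P_{\rm{sum}} - O(1)$ as $P_{\rm{sum}} \to \infty$. First I would fix, once and for all, full‑rank integer matrices $\Am_{1},\Am_{2}$ (and $\Am_{3},\Am_{4}$ for the $2\times2\times2$ case), integer combining matrices $\Bm_{k}$, and alignment precoders $\Vm_{k}$ satisfying the alignment conditions~(\ref{cond:ALI}); the crucial point is that all of these can be chosen as constants, i.e.\ they depend only on the fixed channel matrices $\{\Fm_{jk}\}$ and not on the power budget $P_{\rm{sum}}$. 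Since each $\Vm_{k}\Am_{k}$ is a fixed nonzero matrix, the quantities $\trace(\Vm_{k}\Am_{k}\Am_{k}^{\herm}\Vm_{k}^{\herm}) = \|\Vm_{k}\Am_{k}\|_{F}^{2}$ are strictly positive constants, so the effective per‑hop SNRs in Theorems~\ref{thm:122} and~\ref{thm:222} satisfy $\SNR = \Theta(P_{\rm{sum}})$ and $\SNR' = \Theta(P_{\rm{sum}})$; in particular $\log\SNR = \log P_{\rm{sum}} + O(1)$ and likewise for $\SNR'$.

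Second, I would upper bound the effective noise variances appearing in $R_{\rm comp}$ in~(\ref{eq:Rcomp}). Since $\Fm_{k1}$ has i.i.d.\ continuous entries and $\Vm_{1}$ is a fixed invertible matrix, the effective channel $\Hm_{k} = \Fm_{k1}\Vm_{1}$ (respectively $\Fm_{k3}\Vm_{3}$) is invertible for almost every channel realization. Using the exact integer‑forcing receiver, $\alphav_{\ell}^{\herm} = \bv_{k,\ell}^{\herm}\Hm_{k}^{-1}$, the non‑integer penalty in~(\ref{eq:cof}) vanishes identically, and since the MMSE choice of $\alphav_{\ell}$ minimizing the effective noise can only do better, we obtain
\begin{equation}
\sigma^{2}_{\mbox{\tiny{eff}},k,\ell} \;\leq\; \left\|(\Hm_{k}^{-1})^{\herm}\bv_{k,\ell}\right\|^{2} \;\leq\; C,
\end{equation}
where $C$ is a constant depending only on $\{\Fm_{jk}\}$, $\{\Vm_{k}\}$ and $\{\Bm_{k}\}$, but not on $P_{\rm{sum}}$. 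Hence $R_{\rm comp}(\Hm_{k}\Cm_{k},\Bm_{k},\SNR) \geq \log^{+}(\SNR/C) = \log\SNR - \log C$ once $\SNR \geq C$, and the same for the second hop with $\SNR'$. Substituting into Theorems~\ref{thm:122} and~\ref{thm:222}, the achievable symmetric sum rate is at least $(2M-1)\big(\log\min\{\SNR,\SNR'\} - \log C\big) = (2M-1)\log P_{\rm{sum}} - O(1)$ for each of the three channel models, so dividing by $\log P_{\rm{sum}}$ and letting $P_{\rm{sum}}\to\infty$ yields sum DoF at least $2M-1$. The matching upper bound for this scheme is immediate, since $\Cm_{k}^{\herm}\bv_{k,\ell} \neq \zerov$ forces $\sigma^{2}_{\mbox{\tiny{eff}},k,\ell}$ to be bounded below by a positive constant, so $R \leq \log P_{\rm{sum}} + O(1)$ and only $2M-1$ data streams are transmitted; thus the sum DoF equals $2M-1$.

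The main obstacle is hidden in the clause ``$\Vm_{k}$ satisfying~(\ref{cond:ALI})'': I must know that the alignment conditions are non‑vacuous and, more importantly, that they are met by precoders whose norms remain bounded (independent of $P_{\rm{sum}}$) and with $\Vm_{1},\Vm_{3}$ invertible, so that the denominators $\trace(\Vm_{k}\Am_{k}\Am_{k}^{\herm}\Vm_{k}^{\herm})$ stay bounded away from $0$ and $\infty$ and the $\Theta(P_{\rm{sum}})$ scaling of $\SNR,\SNR'$ survives. This is precisely what the CIA construction underlying Theorems~\ref{thm:122} and~\ref{thm:222} supplies: it is a one‑shot (no symbol extension) linear design over a fixed number $M$ of antennas, so $\Vm_{k},\Am_{k},\Bm_{k}$ are genuinely constant matrices and $\Hm_{k}$ is invertible for almost every realization of $\{\Fm_{jk}\}$. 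Granting this, the two estimates above are routine and the corollary follows. (One may also remark that the extra lost dimension — i.e.\ $2M-1$ rather than the cut‑set value $2M$ — is the inherent cost of creating the integer‑aligned channel, consistent with the scalar $\rho\neq 1$ behaviour in Section~\ref{subsec:GDoF}, though only the achievability direction is needed here.)
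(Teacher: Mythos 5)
Your overall strategy is the same as the paper's: use the exact integer-forcing choice $\alphav_\ell^{\herm}=\bv_{k,\ell}^{\herm}\Hm_k^{-1}$ to bound $R_{\rm comp}(\Hm_k\Cm_k,\Bm_k,\SNR)\geq \log\SNR-\max_\ell\log\|(\Hm_k^{-1})^{\herm}\bv_{k,\ell}\|^2$, note that $\SNR,\SNR'=\Theta(P_{\rm sum})$ because $\Vm_k,\Am_k$ are fixed, and conclude that each of the $2M-1$ streams carries one degree of freedom. That part of your argument is correct and matches Appendix B.

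However, there is a genuine gap exactly at the point you flag and then wave away: you assert that the invertibility of $\Hm_k=\Fm_{k1}\Vm_1$ (resp.\ $\Fm_{k3}\Vm_3$) ``is precisely what the CIA construction underlying Theorems~\ref{thm:122} and~\ref{thm:222} supplies.'' It does not: those theorems hold for \emph{any} precoders satisfying the alignment conditions (\ref{cond:ALI}) and make no claim that $\Vm_1$ is full rank, and exact IFR (hence your bound $\sigma^2_{\mbox{\tiny eff},k,\ell}\leq C$) is only available when $\Hm_k$ is invertible. Full rank of $\Vm_1$ is not automatic from (\ref{cond:ALI}): the alignment recursion forces $\vv_{1,\ell+1}=\Fm\,\vv_{1,\ell}$ with $\Fm=\Fm_{11}^{-1}\Fm_{12}\Fm_{22}^{-1}\Fm_{21}$, so $\Vm_1=[\vv_{1,1},\Fm\vv_{1,1},\ldots,\Fm^{M-1}\vv_{1,1}]$ is a Krylov matrix, which is singular for bad choices of $\vv_{1,1}$ (e.g.\ if $\vv_{1,1}$ is an eigenvector of $\Fm$ it has rank one). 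The paper closes this by an explicit construction: since the channels are continuously distributed, $\Fm$ has distinct eigenvalues almost surely, so $\Fm=\Em\,\diag(\lambda_1,\ldots,\lambda_M)\,\Em^{-1}$; choosing $\vv_{1,1}=\Em\onev$ gives $\Vm_1=\Em\Jm$ with $\Jm$ a Vandermonde matrix in the distinct $\lambda_i$, whence $\det(\Vm_1)=\det(\Em)\prod_{i<j}(\lambda_j-\lambda_i)\neq 0$ (and likewise for $\Vm_3$ in the second hop). Without this (or an equivalent) argument that a full-rank alignment precoder exists almost surely, your $\Theta(P_{\rm sum})$ scaling and the bound $\sigma^2_{\mbox{\tiny eff},k,\ell}\leq C$ are not justified, so the DoF claim is not yet proved. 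The additional converse-style remark you include (that the scheme cannot exceed $2M-1$) is not needed, since the corollary only asserts achievability.
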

\begin{IEEEproof}
See Appendix~\ref{proof:DoF}.
\end{IEEEproof}

For the Network-Coded CIC, we can improve the DoF by appropriately combining the DPC and PCoF as done in Section~\ref{sec:CIC-GA} for single antenna case.
Exploiting this idea, we obtain:

\begin{theorem}\label{thm:RateCIC} For the Network-Coded CIC, PCoF and DPC can achieve the following sum-rate:
\begin{equation}
R_{{\rm sum}} = \sum_{\ell=1}^{M} \log\left(1+\frac{\SNR}{\|\Fm_{11}^{-1}(\ell)\|^2}\right) + \sum_{\ell=1}^{M}\log\left(1+\frac{\SNR}{\|\Fm_{21}^{-1}(\ell)\|^2\|\Vm(\ell)\|^2}\right)
\end{equation} where $\Fm_{11}^{-1}(\ell)$, $\Fm_{21}^{-1}(\ell)$, and $\Vm(\ell)$ denotes the $\ell$-th column of $\Fm_{11}^{-1}$, $\Fm_{21}^{-1}$, and $\Vm$, respectively, and where
$\Vm=\left(\Fm_{21}^{-1}\Fm_{22}-\Fm_{11}^{-1}\Fm_{12}\right)^{-1}$.
\end{theorem}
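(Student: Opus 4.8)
The plan is to extend the single-antenna scheme of Theorem~\ref{thm:DPC} to the $M$-antenna MIMO Network-Coded CIC, using $M$ parallel DPC streams at the cognitive transmitter (transmitter~1) to serve receiver~1, combined with a \emph{linear} (rather than integer) zero-forcing precoder at the non-cognitive transmitter (transmitter~2) to serve receiver~2. The key observation that makes the $2M$-DoF (not $2M-1$) attainable is that once DPC removes the interference from transmitter~2 at receiver~1, the residual channel for receiver~2 is an $M\times M$ point-to-point MIMO link whose effective matrix is known and invertible, so no integer-forcing / Compute-and-Forward penalty is needed on that branch — it can be handled by ordinary lattice (or Gaussian) coding after channel inversion.

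First I would set up the two hops of signalling. Transmitter~2 forms $M$ lattice codewords carrying $\underline{\Wm}_1 \oplus \underline{\Wm}_2$ (the rank-$M$ mixed data it possesses), stacks them as $\underline{\Xm}_2' $, and transmits $\underline{\Xm}_2 = \Vm \underline{\Xm}_2'$ with the \emph{linear precoder} $\Vm = \left(\Fm_{21}^{-1}\Fm_{22} - \Fm_{11}^{-1}\Fm_{12}\right)^{-1}$; this matrix is generically invertible since the $\Fm_{jk}$ are drawn from a continuous distribution. Transmitter~1 knows both message matrices, hence knows $\underline{\Xm}_2$ non-causally; it forms $M$ streams for $\underline{\Wm}_1$ and applies DPC against the known interference term $\Fm_{11}^{-1}\Fm_{12}\underline{\Xm}_2$ at receiver~1, exactly as in the scalar proof but now per antenna after pre-inverting $\Fm_{11}$ at the receiver. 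The standard MIMO–DPC / inflated-lattice argument (the Crypto Lemma plus the per-stream MMSE scaling $\alpha_{1,\ell,\mathrm{MMSE}} = \frac{\SNR\,\|\cdot\|}{1+\SNR\,\|\cdot\|}$ applied along the columns of $\Fm_{11}^{-1}$) then yields that receiver~1 sees $M$ clean point-to-point modulo-$\Lambda$ channels and can decode $\underline{\Wm}_1$ at rate $\sum_{\ell=1}^M \log\!\left(1+\frac{\SNR}{\|\Fm_{11}^{-1}(\ell)\|^2}\right)$.

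Next I would verify receiver~2's observation. After left-multiplying $\underline{\Ym}_2$ by $\Fm_{21}^{-1}$, receiver~2 sees $\underline{\Xm}_1 + \Fm_{21}^{-1}\Fm_{22}\underline{\Xm}_2 + \Fm_{21}^{-1}\underline{\Zm}_2$. Substituting the structure of $\underline{\Xm}_1$ (which contains, via the DPC dither manipulation, a term $-\Fm_{11}^{-1}\Fm_{12}\underline{\Xm}_2$ plus the $\underline{\Wm}_1$-carrying lattice part that will be cancelled modulo $\Lambda$ because transmitter~1 has precoded $\underline{\Wm}_1$ in the mixed combination — this is the MIMO analogue of the scalar condition $b_1 g(m)+b_2 \equiv 0$, here replaced by a finite-field precoding that makes the $\underline{\Wm}_1$ contribution combine with transmitter~2's $\underline{\Wm}_1$-part to vanish), the coefficient of $\underline{\Xm}_2'$ becomes $\left(\Fm_{21}^{-1}\Fm_{22}-\Fm_{11}^{-1}\Fm_{12}\right)\Vm = \Id$ by the very definition of $\Vm$. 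Hence receiver~2 sees $M$ streams of $\underline{\Wm}_2$ through the identity channel plus scaled noise $\Fm_{21}^{-1}\underline{\Zm}_2$ (whose $\ell$-th component has variance $\|\Fm_{21}^{-1}(\ell)\|^2$, further scaled by the per-stream power $\|\Vm(\ell)\|^2$ through the power-normalization bookkeeping), giving the rate $\sum_{\ell=1}^M\log\!\left(1+\frac{\SNR}{\|\Fm_{21}^{-1}(\ell)\|^2\|\Vm(\ell)\|^2}\right)$. Summing the two contributions gives $R_{\mathrm{sum}}$ as stated.

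The main obstacle I expect is the bookkeeping that makes the $\underline{\Wm}_1$-carrying lattice component cancel \emph{exactly} modulo $\Lambda$ at receiver~2 while simultaneously the DPC at transmitter~1 is done with respect to $\underline{\Xm}_2$. In the scalar case this worked because transmitter~1 precoded $\underline{\wv}_1$ by the scalar $m$ chosen so that $b_1 g(m)+b_2 \equiv 0 \bmod p\ZZ[j]$; in the MIMO case one needs a finite-field matrix precoder $\Mm_1$ at transmitter~1 acting on $\underline{\Wm}_1$ such that, after receiver~2's channel-inversion and the algebraic identity for $\Vm$, the aggregate coefficient of the $\underline{\Wm}_1$ lattice streams reduces to zero modulo $\Lambda$ (using $[p\underline{\tv}]\bmod\Lambda = \underline{\zerov}$), leaving only $\underline{\Wm}_2$. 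Making this compatible with the DPC dither-substitution step — i.e., ensuring the quantization lattice points generated by DPC at transmitter~1 still land in $\Lambda$ after being scaled by $\Fm_{21}^{-1}$-induced coefficients, as in the scalar step where $\alpha_2 h_{21}\underline{\lambdav} = b_1\underline{\lambdav}\in\Lambda$ — is the delicate part and is where the choice $\alpha_1 = \alpha_{1,\ell,\mathrm{MMSE}}$ and the common nested-lattice chain must be invoked carefully, exactly paralleling the computation after equation~(\ref{eq:R2_proof}) in the scalar proof but carried out per stream.
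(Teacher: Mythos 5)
Your proposal follows essentially the same route as the paper's proof: transmitter 2 sends lattice codewords of $\underline{\Wm}_{1}\oplus\underline{\Wm}_{2}$ through the linear precoder $\Vm=(\Fm_{21}^{-1}\Fm_{22}-\Fm_{11}^{-1}\Fm_{12})^{-1}$, transmitter 1 does DPC against $\Fm_{11}^{-1}\Fm_{12}\underline{\Xm}_{2}$ while finite-field-precoding its own message, receiver 1 inverts $\Fm_{11}$ to get $M$ clean streams, and receiver 2 inverts $\Fm_{21}$ so the cross term collapses to the identity, yielding exactly the two rate sums in the statement. The "delicate" cancellation you flag is resolved trivially in the paper: transmitter 1 simply encodes $f(-\underline{\Wm}_{1})$ and receiver 2 uses the integer combining matrix $[\Id\;\;\Id]$ with scaling $\Fm_{21}^{-1}$, so the DPC quantization lattice points carry integer (unit) coefficients and vanish modulo $\Lambda$, while $[\underline{\Tm}_{1}+\underline{\Tm}_{2}]\bmod\Lambda=f(\underline{\Wm}_{2})$ by lattice linearity.
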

\begin{IEEEproof}
See Appendix~\ref{proof:MIMOCIC}.
\end{IEEEproof}

\begin{theorem}\label{thm:DoFCIC} For the Network-Coded CIC, the sum DoF is equal to $2M$ when all nodes have $M$ multiple antennas.
\end{theorem}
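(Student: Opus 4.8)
The plan is to establish the sum DoF of $2M$ for the MIMO Network-Coded CIC by combining a converse and an achievability argument. For the converse, I would observe that the Network-Coded CIC can be no better than a system with full transmitter cooperation, which is the $2M \times 2M$ MIMO broadcast channel (vector BC) with two $M$-antenna receivers. The sum DoF of that channel is at most $2M$ by the standard cut-set / MIMO point-to-point argument (each receiver sees an $M \times 2M$ channel, so $d_k \le M$, and $d_1 + d_2 \le 2M$ follows trivially; in fact $2M$ is exactly the MIMO BC DoF). Hence $d_{\mathrm{sum}} \le 2M$.

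For the achievability direction, I would invoke Theorem~\ref{thm:RateCIC}, which gives the explicit achievable sum rate
\begin{equation}
R_{\mathrm{sum}} = \sum_{\ell=1}^{M} \log\left(1+\frac{\SNR}{\|\Fm_{11}^{-1}(\ell)\|^2}\right) + \sum_{\ell=1}^{M}\log\left(1+\frac{\SNR}{\|\Fm_{21}^{-1}(\ell)\|^2\|\Vm(\ell)\|^2}\right)
\end{equation}
with $\Vm=\left(\Fm_{21}^{-1}\Fm_{22}-\Fm_{11}^{-1}\Fm_{12}\right)^{-1}$. The key point is that with $\Fm_{jk}$ drawn from a continuous distribution, $\Fm_{11}$, $\Fm_{21}$, and $\Fm_{21}^{-1}\Fm_{22}-\Fm_{11}^{-1}\Fm_{12}$ are almost surely invertible and have bounded entries, so each norm $\|\Fm_{11}^{-1}(\ell)\|^2$, $\|\Fm_{21}^{-1}(\ell)\|^2$, and $\|\Vm(\ell)\|^2$ is a finite positive constant independent of $\SNR$ (more precisely, $\SNR$ here is itself a constant fraction of $P_{\mathrm{sum}}$, as in Theorems~\ref{thm:122}--\ref{thm:222}). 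Therefore each of the $2M$ logarithmic terms scales as $\log \SNR + O(1) = \log P_{\mathrm{sum}} + O(1)$ as $P_{\mathrm{sum}} \to \infty$, giving
\begin{equation}
\lim_{P_{\mathrm{sum}} \to \infty} \frac{R_{\mathrm{sum}}}{\log P_{\mathrm{sum}}} = 2M.
\end{equation}
Combined with the converse, this yields $d_{\mathrm{sum}} = 2M$.

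The main obstacle — really the only nontrivial point — is verifying that the power-normalization constant relating $\SNR$ to $P_{\mathrm{sum}}$ (through the traces $\trace(\Vm_k \Am_k \Am_k^{\herm} \Vm_k^{\herm})$ and the DPC power accounting in the proof of Theorem~\ref{thm:RateCIC}) does not itself vanish or blow up with $P_{\mathrm{sum}}$. Since the precoding/alignment matrices and integer matrices are chosen independently of $P_{\mathrm{sum}}$, this constant is fixed, and the only care needed is to confirm that the DPC transmitter's effective power (after subtracting the dirty-paper term) still scales linearly in $P_{\mathrm{sum}}$ — which it does, because the interference-cancellation coefficient $\alpha_{1,\mathrm{MMSE}}$ is bounded and the channel matrices are fixed. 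Once this is in hand, the DoF claim follows immediately by taking the limit term-by-term; the details are the same bookkeeping already carried out in Appendix~\ref{proof:DoF} for Corollary~\ref{cor:DoF}, so I would simply point to that appendix for the normalization argument and note that here all $2M$ streams achieve a full degree of freedom because DPC removes the non-integer penalty on the $M$ streams of user~1 entirely. A full proof would be placed in an appendix (e.g.\ Appendix~\ref{proof:MIMOCIC}, adjacent to the proof of Theorem~\ref{thm:RateCIC}).
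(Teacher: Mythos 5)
Your proposal is correct and follows essentially the same route as the paper: the paper's proof of Theorem~\ref{thm:DoFCIC} simply observes that the achievable sum rate of Theorem~\ref{thm:RateCIC} consists of $2M$ terms each scaling as $\log\SNR$ (with the channel-dependent norms and the power normalization being constants independent of $P_{\rm sum}$), while the $2M$ upper bound from full cooperation is taken as immediate. Your added care about the $\SNR$-to-$P_{\rm sum}$ normalization and the DPC power accounting is just a more explicit version of the same bookkeeping.
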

\begin{IEEEproof}
The proof is immediately done from Theorem~\ref{thm:RateCIC}.
\end{IEEEproof}

The proofs of Theorems~\ref{thm:122} and~\ref{thm:222} are provided in Sections~\ref{subsec:CoF-SA} and~\ref{subsec:LP}.
Our achievable scheme is based on the extension of the PCoF approach to the MIMO case.
This scheme consists of two phases: 1) Using the CoF framework, we transform the two-user MIMO IC into a {\em deterministic}
finite-field IC. 2) A linear precoding scheme is used over finite-field to eliminate the interferences (see Figs.~\ref{precoding-222}).
The main performance bottleneck of CoF consists of the non-integer penalty, which ultimately limits the performance of
CoF at high SNR \cite{Niesen1}. To overcome this bottleneck, we employ CIA in order to create an ``aligned" channel
matrix for which exact integer forcing is possible, similarly to what was done in Section~\ref{subsec:CoF}.

Specifically,  we use alignment precoding matrices $\Vm_{1}$ and $\Vm_{2}$ at the two transmitters such that
\begin{equation}
\left[
\begin{array}{ccc}
 \Fm_{k1}\Vm_{1} & \Fm_{k2}\Vm_{2} \\
\end{array}
\right] = \Hm_{k}\Cm_{k},
\end{equation}
where $\Hm_{k} \in \CC^{M\times M}$ and $\Cm_{k} \in \ZZ[j]^{M \times 2M-1}$.
Linear precoding over the complex field may produce a power-penalty due to the non-unitary nature
of the alignment matrices, and this can degrade the performance at finite SNR.
In order to counter this effect, we use {\em Integer Forcing Beamforming} (IFB) \cite{Song-ISIT}.
The main idea is that $\Vm_{k}$ can be pre-multiplied (from the right) by some appropriately
chosen full-rank integer matrix $\Am_{k}$ since its effect can be undone by precoding
over $\FF_{q}$, using $[\Am_{k}]_{q}^{-1}$. Then, we can optimize the integer matrix in order to minimize
the power penalty of alignment. The optimization of alignment and IFB in order to obtain good finite SNR performance is
postponed to Section \ref{sec:finite}. The details of the coding scheme are given in the following sections.

\subsection{CoF Framework based on  Channel Integer Alignment} \label{subsec:CoF-SA}

In this section we show how to turn any two-user MIMO IC into a deterministic finite-field IC using the CoF framework.
Consider the MIMO IC in (\ref{model:brn}).
For $k = 1,2$, let $\underline{\Wm}_{\mbox{\tiny{T}}_{k}} = \Sm_{k1}\underline{\Wm}_1 \oplus \Sm_{k2} \underline{\Wm}_2$ denote
the network coded messages at transmitter $k$. We let
$\underline{\wv}_{\mbox{\tiny{T}}_{k,\ell}} \in \FF_{q}^{r}$ denote the $\ell$-th row of $\underline{\Wm}_{\mbox{\tiny{T}}_{k}}$, and we let
$\underline{\Wm}_{\mbox{\tiny{T}}_{1}}, \underline{\Wm}_1$ have dimension $M \times r$ and
$\underline{\Wm}_{\mbox{\tiny{T}}_{2}}, \underline{\Wm}_2$ have dimension $(M - 1) \times r$.
The precoding matrices $\Sm_{k1}, \Sm_{k2}$ over $\FF_q$ will be determined in Section~\ref{subsec:LP}.
We let $\Vm_{1}=[\vv_{1,1}, \ldots, \vv_{1,M}] \in \CC^{M \times M}$ and $\Vm_{2} = [\vv_{2,1}, \ldots, \vv_{2,M-1}]  \in \CC^{M \times (M-1)}$
denote the precoding matrices used at transmitters 1 and 2, respectively, chosen to satisfy the {\em alignment conditions}
\begin{eqnarray}
\Fm_{11}\vv_{1,\ell+1} &=& \Fm_{12}\vv_{2,\ell}\nonumber\\
\Fm_{21}\vv_{1,\ell} &=& \Fm_{22}\vv_{2,\ell}\label{cond:ALI}
\end{eqnarray}
for $\ell=1,\ldots,M-1$. The feasibility of conditions (\ref{cond:ALI}) is shown in \cite{Gou} for any integer $M\geq 2$, almost surely with respect to the
continuously distributed channel matrices $\{\Fm_{jk}\}$.

Let $\Am_{1} \in \ZZ[j]^{M \times M}$ and $\Am_{2} \in \ZZ[j]^{(M-1) \times (M-1)}$ denote full rank integer matrices
(the optimization of which in order to minimize the transmit power penalty is discussed in Section \ref{sec:finite}).
The transmitters make use of the same lattice code $\Lc$ of rate $R$, where $\Lambda$ is chosen such that $\sigma_{\Lambda}^2 =\SNR$.
Then, CoF based on CIA proceeds as follows.

\noindent
{\bf Encoding:}
\begin{itemize}
\item Each transmitter $k$ precodes its messages over $\FF_{q}$ as
\begin{equation}
\underline{\Wm}'_{\mbox{\tiny{T}}_{k}} = [\Am_{k}]_{q}^{-1}\underline{\Wm}_{\mbox{\tiny{T}}_{k}}, \;\;\;\; k=1,2. \label{eq:precoding}
\end{equation}
Then, the precoded messages (rows of $\underline{\Wm}'_{\mbox{\tiny{T}}_{k}}$) are encoded using the nested lattice codes as
$\underline{\tv}'_{k,\ell} = f(\underline{\wv}'_{\mbox{{\tiny T}}_{k,\ell}})$.
Finally, the channel input sequences are given by the rows of
\begin{equation} \label{suca1}
\underline{\Xm}''_{k} =  \Vm_{k}\Am_{k}\underline{\Xm}'_{k},
\end{equation}
where $\underline{\Xm}'_k$ has rows $\underline{\xv}'_{k,\ell} = [\underline{\tv}'_{k,\ell}+\underline{\dv}_{k,\ell}] \mod \Lambda$.
\end{itemize}
Due to the sum-power constraint equal to $P_{\rm{sum}}$ at each transmitter, the second moment of coarse lattice (i.e., $\SNR$) must satisfy
\begin{equation}
\SNR\cdot\trace(\Vm_{k}\Am_{k}\Am_{k}^{\herm}\Vm_{k}^{\herm}) \leq P_{\rm{sum}} \mbox{ for }k=1,2.
\end{equation} Thus, we can choose:
\begin{equation}
\SNR= \min \left \{   \frac{P_{\rm{sum}}}{\trace(\Vm_{k}\Am_{k}\Am_{k}^{\herm}\Vm_{k}^{\herm})} \;\; : \;\; k=1,2 \right \}. \label{eq:power-const}
\end{equation}

\noindent
{\bf Decoding:}
\begin{itemize}
\item  Receiver 1 observes:
\begin{eqnarray}
\underline{\Ym}_{1} &=& \Fm_{11}\underline{\Xm}''_{1}+\Fm_{12}\underline{\Xm}''_{2} + \underline{\Zm}_{1}\\
&\stackrel{(a)}{=}& \underbrace{\Fm_{11}\Vm_{1}}_{\triangleq \Hm_{1}} [
             \begin{array}{cc}
               \Id_{M } & \Cm_{12} \\
             \end{array}]\left[
                                \begin{array}{c}
                                  \Am_{1}\underline{\Xm}'_{1} \\
                                  \Am_{2}\underline{\Xm}'_{2} \\
                                \end{array}
                              \right]
                                  + \underline{\Zm}_{1}\\
&=&\Hm_{1}\Cm_{1}\left[
                                \begin{array}{c}
                                  \underline{\Xm}'_{1} \\
                                  \underline{\Xm}'_{2} \\
                                \end{array}
                              \right]
                                  + \underline{\Zm}_{1} \label{eq:alignedsig1}
\end{eqnarray} where $(a)$ follows from the fact that the precoding vectors satisfy the alignment conditions in (\ref{cond:ALI}) and $\Cm_{1}=[
             \begin{array}{cc}
               \Am_{1} & \Cm_{12}\Am_{2} \\
             \end{array}]$.
\item Similarly, receiver 2 observes the aligned signals:
\begin{eqnarray}
\underline{\Ym}_{2}
&=& \Fm_{21}\underline{\Xm}_{1} + \Fm_{22}\underline{\Xm}_{2} + \underline{\Zm}_{2}\\
&=& \underbrace{\Fm_{21}\Vm_{1}}_{\triangleq \Hm_{2}}[
             \begin{array}{cc}
               \Id_{M} & \Cm_{22} \\
             \end{array}]\left[
                                \begin{array}{c}
                                  \Am_{1}\underline{\Xm}'_{1} \\
                                  \Am_{2}\underline{\Xm}'_{2} \\
                                \end{array}
                              \right]
                                  + \underline{\Zm}_{2}\\
&=& \Hm_{2}\Cm_{2}\left[
                                \begin{array}{c}
                                  \underline{\Xm}'_{1} \\
                                  \underline{\Xm}'_{2} \\
                                \end{array}
                              \right]+\underline{\Zm}_{2}\label{eq:alignedsig2}
\end{eqnarray}where  $\Cm_{2}=[
             \begin{array}{cc}
               \Am_{1} & \Cm_{22}\Am_{2} \\
             \end{array}]$.
\end{itemize}
Notice that the channel matrices in (\ref{eq:alignedsig1}) and (\ref{eq:alignedsig2}) follow the particular form in (\ref{2user-GMAC}). Following the CoF framework in (\ref{cofrate}) and (\ref{cofeq}), if $R\leq R_{\rm{comp}}(\Hm_{k}\Cm_{k},\Bm_{k},\SNR)$, receiver $k$ can decode
the $M$ linear combinations with full-rank integer coefficients matrix $\Bm_{k}$:
\begin{eqnarray}
\underline{\Um}_{k} &=& [\Bm_{k}^{\herm}]_{q}[\Cm_{k}]_{q}\left[
                                          \begin{array}{c}
                                            \underline{\Wm}'_{\mbox{\tiny{T}}_{1}} \\
                                            \underline{\Wm}'_{\mbox{\tiny{T}}_{2}}\\
                                          \end{array}
                                        \right]\\
&=& [\Bm_{k}^{\herm}]_{q}\left[
     \begin{array}{cc}
       [\Am_{1}]_{q} & [\Cm_{k2}]_{q}[\Am_{2}]_{q} \\
     \end{array}
   \right]\left[
                                          \begin{array}{c}
                                            \underline{\Wm}'_{\mbox{\tiny{T}}_{1}} \\
                                            \underline{\Wm}'_{\mbox{\tiny{T}}_{2}} \\
                                          \end{array}
                                        \right]\\
&\stackrel{(a)}{=}&
   [\Bm_{k}^{\herm}]_{q}\left[
     \begin{array}{cc}
       \Id_{M} & [\Cm_{k2}]_{q} \\
     \end{array}
   \right]\left[
                                          \begin{array}{c}
                                            \underline{\Wm}_{\mbox{\tiny{T}}_{1}} \\
                                            \underline{\Wm}_{\mbox{\tiny{T}}_{2}} \\
                                          \end{array}
                                        \right]\label{eq:estMes}
\end{eqnarray} where $(a)$ is due to the precoding over $\FF_{q}$ in (\ref{eq:precoding}).
Let $\hat{\underline{\Wm}}_{\mbox{\tiny{T}}_{1}} = [\Bm_{1}^{\herm}]_{q}^{-1}\underline{\Um}_{1}$ and $\hat{\underline{\Wm}}_{\mbox{\tiny{T}}_{2}}$ denote the first $M-1$ rows of $[\Bm_{2}^{\herm}]_{q}^{-1}\underline{\Um}_{2}$. The mapping between
$\{\underline{\Wm}_{\mbox{\tiny{T}}_{1}}, \underline{\Wm}_{\mbox{\tiny{T}}_{2}}\}$
and $\{\hat{\underline{\Wm}}_{\mbox{\tiny{T}}_{1}} , \hat{\underline{\Wm}}_{\mbox{\tiny{T}}_{2}} \}$  defines a {\em deterministic} finite-field IC given by:
\begin{equation}
\left[
  \begin{array}{c}
    \hat{\underline{\Wm}}_{\mbox{\tiny{T}}_{1}} \\
   \hat{\underline{\Wm}}_{\mbox{\tiny{T}}_{2}} \\
  \end{array}
\right]=\Qm_{{\rm sys}}\left[
  \begin{array}{c}
    \underline{\Wm}_{\mbox{\tiny{T}}_{1}} \\
    \underline{\Wm}_{\mbox{\tiny{T}}_{2}} \\
  \end{array}
\right]\label{eq:fmodel}
\end{equation}
where the {\em system matrix} is defined by
\begin{equation}
\Qm_{{\rm sys}}\eqdef\left[
      \begin{array}{cc}
        \Id_{M} & \Qm_{12} \\
        \Qm_{21} & \Id_{M-1} \\
      \end{array}
    \right]\label{def:Q}
\end{equation}
and where
\begin{equation}
\Qm_{12}  = \left[
                                                                \begin{array}{c}
                                                                  \mbox{0}_{1 \times (M-1)} \\
                                                                  \Id_{M-1} \\
                                                                \end{array}
                                                              \right], \;\;\;\;\;  \Qm_{21} = [
                                 \begin{array}{cc}
                                   \Id_{M-1} & \mbox{0}_{(M-1) \times 1} \\
                                 \end{array}].   \label{Q12-Q21}
\end{equation}
Notice that the system matrix is fixed and independent of the channel matrices, since it is determined only by the alignment conditions.

\subsection{Linear Precoding over deterministic networks}\label{subsec:LP}

In this section we determine linear precoding schemes to eliminate the interferences in the finite-field domain.
Recall that transmitter 2 sends only $M-1$ messages in order to use CIA. Accordingly,
$\Qm_{\rm{sys}}$ in (\ref{eq:fmodel}) has dimension $(2M-1) \times (2M-1)$.

\subsubsection{Network-Coded ICC}

In this model, the source can deliver linear combinations of information messages with coefficients
$\Qm_{{\rm sys}}^{-1}$:
\begin{equation}
\left[
  \begin{array}{c}
    \underline{\Wm}_{\mbox{\tiny{T}}_{1}} \\
     \underline{\Wm}_{\mbox{\tiny{T}}_{2}}\\
  \end{array}
\right]=\Qm_{{\rm sys}}^{-1}\left[
  \begin{array}{c}
    \underline{\Wm}_{1} \\
     \underline{\Wm}_{2}\\
  \end{array}
\right].
\end{equation} We have:

\begin{lemma}\label{lem:Qfull}
The system matrix $\Qm_{{\rm sys}}$ defined in (\ref{def:Q}) is full-rank over $\FF_{q}$.
\end{lemma}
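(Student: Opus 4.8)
The plan is to show directly that $\Qm_{\rm sys}$ has trivial kernel over $\FF_q$. Write a generic vector in the domain as $\left[\begin{array}{c}\uv_1\\\uv_2\end{array}\right]$ with $\uv_1 \in \FF_q^{M}$ and $\uv_2 \in \FF_q^{M-1}$, and suppose $\Qm_{\rm sys}\left[\begin{array}{c}\uv_1\\\uv_2\end{array}\right] = \zerov$. Using the block form (\ref{def:Q}), this is equivalent to the two equations $\uv_1 \oplus \Qm_{12}\uv_2 = \zerov$ and $\Qm_{21}\uv_1 \oplus \uv_2 = \zerov$. From the explicit shapes in (\ref{Q12-Q21}), $\Qm_{12}\uv_2$ is the vector $(0,u_{2,1},\ldots,u_{2,M-1})^\transp$ and $\Qm_{21}\uv_1$ is the vector $(u_{1,1},\ldots,u_{1,M-1})^\transp$. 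The first equation then forces $u_{1,1}=0$ and $u_{1,\ell+1} = -u_{2,\ell}$ for $\ell=1,\ldots,M-1$, while the second equation forces $u_{2,\ell} = -u_{1,\ell}$ for $\ell=1,\ldots,M-1$. Combining: $u_{2,\ell} = -u_{1,\ell}$ for $\ell \le M-1$, and in particular $u_{2,1} = -u_{1,1} = 0$; then $u_{1,2} = -u_{2,1} = 0$, hence $u_{2,2} = -u_{1,2} = 0$, and so on. An induction on $\ell$ propagates the zero up the chain, yielding $\uv_1 = \zerov$ and $\uv_2 = \zerov$.

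Concretely, I would set up the induction as follows. We have shown $u_{1,1}=0$. For the inductive step, assume $u_{1,\ell}=0$ for some $1 \le \ell \le M-1$; then the second equation gives $u_{2,\ell} = -u_{1,\ell} = 0$, and the first equation (the $(\ell+1)$-st coordinate) gives $u_{1,\ell+1} = -u_{2,\ell} = 0$. Thus $u_{1,\ell}=0$ for all $\ell=1,\ldots,M$, and consequently $u_{2,\ell}=0$ for all $\ell=1,\ldots,M-1$. Hence the kernel is trivial, and since $\Qm_{\rm sys}$ is a square $(2M-1)\times(2M-1)$ matrix over the field $\FF_q$, it is full-rank. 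An alternative, slightly slicker route is to observe that $\Qm_{\rm sys}$ is, after a suitable permutation of rows and columns, an upper- (or lower-) triangular matrix with all diagonal entries equal to $1$: the alignment structure chains $u_{1,\ell+1}$ to $u_{2,\ell}$ to $u_{1,\ell}$, so reordering the $2M-1$ coordinates as $u_{1,1}, u_{2,1}, u_{1,2}, u_{2,2}, \ldots, u_{1,M-1}, u_{2,M-1}, u_{1,M}$ exhibits $\Qm_{\rm sys}$ as triangular with unit diagonal, whence $\det \Qm_{\rm sys} = \pm 1 \neq 0$ in $\FF_q$.

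I do not expect any real obstacle here: this lemma is essentially a bookkeeping statement about a fixed, explicitly given integer matrix reduced mod $p$, and the only mild care needed is to track the two interleaved index chains correctly and to confirm that the argument is insensitive to the characteristic (the only scalars appearing are $0$, $1$, and $-1$, so nothing degenerates for any prime $p$). The ``hard'' part, if any, is purely presentational — choosing between the direct kernel computation and the triangularization argument, and making the permutation explicit enough to be convincing without belaboring it. I would go with the triangularization phrasing for brevity, falling back to the explicit induction if a referee wants the permutation spelled out.
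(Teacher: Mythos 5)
Your proof is correct, but it follows a different route from the paper's. The paper disposes of the lemma in one line via the block (Schur-complement) determinant identity, $\det(\Qm_{{\rm sys}}) = \det(\Id_{M})\det(\Id_{M-1}\oplus(-\Qm_{21}\Qm_{12}))$, and then notes that $\Qm_{21}\Qm_{12}$ is the down-shift matrix, so $\Id_{M-1}\oplus(-\Qm_{21}\Qm_{12})$ is lower triangular with unit diagonal and the determinant equals $1$ exactly, over any $\FF_{q}$. You instead show the kernel is trivial by chasing the interleaved chain $u_{1,1}\rightarrow u_{2,1}\rightarrow u_{1,2}\rightarrow\cdots$ forced by the two block equations; this is equally valid in any characteristic (only $0,1,-1$ appear) and avoids invoking the block-determinant formula, at the cost of a short induction. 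Your triangularization variant is the same structural observation repackaged: be aware that relabeling the coordinates alone does not make $\Qm_{{\rm sys}}$ triangular --- you must permute the equations (rows) as well, as your first phrasing (``permutation of rows and columns'') correctly allows --- and it yields $\det(\Qm_{{\rm sys}})=\pm 1$ rather than the paper's exact value $1$, which is immaterial here. Both arguments deliver the same conclusion with comparable effort; the paper's is more compact, while yours is more self-contained and makes explicit the chain structure of (\ref{Q12-Q21}) that also underlies the alignment construction.
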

\begin{IEEEproof}
The determinant of $\Qm_{{\rm sys}}$ is  given by
\begin{eqnarray}
\det(\Qm_{{\rm sys}}) &=& \det(\Id_{M})\det(\Id_{M-1}\oplus(-\Qm_{21}\Qm_{12}))\\
&=&\det(\Id_{M-1}\oplus(-\Qm_{21}\Qm_{12})) = 1,
\end{eqnarray}
since $\Id_{M-1}\oplus(-\Qm_{21}\Qm_{12})$ is a lower triangular matrix with unit diagonal elements.
\end{IEEEproof}
Such precoding yields immediately $\hat{\underline{\Wm}}_{\mbox{\tiny{T}}_{k}} = \underline{\Wm}_k$ for $k = 1,2$.
This proves Theorem~\ref{thm:122} for the Network-Coded ICC.

\subsubsection{$2\times 2\times 2$ IC}

We use the CoF framework based on CIA illustrated in Section \ref{subsec:CoF-SA} in order to turn
each hop (i.e., a two-user MIMO IC) into a deterministic finite-field IC defined
by $\Qm_{{\rm sys}}$ in (\ref{def:Q}).
At the two sources, no precoding is used such that $\underline{\Wm}_{\mbox{\tiny{T}}_{k}} = \underline{\Wm}_{k}$, for $k=1,2$.
Hence, the deterministic finite-field IC corresponding to the first-hop of the $2\times2\times2$ IC network
has outputs $\hat{\underline{\Wm}}_{\mbox{\tiny{T}}_{1}}, \hat{\underline{\Wm}}_{\mbox{\tiny{T}}_{2}}$ related to
$\underline{\Wm}_{1}$ and $\underline{\Wm}_{2}$ by (\ref{eq:fmodel}).

Relays 1 and 2 perform precoding of the decoded linear combination messages such as
$\underline{\Wm}_{\mbox{\tiny{T}}_{3}} = \Mm_{1} \hat{\underline{\Wm}}_{\mbox{\tiny{T}}_{1}}$
and $\underline{\Wm}_{\mbox{\tiny{T}}_{4}} = \Mm_{2}\hat{\underline{\Wm}}_{\mbox{\tiny{T}}_{2}}$,
where the precoding matrices $\Mm_{1}$ and $\Mm_{2}$ are defined in Lemma~\ref{lem:pre}. Operating in a similar way as for the first hop,
the second hop deterministic finite-field IC is given by
\begin{eqnarray}
\left[
                                             \begin{array}{c}
                                              \hat{\underline{\Wm}}_{\mbox{\tiny{T}}_{3}} \\
                                               \hat{\underline{\Wm}}_{\mbox{\tiny{T}}_{4}} \\
                                             \end{array}
                                           \right]&=&\Qm_{{\rm sys}}\left[
                                             \begin{array}{c}
                                               \underline{\Wm}_{\mbox{\tiny{T}}_{3}} \\
                                               \underline{\Wm}_{\mbox{\tiny{T}}_{4}} \\
                                             \end{array}
                                           \right]\label{eq:2hop}.
\end{eqnarray}
Concatenating the two hops, the end-to-end finite-field deterministic network is described by
\begin{equation}
\left[
                                             \begin{array}{c}
                                              \hat{\underline{\Wm}}_{\mbox{\tiny{T}}_{3}} \\
                                               \hat{\underline{\Wm}}_{\mbox{\tiny{T}}_{4}} \\
                                             \end{array}
                                           \right]=\Qm_{{\rm sys}}\left[
                                      \begin{array}{cc}
                                        \Mm_{1} & 0 \\
                                        0 & \Mm_{2} \\
                                      \end{array}
                                    \right]\Qm_{{\rm sys}}\left[
                                             \begin{array}{c}
                                               \underline{\Wm}_{1} \\
                                               \underline{\Wm}_{2} \\
                                             \end{array}
                                           \right].
\end{equation} Lemma \ref{lem:pre} shows that the decoded linear combinations are equal to its desired messages at destination 1 and
are equal to the messages with a change of sign (multiplication by $-1$ in the finite-field) at destination 2 (see Fig.~\ref{precoding-222}). This proved Theorem~\ref{thm:222}.

\begin{figure}
\centerline{\includegraphics[width=14cm]{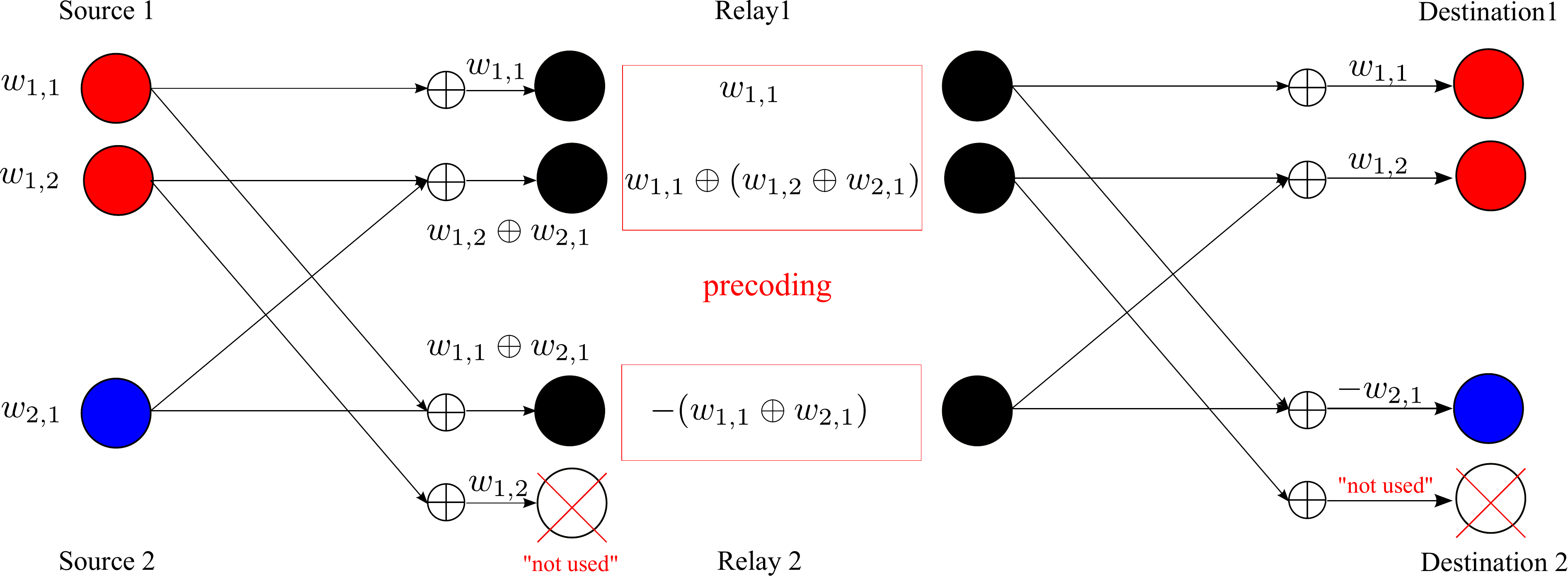}}
\caption{A {\em deterministic} noiseless $2\times 2 \times 2$ finite-field IC.}
\label{precoding-222}
\end{figure}

\begin{lemma}\label{lem:pre}
Choosing precoding matrices $\Mm_{1}$ and $\Mm_{2}$ as
\begin{eqnarray}
\Mm_{1}&=&(\Id_{M}\oplus(-\Qm_{12}\Qm_{21}))^{-1}\label{eq:M1}\\
\Mm_{2}&=& -(\Id_{M-1}\oplus(-\Qm_{21}\Qm_{12}))^{-1}\label{eq:M2}
\end{eqnarray}
the end-to-end system matrix becomes a diagonal matrix:
\begin{eqnarray}
\Qm_{{\rm sys}}\left[
                                      \begin{array}{cc}
                                        \Mm_{1} & 0 \\
                                        0 & \Mm_{2} \\
                                      \end{array}
                                    \right]\Qm_{{\rm sys}}&=& \left[
                                      \begin{array}{cc}
                                         \Mm_{1}\oplus\Qm_{12}\Mm_{2}\Qm_{21}&  \Mm_{1}\Qm_{12}\oplus\Qm_{12}\Mm_{2} \\
                                        \Qm_{21}\Mm_{1}\oplus \Mm_{2}\Qm_{21}& \Qm_{21}\Mm_{1}\Qm_{12}\oplus\Mm_{2}\\
                                      \end{array}
                                    \right]\\
                                     &=& \left[
                                      \begin{array}{cc}
                                        \Id_{M} & 0 \\
                                        0 & -\Id_{M-1} \\
                                      \end{array}
                                    \right].
\end{eqnarray}
\end{lemma}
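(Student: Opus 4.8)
The plan is the following. The first displayed equality in the statement is just the $2\times2$ block product $\Qm_{{\rm sys}}\,\diag(\Mm_{1},\Mm_{2})\,\Qm_{{\rm sys}}$ written out using $\Qm_{{\rm sys}}=\left[\begin{array}{cc}\Id_{M}&\Qm_{12}\\\Qm_{21}&\Id_{M-1}\end{array}\right]$, so it needs no argument; the real content is to show that its four blocks are, respectively, $\Id_{M}$, $\mbox{0}$, $\mbox{0}$, $-\Id_{M-1}$, i.e.\ that $\Mm_{1}\oplus\Qm_{12}\Mm_{2}\Qm_{21}=\Id_{M}$, $\Mm_{1}\Qm_{12}\oplus\Qm_{12}\Mm_{2}=\mbox{0}$, $\Qm_{21}\Mm_{1}\oplus\Mm_{2}\Qm_{21}=\mbox{0}$, and $\Qm_{21}\Mm_{1}\Qm_{12}\oplus\Mm_{2}=-\Id_{M-1}$. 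First I would set $\tilde{\Nm}\eqdef\Qm_{12}\Qm_{21}\in\FF_{q}^{M\times M}$ and $\Nm\eqdef\Qm_{21}\Qm_{12}\in\FF_{q}^{(M-1)\times(M-1)}$. From the explicit forms in (\ref{Q12-Q21}), both $\tilde{\Nm}$ and $\Nm$ are strictly lower triangular (their only nonzero entries sit on the first subdiagonal), hence nilpotent; in particular $\Id_{M}-\tilde{\Nm}$ and $\Id_{M-1}-\Nm$ are invertible, so $\Mm_{1}$ and $\Mm_{2}$ in (\ref{eq:M1})--(\ref{eq:M2}) are well defined and equal the finite Neumann sums $\Mm_{1}=\sum_{i\geq0}\tilde{\Nm}^{i}$ and $-\Mm_{2}=\sum_{i\geq0}\Nm^{i}$.

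The key algebraic fact is the pair of intertwining relations $\tilde{\Nm}\Qm_{12}=\Qm_{12}\Qm_{21}\Qm_{12}=\Qm_{12}\Nm$ and $\Qm_{21}\tilde{\Nm}=\Qm_{21}\Qm_{12}\Qm_{21}=\Nm\Qm_{21}$, which hold by associativity alone. Iterating, $\tilde{\Nm}^{i}\Qm_{12}=\Qm_{12}\Nm^{i}$ and $\Qm_{21}\tilde{\Nm}^{i}=\Nm^{i}\Qm_{21}$ for every $i\geq0$, and summing the Neumann series gives $\Mm_{1}\Qm_{12}=\Qm_{12}(\Id_{M-1}-\Nm)^{-1}=-\Qm_{12}\Mm_{2}$ and $\Qm_{21}\Mm_{1}=(\Id_{M-1}-\Nm)^{-1}\Qm_{21}=-\Mm_{2}\Qm_{21}$. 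These are precisely the two off-diagonal identities $\Mm_{1}\Qm_{12}\oplus\Qm_{12}\Mm_{2}=\mbox{0}$ and $\Qm_{21}\Mm_{1}\oplus\Mm_{2}\Qm_{21}=\mbox{0}$.

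For the two diagonal blocks I would again push the Neumann series through the intertwining relations. On the one hand $\Qm_{12}\Mm_{2}\Qm_{21}=-\sum_{i\geq0}\Qm_{12}\Nm^{i}\Qm_{21}=-\sum_{i\geq0}\tilde{\Nm}^{i}\Qm_{12}\Qm_{21}=-\sum_{i\geq0}\tilde{\Nm}^{i+1}=\Id_{M}-\Mm_{1}$, which rearranges to $\Mm_{1}\oplus\Qm_{12}\Mm_{2}\Qm_{21}=\Id_{M}$. On the other hand $\Qm_{21}\Mm_{1}\Qm_{12}=\sum_{i\geq0}\Qm_{21}\tilde{\Nm}^{i}\Qm_{12}=\sum_{i\geq0}\Nm^{i}\Qm_{21}\Qm_{12}=\sum_{i\geq1}\Nm^{i}=(-\Mm_{2})-\Id_{M-1}$, which rearranges to $\Qm_{21}\Mm_{1}\Qm_{12}\oplus\Mm_{2}=-\Id_{M-1}$. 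This establishes all four blocks and hence the claim. There is no genuine obstacle beyond keeping the indexing of the Neumann sums straight; the only step needing a moment's thought is realizing that everything follows from nilpotency of $\Qm_{12}\Qm_{21}$ and $\Qm_{21}\Qm_{12}$ together with the trivial associativity identities $\tilde{\Nm}\Qm_{12}=\Qm_{12}\Nm$, $\Qm_{21}\tilde{\Nm}=\Nm\Qm_{21}$, rather than from any finer property of the shift structure (equivalently, the top-left identity is an instance of the push-through/matrix-inversion lemma $(\Id-UV)^{-1}=\Id+U(\Id-VU)^{-1}V$ with $U=\Qm_{12}$, $V=\Qm_{21}$).
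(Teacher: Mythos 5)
Your proposal is correct: all four block identities are verified, and the minus signs and indexing of the Neumann sums work out over $\FF_q$. The route, however, differs from the paper's in its mechanics. The paper first invokes the Matrix Inversion Lemma (push-through identity) to rewrite $\Mm_{1}=\Id_{M}\oplus\Qm_{12}(\Id_{M-1}\oplus(-\Qm_{21}\Qm_{12}))^{-1}\Qm_{21}$ and $\Mm_{2}=-(\Id_{M-1}\oplus\Qm_{21}(\Id_{M}\oplus(-\Qm_{12}\Qm_{21}))^{-1}\Qm_{12})$, and then checks the four blocks by direct term-by-term cancellation, mixing the original and rewritten forms of $\Mm_{1},\Mm_{2}$ as convenient; it never uses any special structure of $\Qm_{12},\Qm_{21}$ beyond the existence of the inverses (which for one of them is guaranteed by the unit-triangular argument in Lemma~\ref{lem:Qfull}). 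You instead exploit the specific shift structure: $\Qm_{12}\Qm_{21}$ and $\Qm_{21}\Qm_{12}$ are nilpotent, so the inverses exist and equal finite Neumann sums, and the intertwining relations $\Qm_{12}\Qm_{21}\Qm_{12}=\Qm_{12}\,\Qm_{21}\Qm_{12}$ (associativity) let you push the sums through $\Qm_{12}$ and $\Qm_{21}$ to get all four blocks at once. What your version buys is that well-definedness of $\Mm_{1}$ and $\Mm_{2}$ comes for free (the paper handles this only implicitly), and the computation is arguably more transparent; what it costs is generality, since the argument leans on nilpotency, whereas the paper's proof — as you yourself note in your closing remark about $(\Id-UV)^{-1}=\Id+U(\Id-VU)^{-1}V$ — works verbatim for any $\Qm_{12},\Qm_{21}$ for which the two inverses exist. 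Indeed, the intertwining identity $(\Id\oplus(-\Qm_{12}\Qm_{21}))\Qm_{12}=\Qm_{12}(\Id\oplus(-\Qm_{21}\Qm_{12}))$ would have given you the push-through relations directly, without the series, removing even that restriction.
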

\begin{IEEEproof}
From the Matrix Inversion Lemma \cite[Thm 18.2.8]{Harville}, we can rewrite $\Mm_{1}$ and $\Mm_{2}$ as
\begin{eqnarray}
\Mm_{1} &=&  \Id_{M} \oplus \Qm_{12}(\Id_{M-1}\oplus(-\Qm_{21}\Qm_{12}))^{-1}\Qm_{21}\label{eq:M1'}\\
\Mm_{2} &=& -(\Id_{M-1}\oplus\Qm_{21}(\Id_{M }\oplus(-\Qm_{12}\Qm_{21}))^{-1}\Qm_{12})\label{eq:M2'}.
\end{eqnarray}
\begin{itemize}
\item Canceling the interferences:
\begin{eqnarray*}
 \Mm_{1}\Qm_{12}\oplus\Qm_{12}\Mm_{2}
&=& \Qm_{12} \oplus \Qm_{12}(\Id_{M-1}\oplus( - \Qm_{21}\Qm_{12}))^{-1}\Qm_{21}\Qm_{12}\\
&&\oplus \Qm_{12}(-(\Id_{M-1} \oplus(- \Qm_{21}\Qm_{12}))^{-1})\\
&=& \Qm_{12} \oplus( - \Qm_{12})(\Id_{M-1}\oplus(-\Qm_{21}\Qm_{12}))^{-1}((-\Qm_{21}\Qm_{12}) \oplus \Id_{M-1})\\
&=& \bold{0}_{M\times (M-1)}\\
\Qm_{21}\Mm_{1}\oplus \Mm_{2}\Qm_{21}  &=& \Qm_{21} \oplus \Qm_{21}\Qm_{12}(\Id_{M-1}\oplus(-\Qm_{21}\Qm_{12}))^{-1}\Qm_{21} \\
&&\oplus(- (\Id_{(M-1) \times (M-1)}\oplus(-\Qm_{21}\Qm_{12}))^{-1})\Qm_{21}\\
&=& \Qm_{21}\oplus(\Id_{M-1}\oplus(-\Qm_{21}\Qm_{12}))(\Id_{M-1}\oplus(-\Qm_{21}\Qm_{12}))^{-1}(-\Qm_{21})\\
&=& \bold{0}_{(M-1)\times M},
\end{eqnarray*} where we used $\Mm_{1}$ in (\ref{eq:M1'}) and $\Mm_{2}$ in (\ref{eq:M2}).
\item Preserving the desired signals:
\begin{eqnarray*}
\Mm_{1}\oplus\Qm_{12}\Mm_{2}\Qm_{21} &=&  \Id_{M } \oplus \Qm_{12}(\Id_{M-1}\oplus(-\Qm_{21}\Qm_{12}))^{-1}\Qm_{21}\\
&& \oplus(- \Qm_{12}(\Id_{M-1}\oplus( - \Qm_{21}\Qm_{12}))^{-1}\Qm_{21})\\
&=& \Id_{M},
\end{eqnarray*}where we used $\Mm_{1}$ in (\ref{eq:M1'}) and $\Mm_{2}$ in (\ref{eq:M2}).
\begin{eqnarray*}
 \Qm_{21}\Mm_{1}\Qm_{12}\oplus\Mm_{2} &=& \Qm_{21}(\Id_{M}\oplus(-\Qm_{12}\Qm_{21}))^{-1}\Qm_{12}\oplus( - \Id_{M-1})\\
&& \oplus(- \Qm_{21}(\Id_{M}\oplus(-\Qm_{12}\Qm_{21}))^{-1}\Qm_{12})\\
&=& -\Id_{M-1},
\end{eqnarray*}where we used $\Mm_{1}$ in (\ref{eq:M1}) and $\Mm_{2}$ in (\ref{eq:M2'}).
This completes the proof.
\end{itemize}
\end{IEEEproof}

\subsubsection{Network-Coded CIC}

In this case we assume that transmitter 1 knows  both  messages $\underline{\Wm}_{1}$ and $\underline{\Wm}_{2}$, and transmitter 2 only
knows $M-1$ linear combinations $\Sm_{1}\underline{\Wm}'_{1} \oplus \Sm_{2}\underline{\Wm}_{2}$, where
$\Sm_{1}, \Sm_{2} \in \FF_{q}^{(M-1) \times (M-1)}$ are full-rank matrices and where
$\underline{\Wm}'_{1} = \Qm_{21} \underline{\Wm}_1$ contains the first $M-1$ rows of $\underline{\Wm}_1$ (see the definition of $\Qm_{21}$ in (\ref{Q12-Q21})).
In fact, we may assume that transmitter 2 also knows the interference-free message $\underline{\wv}_{1,M}$ (the last row of $\underline{\Wm}_1$)
but this is not used in our scheme. Transmitters 1 and 2 perform the precoding (over $\FF_{q}$) in the following way:
\begin{eqnarray}
\underline{\Wm}_{\mbox{\tiny{T}}_{1}} &=& \Mm_{1}(\underline{\Wm}_{1}\oplus\Qm_{12}\Sm_{1}^{-1}\Sm_{2}\underline{\Wm}_{2})\\
\underline{\Wm}_{\mbox{\tiny{T}}_{2}} &=& \Mm_{2}\Sm_{1}^{-1}(\Sm_{1}\underline{\Wm}'_{1} \oplus \Sm_{2}\underline{\Wm}_{2})
\end{eqnarray}where $\Mm_{1}$ and $\Mm_{2}$ are defined in Lemma~\ref{lem:pre}.
From (\ref{eq:fmodel}), we have:
\begin{eqnarray}
\left[
  \begin{array}{c}
    \hat{\underline{\Wm}}_{\mbox{\tiny{T}}_{1}} \\
    \hat{\underline{\Wm}}_{\mbox{\tiny{T}}_{2}} \\
  \end{array}
\right]&=&\Qm_{{\rm sys}}\left[
  \begin{array}{c}
    \underline{\Wm}_{\mbox{\tiny{T}}_{1}} \\
    \underline{\Wm}_{\mbox{\tiny{T}}_{2}} \\
  \end{array}
\right]\\
& = &\Qm_{{\rm sys}}\left[
             \begin{array}{cc}
               \Mm_{1} & 0 \\
               0 & \Mm_{2} \\
             \end{array}
           \right]\left[
                    \begin{array}{cc}
                      \Id_{M} & \Qm_{12}\Sm_{1}^{-1}\Sm_{2} \\
                      \Qm_{21} & \Sm_{1}^{-1}\Sm_{2} \\
                    \end{array}
                  \right]
           \left[
  \begin{array}{c}
    \underline{\Wm}_{1} \\
    \underline{\Wm}_{2} \\
  \end{array}
\right]\\
&=&\Qm_{{\rm sys}}\left[
             \begin{array}{cc}
               \Mm_{1} & 0 \\
               0 & \Mm_{2} \\
             \end{array}
           \right]\Qm_{{\rm sys}}\left[
                       \begin{array}{cc}
                         \Id_{M} & 0 \\
                         0 & \Sm_{1}^{-1}\Sm_{2} \\
                       \end{array}
                     \right]\left[
  \begin{array}{c}
    \underline{\Wm}_{1} \\
    \underline{\Wm}_{2} \\
  \end{array}
\right]\\
&\stackrel{(a)}{=}& \left[
                       \begin{array}{cc}
                         \Id_{M} & 0 \\
                         0 & -\Sm_{1}^{-1}\Sm_{2} \\
                       \end{array}
                     \right]\left[
  \begin{array}{c}
    \underline{\Wm}_{1} \\
    \underline{\Wm}_{2} \\
  \end{array}
\right]
\end{eqnarray}
where $(a)$ follows from Lemma~\ref{lem:pre}.
This shows that the decoded linear combinations are equal to its desired messages at receiver 1 and are equal to the
messages with multiplication by full-rank matrix $(-\Sm_{1}^{-1}\Sm_{2})^{-1}$ (in the finite-field domain) at receiver 2.
Based on this, Theorem~\ref{thm:122} is proved for the Network-Coded CIC.

\section{Improving the sum rates using successive cancellation}\label{sec:SIC}

In this section we improve the sum rate in Theorem~\ref{thm:122} by using CoF with successive cancellation.
We focus on Network-Coded ICC to explain the proposed scheme.
As shown before, precoding of information messages (over $\FF_{q}$) can eliminate interference
from the other transmitter so that each receiver observes full-rank integer linear combinations of its own
intended lattice codewords in the corresponding MIMO modulo $\Lambda$ channel.
Once the network is reduced to two decoupled MIMO modulo $\Lambda$ channels,
each receiver can perform {\em successive cancellation} following the idea first proposed in \cite{Ordentlich}.
In this way, each message can be recovered reliably at rate equal to the computation rate of the corresponding
equation, without being constrained by the equal rate requirement (minimum of the computation rates of all equations).

In order to achieve the different coding rates while preserving the lattice $\ZZ[j]$-module structure, we use a {\em family} of nested lattice codes $\Lambda \subseteq \Lambda_{2M-1} \subseteq \cdots \subseteq \Lambda_{1}$, obtained by a nested construction A as described in \cite[Sect. IV.B]{Nazer}. In particular, we let $\Lambda_{\ell} = p^{-1}g(\Cc_{\ell})\Tm + \Lambda$ with $\Lambda = \ZZ^{n}[j]\Tm$ and with $\Cc_{\ell}$ denoting the linear code over $\FF_{q}$ generated by the first $r_{\ell}$ rows of a common generator matrix $\Gm$, with $r_{2M-1} \leq \cdots \leq r_{1}$. The corresponding nested lattice codes are given by $\Lc_{\ell} = \Lambda_{\ell} \cap \Vc_{\Lambda}$ for $\ell=1,\ldots,2M-1$. Let $\underline{\wv}_{k,\ell} \in \FF_{q}^{r_{1}}$ be the zero-padded message to the common length $r_{1}$.

Encoding follows the same procedure outlined in Section~\ref{sec:twoMIMO}. Namely, we let
\begin{equation}
\left[
\begin{array}{ccc}
\underline{\Wm}_{\mbox{\tiny{T}}_{1}}   \\
\underline{\Wm}_{\mbox{\tiny{T}}_{2}}
\end{array}
\right]=\Qm_{{\rm sys}}^{-1}\left[
\begin{array}{ccc}
\underline{\Wm}_{1}   \\
\underline{\Wm}_{2}
\end{array}
\right] \label{eq:prec1-sec5}
\end{equation}
in order to eliminate interference. Recall that each transmitter $k$ precodes its messages over $\FF_{q}$ as in (\ref{eq:precoding})
where the integer matrix $\Am_k$ is used for IFB with the purpose of minimizing the power penalty (see later).
Then, the precoded messages are encoded using the densest lattice code
$\Lc_{1}$ as $\underline{\tv}'_{\ell,k} = f(\underline{\wv}'_{\mbox{\tiny{T}}_{k,\ell}})$.
Finally, the channel input sequences are given by the rows of  $\underline{\Xm}''_{k}$ defined in (\ref{suca1}).
%
Let $\underline{\tv}_{k,\ell} = f(\underline{\wv}_{k,\ell})$ denote the lattice codeword corresponding to information message $\underline{\wv}_{k,\ell}$.
Using lattice linearity, we can express the precoding in the complex (lattice) domain as:
\begin{equation}
\left[
  \begin{array}{c}
   \underline{\Tm}'_{1} \\
    \underline{\Tm}'_{2} \\
  \end{array}
\right]=\left[
                     \begin{array}{cc}
                       \Am_{1}^{-1} & 0 \\
                       0 & \Am_{2}^{-1}  \\
                     \end{array}
                   \right]g(\Qm_{{\rm sys}}^{-1})
\left[
  \begin{array}{c}
    \underline{\Tm}_{1} \\
    \underline{\Tm}_{2} \\
  \end{array}
\right] \mod \Lambda. \label{eq:relation}
\end{equation}
From  (\ref{eq:alignedsig1}) and (\ref{eq:alignedsig2}), each receiver $k$ observes the integer aligned signals:
\begin{eqnarray}
\underline{\Ym}_{k} &=&\Hm_{k}\Cm_{k}\left[
                                            \begin{array}{c}
                                              \underline{\Xm}'_{1} \\
                                              \underline{\Xm}'_{2} \\
                                            \end{array}
                                          \right] + \underline{\Zm}_{k}\\
&=&\Hm_{k}\left[
                               \begin{array}{cc}
                                 \Id_{M } & \Cm_{k2} \\
                               \end{array}
                             \right]\left[
                                     \begin{array}{cc}
                                       \Am_{1} & 0 \\
                                       0 & \Am_{2}\\
                                     \end{array}
                                   \right]\left[
                                            \begin{array}{c}
                                              \underline{\Xm}'_{1} \\
                                              \underline{\Xm}'_{2} \\
                                            \end{array}
                                          \right] + \underline{\Zm}_{k}.
\end{eqnarray}
The modulo $\Lambda$ vector channel after applying the CoF receiver mapping
(\ref{eq:cof}) with integer coefficients matrix $\Bm_{k}$ seen at each receiver $k = 1,2$ is given as follows:
\begin{itemize}
\item  At receiver 1 we have:
\begin{eqnarray}
\hat{\underline{\Ym}}_{1}&=&\left[\Bm_{1}^{\herm}\left[
                               \begin{array}{cc}
                                 \Id_{M} & \Cm_{12} \\
                               \end{array}
                             \right]
\left[
                                     \begin{array}{cc}
                                       \Am_{1} & 0 \\
                                       0 & \Am_{2}\\
                                     \end{array}
                                   \right]\left[
  \begin{array}{c}
   \underline{\Tm}'_{1} \\
    \underline{\Tm}'_{2} \\
  \end{array}
\right]
+\underline{\Zm}_{\mbox{\tiny{eff}}}(\Hm_{1}\Cm_{1},\Bm_{1})\right] \mod \Lambda \nonumber \\
&\stackrel{(a)}{=}& \left[\Bm_{1}^{\herm}\left[
                               \begin{array}{cc}
                                 \Id_{M} & \Cm_{12} \\
                               \end{array}
                             \right]g(\Qm_{{\rm sys}}^{-1})\left[
  \begin{array}{c}
    \underline{\Tm}_{1} \\
    \underline{\Tm}_{2} \\
  \end{array}
\right]+\underline{\Zm}_{\mbox{\tiny{eff}}}(\Hm_{1}\Cm_{1},\Bm_{1})\right] \mod \Lambda\\
&\stackrel{(b)}{=}&\left[\Bm_{1}^{\herm}\underline{\Tm}_{1}+\underline{\Zm}_{\mbox{\tiny{eff}}}(\Hm_{1}\Cm_{1},\Bm_{1})\right] \mod \Lambda \label{eq:MIMO1}
\end{eqnarray}
where $(a)$ follows the (\ref{eq:relation}), $(b)$ is due to the fact that
\begin{equation}
\left[\left[
                               \begin{array}{cc}
                                 \Id_{M} & \Cm_{12} \\
                               \end{array}
                             \right]\right]_{q}\Qm_{{\rm sys}}^{-1} = \left[
                         \begin{array}{cc}
                           \Id_{M} & \textbf{0}_{M \times M-1} \\
                         \end{array}
                       \right].
\end{equation}
and where $\underline{\Zm}_{\mbox{\tiny{eff}}}(\Hm_{1}\Cm_{1},\Bm_{1})$ denotes the $M \times n$ matrix of effective noises
with rows $\underline{\zv}_{\mbox{\tiny{eff}}}(\Hm_1\Cm_1,\bv_{1,\ell},\alphav_{1,\ell})$, and the projection
vector $\alphav_{1,\ell}$ is determined as a function of $\Hm_1\Cm_1,\bv_{1,\ell}$ as said in Section \ref{subsec:CoF}.
\item Similarly, at receiver 2 we have:
\begin{eqnarray}
\hat{\underline{\Ym}}_{2}&=&\left[\Bm_{2}^{\herm}\left[
                               \begin{array}{cc}
                                 \Id_{M} & \Cm_{22} \\
                               \end{array}
                             \right]\left[
                                     \begin{array}{cc}
                                       \Am_{1} & 0 \\
                                       0 & \Am_{2}\\
                                     \end{array}
                                   \right]\left[
  \begin{array}{c}
    \underline{\Tm}'_{1} \\
    \underline{\Tm}'_{2} \\
  \end{array}
\right]
+\underline{\Zm}_{\mbox{\tiny{eff}}}(\Hm_{2}\Cm_{2},\Bm_{2})\right] \mod \Lambda \nonumber \\
&=& \left[\Bm_{2}^{\herm}\left[
                               \begin{array}{cc}
                                 \Id_{M} & \Cm_{22} \\
                               \end{array}
                             \right]g(\Qm_{{\rm sys}}^{-1})\left[
  \begin{array}{c}
    \underline{\Tm}_{1} \\
    \underline{\Tm}_{2} \\
  \end{array}
\right]+\underline{\Zm}_{\mbox{\tiny{eff}}}(\Hm_{2}\Cm_{2},\Bm_{2})\right] \mod \Lambda\\
&\stackrel{(a)}{=}&\left[\Bm_{2}^{\herm}\left[
                  \begin{array}{c}
                    \underline{\Tm}_{2} \\
                    \underline{\tv} \\
                  \end{array}
                \right]+\underline{\Zm}_{\mbox{\tiny{eff}}}(\Hm_{2}\Cm_{2},\Bm_{2})\right] \mod \Lambda,   \label{wakka}
\end{eqnarray}
where $\underline{\tv}$ denotes some linear combination of lattice codewords,
irrelevant for receiver 2, $(a)$ follows from the fact that
\begin{equation}
\left[\left[
                               \begin{array}{cc}
                                 \Id_{M} & \Cm_{22} \\
                               \end{array}
                             \right]\right]_{q}\Qm_{{\rm sys}}^{-1} = \left[
                          \begin{array}{cc}
                            \textbf{0}_{M-1\times M-1} & \Id_{M-1} \\
                             \star
                             \\
                          \end{array}
                        \right]
\end{equation} and where $\star$ denotes some non-zero vector in $\FF_{q}^{1 \times (2M-1)}$. In (\ref{wakka}),
$\underline{\Zm}_{\mbox{\tiny{eff}}}(\Hm_{2}\Cm_{2},\Bm_{2})$ is defined similarly to
$\underline{\Zm}_{\mbox{\tiny{eff}}}(\Hm_{1}\Cm_{1},\Bm_{1})$.
Receiver 2 can recover its $M-1$ messages as long as it has $M-1$ full-rank linear combinations of its own messages.
In order to remove the unintended messages collected in $\underline{\tv}$, we choose $\Bm_{2}$ in the form:
\begin{equation}
\Bm_{2}^{\herm} = \left[
            \begin{array}{cc}
              \tilde{\Bm}_{2}^{\herm} & \zerov
            \end{array}
          \right]
\end{equation}
where $\tilde{\Bm}_{2} \in \ZZ[j]^{(M-1) \times (M-1)}$ is full-rank.
Then, the first $M-1$ observations of receiver 2 is given by
\begin{equation}
\hat{\underline{\Ym}}'_{2} = \left[\tilde{\Bm}_{2}^{\herm} \underline{\Tm}_{2}+\underline{\Zm}_{\mbox{\tiny{eff}}}(\Hm_{2}\Cm_{2},\Bm_{2})\right] \mod \Lambda. \label{eq:MIMO2}
\end{equation}
\end{itemize}
From (\ref{eq:MIMO1}) and (\ref{eq:MIMO2}), we have that each receiver obtains a full-rank interference-free MIMO integer valued modulo $\Lambda$ channel with effective additive noise. At this point,  each receiver can perform {\em successive cancellation} \cite{Ordentlich}, thus relaxing the minimum common
computation rate constraint. Focusing on receiver 1, we illustrate the successive cancellation procedure with given integer matrix
$\Bm_{1}$ and computation rates $\{\log^+(\SNR/\sigma^{2}_{\mbox{\tiny{eff}},1,\ell}) \; : \; \ell=1,\ldots,M\}$.
The same procedure can be straightforwardly applied to receiver 2, given the formal equivalence of (\ref{eq:MIMO1}) and (\ref{eq:MIMO2}).
Without loss of generality, assume that
\begin{equation}
\sigma^{2}_{\mbox{\tiny{eff}},1,1} \leq \cdots \leq \sigma^{2}_{\mbox{\tiny{eff}},1,M}.
\end{equation}
Letting $R_{1,\ell}$ denote the rate of $\ell$-th message of user 1, we have $R_{1,\ell} = r_{j_{1,\ell}}$ for some $j_{1,\ell} \in \{1,\ldots,2M-1\}$, i.e., the $\ell$-th message of user 1 is encoded using nested lattice codes $\Lc_{j_{1,\ell}}$. For the time being, we assume that
$R_{1,1} \geq R_{1,2} \geq \cdots \geq R_{1,M}$ (the ordering will be determined later on, according to column permutation of $\Bm_{1}$ that is required for successive cancellation). Receiver 1 can reliably decode $\underline{\sv}_{1} = [\bv_{1,1}^{\herm}\underline{\Tm}_{1}] \mod \Lambda$ as long as
\begin{equation}
R_{1,1} \leq \log^+\left (\frac{\SNR}{\sigma^{2}_{\mbox{\tiny{eff}},1,1}} \right ).
\end{equation}
Then, it proceeds to decode $\underline{\sv}_{2} = [\bv_{1,2}^{\herm}\underline{\Tm}_{1}] \mod \Lambda$. Using the previously decoded $\underline{\sv}_{1}$, it can perform the cancellation:
\begin{eqnarray}
[\hat{\underline{\yv}}_{2} + e_{21}\underline{\sv}_{1}] \mod \Lambda &=& [\underline{\sv}_{2} + e_{21}\underline{\sv}_{1} + \underline{\zv}_{\mbox{\tiny{eff}}}(\Hm_{1}\Cm_{1},\bv_{1,2}, \alpha_{1,2})] \mod \Lambda\\
&=&[\tilde{\underline{\sv}}_{2} + \underline{\zv}_{\mbox{\tiny{eff}}}(\Hm_{1}\Cm_{1},\bv_{1,2}, \alpha_{1,2})] \mod \Lambda.
\end{eqnarray} Here, $e_{21} \in \ZZ[j]$ is chosen so that $[(\bv_{1,2}(1)+e_{21}\bv_{1,1}(1))] \mod p\ZZ[j] = 0$ where $\bv(j)$ denotes the $j$-th element of vector $\bv$. Then $\tilde{\underline{\sv}}_{2}$ does not include $\underline{\tv}_{1}$ and hence  receiver 1 can reliably decode $\tilde{\underline{\sv}}_{2}$ as long as
\begin{equation}
R_{1,2} \leq \log^+ \left (\frac{\SNR}{\sigma^{2}_{\mbox{\tiny{eff}},1,2}} \right ).
\end{equation}
Now, receiver 1 can obtain $\underline{\sv}_{2}$ such as $\underline{\sv}_{2} = [\tilde{\underline{\sv}}_{2} - e_{21}\underline{\sv}_{1}] \mod \Lambda$.
Receiver 1 can decode the remaining linear combinations $\underline{\sv}_{\ell}$ for $\ell \geq 3$ in a similar manner. Namely, before decoding $\underline{\sv}_{\ell}$, receiver 1 adds $\left[\sum_{j=1}^{\ell-1} e_{\ell j}\underline{\sv}_{j}\right]\mod\Lambda$ (i.e., an integer valued linear combinations of previously decoded $\underline{\sv}_{j}$'s). Here the coefficients $e_{\ell j}$ are chosen so that the impact of $\underline{\tv}_{1},\ldots,\underline{\tv}_{\ell-1}$ is canceled out from $\underline{\sv}_{\ell}$. Assuming that such coefficients exist, receiver 1 can decode $\tilde{\underline{\sv}}_{\ell} = \left[\underline{\sv}_{\ell} + \sum_{j=1}^{\ell-1} e_{\ell j}\underline{\sv}_{j} \right]\mod\Lambda$
as long as $R_{1,\ell}$ is less than the corresponding computation rate of the $\ell$-th equation.

From \cite[Lemma 2]{Ordentlich}, such cancellation coefficients exist for at least one column permutation
vector $\pi_{1}$ of $\Bm_{1}$. Accordingly, all $M$ linear combinations can be decoded as long as
\begin{equation}
R_{1,\pi_{1}(\ell)} \leq \log^+\left (\frac{\SNR}{\sigma^{2}_{\mbox{\tiny{eff}},1,\ell}} \right )\;\;\;\ \mbox{for } \ell=1,\ldots,M.
\end{equation}
Therefore, the sum rate $\sum_{\ell=1}^{M} \log^+\left (\frac{\SNR}{\sigma^{2}_{\mbox{\tiny{eff}},1,\ell}} \right )$
is achievable.  Similarly, there exists at least one column permutation vector $\pi_{2}$ of $\Bm_{2}$ for which all
$M-1$ linear combinations at receiver 2 can be decoded as long as
\begin{equation}
R_{2,\pi_{2}(\ell)} \leq \log^+ \left (\frac{\SNR}{\sigma^{2}_{\mbox{\tiny{eff}},2,\ell}} \right )\;\;\;\ \mbox{for } \ell=1,\ldots,M-1,
\end{equation}
where we let $R_{2,\ell} = r_{j_{2,\ell}}$, for some index mapping $j_{2,\ell} \in \{1, \ldots, 2M-1\}$,
denote the rate of $\ell$-th message of user 2.  The exactly same procedure can be applied to the Network-Coded CIC.
The successive cancellation replaces the sum-rate formula in Theorem~\ref{thm:122} as follows:
\begin{corollary}\label{thm:122sum}
For the Network-Coded ICC and Network-Coded CIC, PCoF with CIA can achieve sum rate as
\begin{equation}
R_{\rm sum} =
\sum_{\ell=1}^M \log^+ \left ( \frac{\SNR}{\sigma^{2}_{\mbox{\tiny{eff}},1,\ell}} \right ) +
\sum_{\ell=1}^{M-1} \log^+ \left (\frac{\SNR}{\sigma^{2}_{\mbox{\tiny{eff}},2,\ell}} \right )
\end{equation}
for any full rank integer matrices $\Am_{1} \in \ZZ[j]^{M \times M}, \Am_{2} \in \ZZ[j]^{(M-1) \times (M-1)}$, $\Bm_{1} \in \ZZ[j]^{M \times M},
\tilde{\Bm}_{2} \in \ZZ[j]^{(M-1) \times (M-1)}$, and any alignment precoding matrices $\Vm_{k}$ to satisfy
the {\em alignment conditions} in (\ref{cond:ALI}), where
\begin{eqnarray}
\Bm_{2}^{\herm} &=& \left[
            \begin{array}{cc}
              \tilde{\Bm}_{2}^{\herm} & \zerov
            \end{array}
          \right]\\
\Hm_{k} &=& \Fm_{k1}\Vm_{1}, \Cm_{k}=\left[
\begin{array}{ccc}
\Am_{1}  & \Cm_{k2}\Am_{2}
\end{array}
\right], \;\;\; k=1,2\\
\SNR&=&\min_{k=1,2}\left\{\frac{P_{\rm{sum}}}{\trace{\left(\Vm_{k}\Am_{k}\Am_{k}^{\herm}\Vm_{k}^{\herm}\right)}}\right\},
\end{eqnarray} and where
\begin{eqnarray*}
\sigma^{2}_{\mbox{\tiny{eff}},k,\ell}
&=& \bv_{k,\ell}^{\herm}\Cm_k(\SNR^{-1}\Id+\Cm_k^{\herm}\Hm_k^{\herm}\Hm_k\Cm_k)^{-1}\Cm_k^{\herm}\bv_{k,\ell}, \;\;\; k = 1,2.
\end{eqnarray*}

\hfill \IEEEQED
\end{corollary}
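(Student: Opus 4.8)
The plan is to obtain Corollary~\ref{thm:122sum} by combining the interference removal already carried out in Section~\ref{sec:SIC} with the successive-cancellation technique of \cite{Ordentlich}. The starting observation is that, after PCoF with CIA and the finite-field precoding (by $\Qm_{{\rm sys}}^{-1}$ for the Network-Coded ICC, or by the matrices $\Mm_1,\Mm_2,\Sm_1^{-1}\Sm_2$ of Section~\ref{subsec:LP} for the Network-Coded CIC), the interference is eliminated over $\FF_q$, so that applying the CoF receiver mapping~(\ref{eq:cof}) each receiver sees a full-rank, interference-free MIMO modulo-$\Lambda$ channel in its own lattice codewords: for receiver~1, $\hat{\underline{\Ym}}_1 = [\Bm_1^\herm\underline{\Tm}_1 + \underline{\Zm}_{\mbox{\tiny{eff}}}(\Hm_1\Cm_1,\Bm_1)]\mod\Lambda$ as in~(\ref{eq:MIMO1}), and, after taking $\Bm_2^\herm=[\tilde{\Bm}_2^\herm\ \zerov]$ so that the unintended combination $\underline{\tv}$ in~(\ref{wakka}) is killed, $\hat{\underline{\Ym}}'_2 = [\tilde{\Bm}_2^\herm\underline{\Tm}_2 + \underline{\Zm}_{\mbox{\tiny{eff}}}(\Hm_2\Cm_2,\Bm_2)]\mod\Lambda$ as in~(\ref{eq:MIMO2}). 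To let the $2M-1$ messages carry distinct rates while preserving the $\ZZ[j]$-module structure that makes the $\FF_q$-precoding commute with lattice encoding, I would encode them over a chain of nested lattice codes $\Lambda\subseteq\Lambda_{2M-1}\subseteq\cdots\subseteq\Lambda_1$ built by the nested Construction~A of \cite{Nazer}, with rates $r_1\ge\cdots\ge r_{2M-1}$; lattice linearity then keeps relation~(\ref{eq:relation}) valid across rates.

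Next I would run the successive-cancellation decoder of \cite{Ordentlich} at each receiver. For receiver~1, order the equations so that $\sigma^2_{\mbox{\tiny{eff}},1,1}\le\cdots\le\sigma^2_{\mbox{\tiny{eff}},1,M}$ and decode $\underline{\sv}_\ell=[\bv_{1,\pi_1(\ell)}^\herm\underline{\Tm}_1]\mod\Lambda$ for $\ell=1,\ldots,M$ in turn; before decoding the $\ell$-th equation, add the integer combination $[\sum_{j<\ell}e_{\ell j}\underline{\sv}_j]\mod\Lambda$ of the already recovered equations, with Gaussian-integer coefficients $e_{\ell j}$ chosen so that the not-yet-decoded codewords are cancelled, producing a $\tilde{\underline{\sv}}_\ell$ decodable at its own computation rate $\log^+(\SNR/\sigma^2_{\mbox{\tiny{eff}},1,\ell})$, after which $\underline{\sv}_\ell$ is recovered by subtracting the same combination. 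By \cite[Lemma~2]{Ordentlich} such coefficients exist for at least one column permutation $\pi_1$ of $\Bm_1$; hence, assigning the $\ell$-th message of user~1 rate $R_{1,\pi_1(\ell)}=r_{j_{1,\ell}}$, all $M$ messages are decoded whenever $R_{1,\pi_1(\ell)}\le\log^+(\SNR/\sigma^2_{\mbox{\tiny{eff}},1,\ell})$, and saturating these constraints gives the partial sum $\sum_{\ell=1}^M\log^+(\SNR/\sigma^2_{\mbox{\tiny{eff}},1,\ell})$. The same argument applied to~(\ref{eq:MIMO2}) yields $\sum_{\ell=1}^{M-1}\log^+(\SNR/\sigma^2_{\mbox{\tiny{eff}},2,\ell})$ for user~2, and adding the two gives the claimed $R_{\rm sum}$.

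It remains to collect the hypotheses. The alignment matrices $\Vm_k$ must satisfy~(\ref{cond:ALI}), which is feasible almost surely for every $M\ge2$ by \cite{Gou}; the integer matrices $\Am_1\in\ZZ[j]^{M\times M}$, $\Am_2\in\ZZ[j]^{(M-1)\times(M-1)}$, $\Bm_1\in\ZZ[j]^{M\times M}$, $\tilde{\Bm}_2\in\ZZ[j]^{(M-1)\times(M-1)}$ must be full rank so that the $\FF_q$-precoding~(\ref{eq:precoding}) and the CoF mappings are invertible and $\Qm_{{\rm sys}}$ is well defined and invertible over $\FF_q$; and the per-transmitter sum-power constraint forces $\SNR=\min_{k=1,2}P_{\rm sum}/\trace(\Vm_k\Am_k\Am_k^\herm\Vm_k^\herm)$ exactly as in~(\ref{eq:power-const}). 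For the Network-Coded CIC, the only difference from the ICC is the $\FF_q$-precoding used to neutralize interference (Section~\ref{subsec:LP}); since after it each receiver again faces the decoupled channels~(\ref{eq:MIMO1})--(\ref{eq:MIMO2}), the successive-cancellation argument transfers verbatim.

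The step I expect to be the crux is the appeal to \cite[Lemma~2]{Ordentlich}: one has to verify that, for the specific integer matrices $\Bm_k$ and the effective-noise ordering produced by CIA, there is a column permutation of $\Bm_k$ along which the cancellation coefficients $e_{\ell j}$ exist at every stage --- equivalently, that the $\FF_q$-image $[\Bm_k^\herm]_q$ can be put in the triangular position required for sequential cancellation. A secondary point needing care is that zeroing the last column of $\Bm_2^\herm$ removes precisely the unwanted codeword $\underline{\tv}$ of~(\ref{wakka}) while leaving $\tilde{\Bm}_2$ full rank, which follows routinely from the structure of $[[\Id_M\ \Cm_{22}]]_q\Qm_{{\rm sys}}^{-1}$ displayed in Section~\ref{sec:SIC}.
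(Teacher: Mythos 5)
Your proposal matches the paper's own proof, which is exactly the development of Section~\ref{sec:SIC}: the same nested-lattice family, the same reduction via $\Qm_{{\rm sys}}^{-1}$ (or the CIC precoding of Section~\ref{subsec:LP}) and the choice $\Bm_2^{\herm}=[\tilde{\Bm}_2^{\herm}\ \zerov]$ to reach the decoupled modulo-$\Lambda$ channels (\ref{eq:MIMO1})--(\ref{eq:MIMO2}), and the same appeal to \cite[Lemma~2]{Ordentlich} for the column permutation and cancellation coefficients, giving the sum of per-equation computation rates. One small wording slip: the coefficients $e_{\ell j}$ are chosen to cancel the \emph{higher-rate} codewords $\underline{\tv}_1,\ldots,\underline{\tv}_{\ell-1}$ (those whose rates exceed the $\ell$-th computation rate) from $\underline{\sv}_\ell$, not the ``not-yet-decoded'' ones; with that read correctly, the argument is the paper's.
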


\section{Optimization of achievable rates}\label{sec:finite}

In this section we optimize the integer matrices $\Am_{k}$ and $\Bm_{k}$ in Theorems \ref{thm:122}-\ref{thm:222} and Corollary~\ref{thm:122sum} by assuming that the precoding matrices $\Vm_{k}$ are given. The dimensions of $\Am_{k}$ and $\Bm_{k}$ can be either $M\times M$ or $(M-1)\times (M-1)$,
depending on $k$. Since this does not change the optimization problem, we will drop the index $k$ and just consider dimension $M$.
The power-penalty optimization with respect to $\Am$ takes on the form:
\begin{eqnarray}
\argmin && \trace\left(\Vm\Am\Am^{\herm}\Vm^{\herm}\right)=\sum_{\ell=1}^{M}\|\Vm\av_{\ell}\|^2\nonumber\\
\mbox{subject to}&& \mbox{$\Am$ is full rank over $\ZZ[j]$}\label{eq:opt1}
\end{eqnarray} where $\av_{\ell}$ denotes the $\ell$-th column of $\Am$.
Also, the minimization of the effective noise variance with respect to $\Bm$ takes on the form:
\begin{eqnarray}
\argmin && \max _{\ell}\left \{ \| \Lm \bv_{\ell} \|^2 \right \}\nonumber\\
\mbox{subject to}&& \mbox{$\Bm$ is full rank over $\ZZ[j]$} \label{eq:opt2}
\end{eqnarray}
where $\Lm$ denotes a square-root factor of $(\SNR^{-1}\Id+\Hm^{\herm}\Hm)^{-1}$,
$\Hm$ denotes an aligned channel matrix and $\bv_{\ell}$ denotes the $\ell$-th column of $\Bm$.

We notice that problem (\ref{eq:opt1}) (resp., (\ref{eq:opt2})) is equivalent to finding a reduced basis for the lattice generated by
$\Vm$ (resp., $\Lm$). In particular, the reduced basis takes on the form $\Vm\Um$ where $\Um$ is a unimodular matrix
over $\ZZ[j]$. Hence, choosing $\Am = \Um$ yields the minimum power-penalty subject to the full rank condition in (\ref{eq:opt1}).
In practice we used the (complex) LLL algorithm \cite{Napias}~\footnote{We can also use the HKZ and Minkowski lattice basis reduction algorithm (see \cite{Viterbo} for details).}, with refinement of the LLL reduced basis approximation by Phost or Schnorr-Euchner lattice search \cite{Damen}. We let $\uv_{\ell}$ denote the $\ell$-th column of $\Um$. Phost or Schnorr-Euchner enumeration generates all non-zero lattice
points in a sphere centered at the origin, with radius equal to $d=\max_{\ell}\|\Vm\uv_{\ell}\|^{2}$. This radius guarantees the existence
of solutions because of having the trivial solution $\{\uv_{1},\ldots,\uv_{M}\}$.
Define the set of integer vectors such that the corresponding lattice
points are in a sphere with radius $d$ by
\begin{equation}
(\Ac,\Wc)=\{(\av_{\ell},w_{\ell}): w_{\ell}=\|\Vm\av_{\ell}\|^2 \leq d\}.\label{def:set}
\end{equation}
The following lemma shows that the greedy algorithm (Algorithm 1) finds a solution (i.e., $M$ linearly independent integer vectors)
to the problem (\ref{eq:opt1}) (or (\ref{eq:opt2})).

\begin{lemma} For given $(\Ac,\Wc)$ defined in (\ref{def:set}), Algorithm 1 finds a solution to the following two problems:
\begin{eqnarray}
\min_{\Sc \subset \Ac} && \sum_{\ell \in \Sc} w_{\ell} \\ 
\mbox{subject to} && \{\av_{\ell}: \ell \in \Sc\} \mbox{ are linearly independent}\\
&& |\Sc| = M.
\end{eqnarray}
and
\begin{eqnarray}
\min_{\Sc \subset \Ac} && \max\{w_{\ell}:\ell \in \Sc\} \\
\mbox{subject to} && \{\av_{\ell}: \ell \in \Sc\} \mbox{ are linearly independent}\\
&& |\Sc| = M.
\end{eqnarray}
\end{lemma}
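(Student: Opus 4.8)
The plan is to recognize that the feasibility constraint in both optimization problems is a matroid independence constraint, and then to invoke the classical optimality of the greedy algorithm for matroids. Let $\Mc$ be the linear (vector) matroid on the ground set $\Ac$ whose independent sets are exactly the subsets $\Sc\subseteq\Ac$ for which $\{\av_\ell:\ell\in\Sc\}$ is linearly independent over $\CC$. First I would observe that $\Mc$ has rank $M$: the columns $\uv_1,\dots,\uv_M$ of the unimodular matrix $\Um$ are linearly independent and satisfy $\|\Vm\uv_\ell\|^2\le d=\max_\ell\|\Vm\uv_\ell\|^2$, so $\{\uv_1,\dots,\uv_M\}\subseteq\Ac$ is an independent set of size $M$, while no independent set can exceed $M$ since the $\av_\ell$ lie in $\CC^M$. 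Hence every basis of $\Mc$ has cardinality exactly $M$, the feasible set of both problems is nonempty, and ``feasible $\Sc$'' is synonymous with ``$\{\av_\ell:\ell\in\Sc\}$ is a basis of $\Mc$''.

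For the first problem ($\min\sum_{\ell\in\Sc}w_\ell$) this is precisely the minimum-weight basis problem for $\Mc$ with weights $w_\ell=\|\Vm\av_\ell\|^2$. Algorithm~1 processes the elements of $\Ac$ in nondecreasing order of $w_\ell$ and adjoins each one to the running set whenever linear independence is preserved, stopping after $M$ vectors have been collected; this is exactly the matroid greedy algorithm, so by the Rado--Edmonds theorem it returns a basis of minimum total weight, which settles the claim.

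For the second problem ($\min\max\{w_\ell:\ell\in\Sc\}$) I would use the stronger fact that the greedy basis $B=\{\av_\ell:\ell\in\Sc\}$ output by Algorithm~1 satisfies $|B\cap\Ac_{\le t}|=\mathrm{rank}_{\Mc}(\Ac_{\le t})$ for every threshold $t$, where $\Ac_{\le t}=\{\av_\ell:w_\ell\le t\}$. This holds because after the greedy has scanned the prefix consisting of all elements of weight $\le t$, the vectors it has selected form a maximal independent subset of $\Ac_{\le t}$ (an element skipped for being dependent on the current set stays dependent, since the current set only grows inside its own span), and every later addition has weight strictly larger than $t$. Consequently $\max_{\ell\in\Sc}w_\ell=\min\{t:\mathrm{rank}_{\Mc}(\Ac_{\le t})=M\}$. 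For any other feasible $\Sc'$ one has $|\{\av_\ell:\ell\in\Sc'\}\cap\Ac_{\le t}|\le\mathrm{rank}_{\Mc}(\Ac_{\le t})$, hence $\max_{\ell\in\Sc'}w_\ell=\min\{t:|\{\av_\ell:\ell\in\Sc'\}\cap\Ac_{\le t}|=M\}\ge\min\{t:\mathrm{rank}_{\Mc}(\Ac_{\le t})=M\}=\max_{\ell\in\Sc}w_\ell$, which is the desired optimality for the bottleneck objective.

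The only delicate points are verifying that $\Mc$ really has rank $M$ --- immediate from the presence of the columns of $\Um$ in $\Ac$, which is exactly the ``trivial solution'' already noted in the construction of $(\Ac,\Wc)$ --- and, for the bottleneck objective, establishing the level-by-level property that greedy accumulates the lightest attainable independent set at every rank. The latter is the main step, but it is a standard consequence of the matroid exchange axiom and follows from the argument sketched above; everything else is a direct application of the Rado--Edmonds theorem.
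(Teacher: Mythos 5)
Your proof is correct and follows essentially the same route as the paper: the Rado--Edmonds matroid greedy theorem for the sum objective, and for the bottleneck objective the observation that the greedy output is the minimum-weight prefix of the sorted list containing $M$ linearly independent vectors (your rank-of-prefix argument just makes the paper's brief claim explicit, and your rank-$M$ feasibility check via the columns of $\Um$ matches the paper's remark about the trivial solution).
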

\begin{IEEEproof}
The first problem consists of the minimization of linear function subject to a matroid constraint, where the matroid $\Mc(\Omega,\Ic)$ is defined by the ground set $\Omega=[1:|\Ac|]$ and by the collection of independent sets $\Ic=\{\Sc \subset \Omega: \{\av_{\ell}: \ell \in \Sc\} \mbox{ are linearly independent}\}$. Rado and Edmonds \cite{Rado, Edmonds} proved that a greedy algorithm (Algorithm 1) finds an optimal solution. In case of the second problem, we provide a simple proof as follows. Suppose that the indices of elements in $\Ac$ are rearranged according to the increasing ordering of the weights $w_{\ell}$. The problem is then reduced to finding the minimum index $\ell^{\dag}$ such that $\{\av_{1},\ldots,\av_{\ell^{\dag}}\}$ includes the $M$ linearly independent vectors. This is precisely what Algorithm 1 does.
\end{IEEEproof}

\begin{algorithm}
\caption{The Greedy Algorithm}
\textbf{Input}: $(\Ac,\Phi)=\{(\av_{\ell},w_{\ell}): \av_{\ell} \in \ZZ[j]^{M\times 1}, w_{\ell} \in \ZZ_{+}\}$\newline
\textbf{Output}: $\Sc \subset \Ac$ with $|\Sc|=M$
\begin{enumerate}
\item Rearrange the indices of vectors in $\Ac$ such that $w_{1} \leq w_{2} \leq \cdots \leq w_{|\Ac|}$
\item Initially, $\ell=1$ and $\Sc=\phi$
\item If $\mbox{Rank}(\Sc\cup\{\ell\}) > \mbox{Rank}(\Sc)$ then $\Sc \leftarrow \Sc \cup \{\ell\}$
\item Set $\ell=\ell+1$
\item Repeat 3)-4) until $|\Sc|=M$
\end{enumerate}
\end{algorithm}

\subsection{Finite SNR Results}

\begin{figure}
\centerline{\includegraphics[width=14cm]{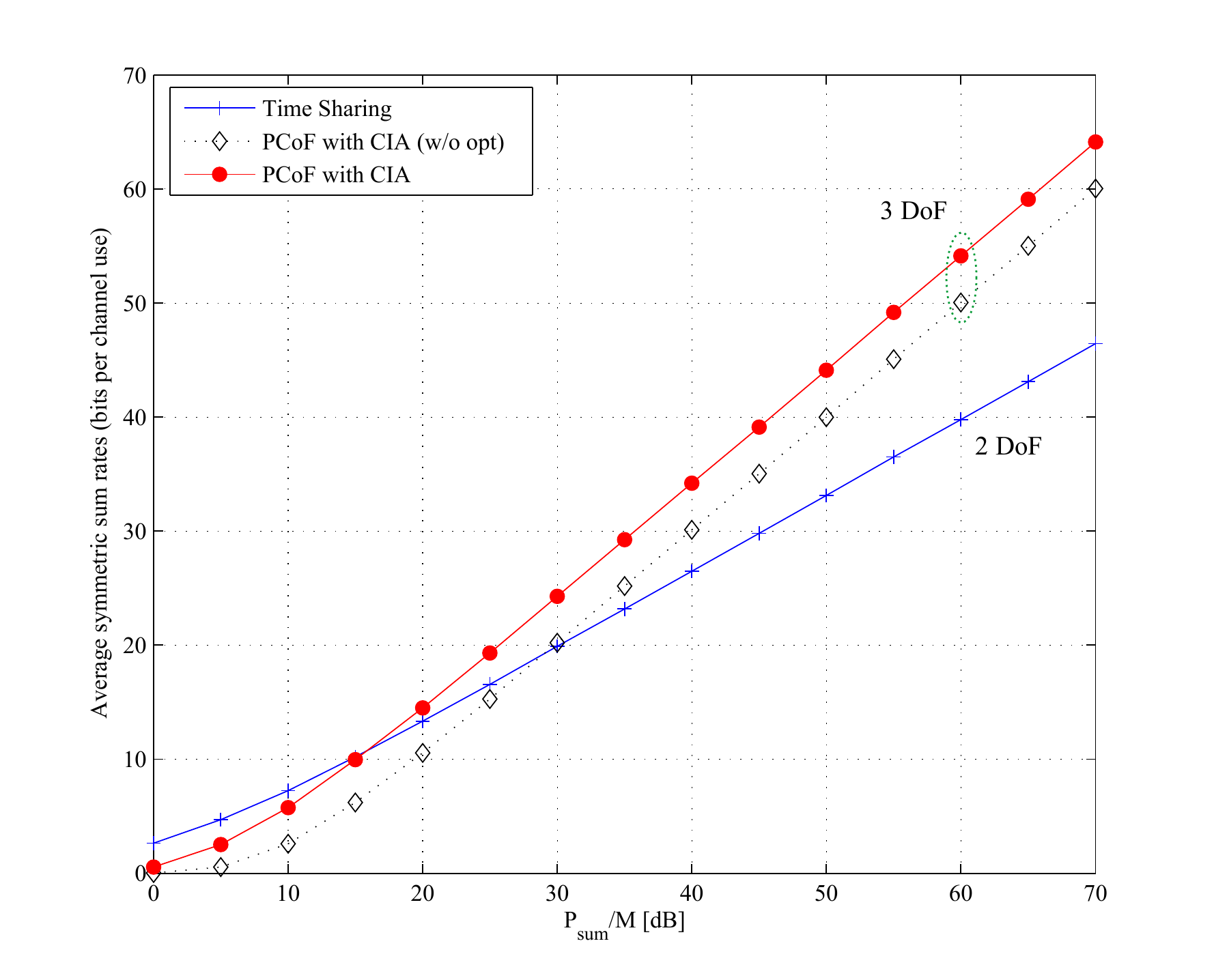}}
\caption{Performance comparison of  PCoF with CIA  and time-sharing with respect to ergodic symmetric sum rates for $2\times 2\times 2$ MIMO interference channel with $M=2$.}
\label{simulation}
\end{figure}
\begin{figure}
\centerline{\includegraphics[width=14cm]{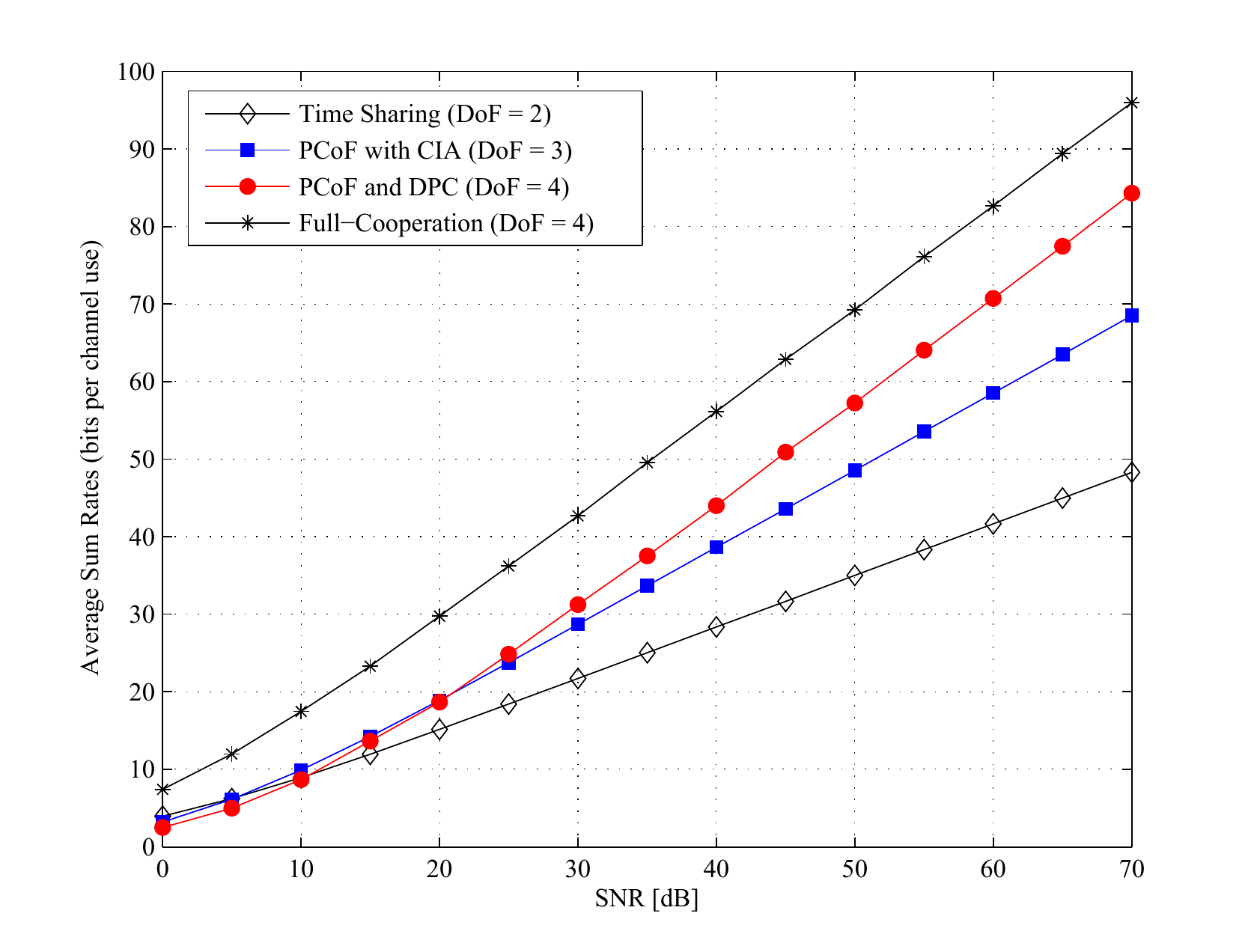}}
\caption{Performance comparison of PCoF with CIA, PCoF and DPC, and time-sharing with respect to ergodic symmetric sum rates for MIMO interference coordination channel with $M=2$.}
\label{simulation_sum}
\end{figure}
We evaluate the performance of PCoF with CIA in terms of its average achievable sum rates.
We computed the ergodic sum rates by Monte Carlo averaging with respect to the channel realizations  with i.i.d. elements $\sim \Cc\Nc(0,1)$. Recall that we consider a total power constraint equal to $P_{\rm{sum}}$ at each transmitter (both sources and relays). We first consider the symmetric sum rates in Theorem~\ref{thm:222} for $2\times 2\times 2$ Gaussian IC. For comparison, we considered the performance of {\em time-sharing} where IFR is used for each $M \times M$ MIMO IC. We used the IFR since it is known to almost achieve the performance of joint maximum likelihood receiver \cite{Zhan} and has a similar complexity with PCoF. In this case, an achievable symmetric sum rate of time-sharing is obtained as

\begin{equation}
R=\min\{R_{\rm{comp}}(\Fm_{11},\Bm_{1},\SNR), R_{\rm{comp}}(\Fm_{33},\Bm_{2},\SNR)\}
\end{equation}for any full-rank matrices $\Bm_{1}, \Bm_{2} \in \ZZ[j]^{M \times M}$, where $\SNR = P_{\rm{sum}}$.
Here, we used  $2P_{\rm{sum}}$ for power constraint since with time-sharing each transmitter is active on only half of the time slots.
For PCoF with CIA, we need to find precoding matrices for satisfying the alignment condition in (\ref{cond:ALI}).
For $M=2$, the conditions are given by
\begin{eqnarray*}
\Fm_{11}\vv_{1,2} = \Fm_{12}\vv_{2,1} \mbox{ and } \Fm_{21}\vv_{1,1} = \Fm_{22}\vv_{2,1}.
\end{eqnarray*}
For the simulation, we used the following precoding matrices to satisfy the above conditions:
\begin{eqnarray}
\Vm_{1} &=& \left[
              \begin{array}{cc}
                \Fm_{21}^{-1}\Fm_{22}\onev & \Fm_{11}^{-1}\Fm_{12}\onev \\
              \end{array}
            \right]
\mbox{ and } \vv_{2,1} =\onev.\label{eq:constM2}
\end{eqnarray}
Also, the same construction is used for the second hop.
We used the complex LLL algorithm to optimize integer matrices, yielding lower bound on achievable rate in Theorem~\ref{thm:222}.
Since source 1 (or relay 1) transmits one more stream than source 2 (or relay 2), the former always requires higher transmission power.
In order to efficiently satisfy the average power-constraint, the role of sources 1 and 2 (equivalently, relays 1 and 2) is alternatively reversed in successive time slots.
In Fig.~\ref{simulation}, we observe that PCoF with CIA can have the SNR gain about $5$ dB by optimizing the integer matrices for IFR and IFB,
comparing with simply using identity matrices. Also, PCoF with CIA provides a higher sum rate
than time-sharing if $\SNR \geq 15$ dB, and its gain over time-sharing increases with $\SNR$, showing that in this case the DoF
result matters also at finite SNR.

In addition, we evaluate the performance of the Network-Coded ICC with respect to sum rates.
The achievable sum rate is given in Corollary\ref{thm:122sum}. For comparison, we considered the performance
of {\em time-sharing} where the achievable sum rate is equal to the capacity of the individual (interference free)
MIMO channel with full CSI at both transmitter and receiver, given by
\begin{equation}
R_{{\rm sum}}=\sum_{\ell=1}^{M}\log(1+P_{\ell}\lambda_{\ell}(\Fm_{kk}^{\herm}\Fm_{kk}))
\end{equation}
where $P_{\ell}$ is obtained via water-filling over the eigenvalues of $\Fm_{kk}^{\herm}\Fm_{kk}$,
denoted by $\lambda_{\ell}(\Fm_{kk}^{\herm}\Fm_{kk})$, such as
\begin{equation}
P_{\ell} = \left[\mu-\frac{1}{\lambda_{\ell}(\Fm_{kk}^{\herm}\Fm_{kk})}\right]^{+}
\end{equation}
with $\mu$ is chosen to satisfy the total power constraint $\sum_{\ell=1}^{M}P_{\ell} = 2P_{\rm{sum}}$. Again, with time-sharing
the per-slot power constraint is $2P_{\rm{sum}}$.
For PCoF with CIA, we used the same construction method in (\ref{eq:constM2}) for $M=2$. Also, in case of PCoF and DPC, we used the achievable sum-rates in Theorem~\ref{thm:RateCIC}. Fig.~\ref{simulation_sum} shows that PCoF and DPC provides a higher sum rate than PCoF with CIA and time-sharing, having a larger gap as SNR increases. Also, PCoF with CIA shows the satisfactory performance in the moderate SNRs (i.e., $\SNR \leq 30$ dB). Yet, this scheme suffers from the non-integer penalty at high SNRs. Remarkably, PCoF and DPC performs within a constant gap with respect to full-cooperation at any SNR.


\section{Concluding Remarks}

In this work we have studied two-user Gaussian networks with cognition, coordination, and two hops. We first investigated a cognitive interference channel (CIC) where one of the transmitters (non-cognitive) has knowledge of a rank-1 linear combination of the two information messages, while the other transmitter (cognitive) has access to a rank-2 linear combination of the same messages. This is referred to as the Network-Coded CIC, since such linear combination may be the result of some random linear network coding scheme implemented in the backbone wired network. For such channel, we developed an achievable region based on a few novel concepts: Precoded Compute and Forward (PCoF) with Channel Integer Alignment (CIA), combined with standard Dirty-Paper Coding. We also developed a capacity region outer bound and found the sum symmetric Generalized Degrees of Freedom (GDoF) of the Network-Coded CIC. Through the GDoF characterization, we showed that knowing ``mixed data" (linear combination of the information messages) provides an unbounded spectral efficiency gain over the classical CIC counterpart. Then, we considered a Gaussian relay network having two-user MIMO IC as the fundamental building block. We used PCoF with CIA to convert the MIMO IC into a deterministic finite-field IC. Then, we applied a linear precoding scheme over the finite-field to eliminate the interferences in the finite-field domain. Using this unified approach, we provided finite-SNR results (not just degrees of freedom) which show that the proposed coding schemes are competitive against the state-of-the-art interference avoidance based on orthogonal access, for standard randomly generated Rayleigh fading channels.

In conclusion, we showed that exploiting an algebraic structure is beneficial to manage an interference for some two-user Gaussian networks. It would be an interesting future work to extend this approach (e.g., using an algebraic structure) into general discrete memoryless channels and multihop multiflow networks.


\appendices

\section{Proof of Theorem \ref{thm:GDoFregion}}\label{proof:GDoF}

\subsection{Converse}
For given rates $R_{1}$ and $R_{2}$, we define $R_{{\rm min}} = \min\{R_{1},R_{2}\}$ and $R_{\triangle} = \max\{R_{1},R_{2}\} - R_{{\rm min}}$. If $R_{1} > R_{2}$ then $W_{1}=[W_{11},W_{12}=W_{\triangle}]$ and $W_{2}=[W_{21},W_{22}=\zerov]$. In the reverse case, we have that $W_{1}=[W_{11},W_{12}=\zerov]$
and $W_{2}=[W_{21},W_{22}=W_{\triangle}]$. Notice that $H(W_{11})=H(W_{21}) = R_{{\rm min}}$ and $H(W_{\triangle}) = R_{\triangle}$.
In both cases, the non-cognitive transmitter knows the linear combination, $W_{1}\oplus W_{2} = \left[
                                                                        \begin{array}{cc}
                                                                          W_{11}\oplus W_{21} & W_{\triangle} \\
                                                                        \end{array}
                                                                      \right]$.
From the well-known Crypto Lemma, the $W_{11} \oplus W_{21}$ is mutually statistically independent of $W_{11}$, as well as $W_{11} \oplus W_{21}$ is mutually statistically independent of $W_{21}$.  
First, we derive the upper bound on the minimum rate $R_{{\rm min}}$ equal to $H(W_{11})$ and $H(W_{21})$:
\begin{eqnarray}
n R_{{\rm min}} &=& H(W_{11}) = H(W_{11} | W_{11} \oplus W_{21}, W_{\triangle})\\
&=& H(W_{11}|W_{11} \oplus W_{21}, W_{\triangle})- H(W_{11}|Y_{1}^{n}, W_{11}\oplus W_{21}, W_{\triangle})\\
&&+ H(W_{11}| Y_{1}^{n}, W_{11} \oplus W_{21},W_{\triangle})\\
&\stackrel{(a)}{\leq}& I(W_{11};Y_{1}^{n}|W_{11} \oplus W_{21},W_{\triangle})+n\epsilon_{n}\\
&=& h(Y_{1}^{n}|W_{11}\oplus W_{21},W_{\Delta}) - h(Y_{1}^{n}|W_{11}\oplus W_{21}, W_{\Delta}, W_{11})+n\epsilon_{n}\\
&\stackrel{(b)}{=}& h(Y_{1}^{n}|X_{2}^{n},W_{11}\oplus W_{21},M_{\Delta}) - h(Y_{1}^{n}|X_{1}^{n},X_{2}^{n},W_{11}\oplus W_{21},W_{\Delta}, W_{11})+n\epsilon_{n}\\
&\leq& h(Y_{1}^{n}|X_{2}^{n}) - h(Y_{1}^{n}|X_{1}^{n},X_{2}^{n},W_{11}\oplus W_{21},W_{\Delta}, W_{11})+n\epsilon_{n}\\
&\stackrel{(c)}{=} & h(Y_{1}^{n}|X_{2}^{n}) - h(Y_{1}^{n}|X_{1}^{n},X_{2}^{n})+n\epsilon_{n}\\
&=& I(X_{1}^{n};Y_{1}^{n}|X_{2}^{n}) + n\epsilon_{n}\\
&\leq& n\log(1+|h_{11}|^2\SNR)+n\epsilon_{n}  \label{song1}
\end{eqnarray}where (a) follows from the Fano's inequality and data processing inequality as
\begin{equation}
H(W_{11}|Y_{1}^{n},W_{11} \oplus W_{21},W_{\triangle}) \leq H(W_{11}|Y_{1}^{n})\leq H(W_{11}|\hat{W}_{11})\leq n\epsilon_{n},
\end{equation} (b) follows from the fact that encoder 2 has $(W_{11}\oplus W_{21},W_{\Delta})$, therefore for any coding scheme $X_{2}^{n}$ is a function of $(W_{11}\oplus W_{21},W_{\Delta})$, and (c) follows from the fact that there is a Markov chain
\[ (W_{11}, W_{21}, W_\triangle) \rightarrow (X_1^n,X_2^n) \rightarrow Y_1^n. \]
In the same manner, we get:
\begin{eqnarray}
n R_{{\rm min}} &=& H(W_{21}) = H(W_{21}|W_{11}\oplus W_{21},W_{\triangle})\\
&=& H(W_{21}|W_{11} \oplus W_{21},W_{\triangle}) - H(W_{21}|Y_{2}^{n},W_{11} \oplus W_{21},W_{\triangle}) \nonumber \\
& & + H(W_{21}|Y_{2}^{n}, W_{11}\oplus W_{21},W_{\triangle})\\
&\leq& I(W_{21};Y_{2}^{n}|W_{11} \oplus W_{21},W_{\triangle}) + n\epsilon_{n}\\
& \leq & h(Y_{2}^{n}|X_{2}^{n}) - h(Y_{2}^{n}|X_{1}^{n},X_{2}^{n}) + n \epsilon_{n}\\
&=& I(X_{1}^{n};Y_{2}^{n}|X_{2}^{n}) + n\epsilon_{n}\\
&\leq& n\log(1+|h_{21}|^2\INR) + n\epsilon_{n}. \label{song2}
\end{eqnarray}
From (\ref{song1}) and (\ref{song2}), we have
\begin{equation}\label{eq:upper1}
R_{{\rm min}} \leq \min\{\log(1+|h_{11}|^2\SNR), \log(1+|h_{21}|^2\INR)\}.
\end{equation}
An obvious upper bound on $R_{1}$ and $R_{2}$ are given by
\begin{eqnarray}
n R_{1} &\leq&
I(X_{1}^{n},X_{2}^{n};Y_{k}^{n}) + n\epsilon_{n}\\
&\leq&  n\log(1+|h_{1 1}|^2\SNR + |h_{1 2}|^2\INR) +n\epsilon_{n}\label{eq:bound1}\\
n R_{2} &\leq&
I(X_{1}^{n},X_{2}^{n};Y_{2}^{n}) + n\epsilon_{n}\\
&\leq&  n\log(1+|h_{2 1}|^2\INR + |h_{2 2}|^2\SNR) +n\epsilon_{n}\label{eq:bound2}.
\end{eqnarray} Using (\ref{eq:bound1}), (\ref{eq:bound2}), and $\INR=\SNR^{\rho}$, we have:
\begin{eqnarray}
d_{1}(\rho) &=& \lim_{\SNR \rightarrow \infty} \frac{R_{1}}{\log\SNR} \leq \max\{1,\rho\}\\
d_{2}(\rho) &=& \lim_{\SNR \rightarrow \infty} \frac{R_{2}}{\log\SNR} \leq \max\{1,\rho\}.
\end{eqnarray}  Also, from (\ref{eq:bound1}) and (\ref{eq:bound2}), we have:
\begin{eqnarray}\label{eq:upper2}
R_{{\rm max}} &\leq& \max\{\log(1+|h_{11}|^2\SNR+|h_{12}|^2\INR),\nonumber\\
&&\log(1+|h_{21}|^2\INR+|h_{22}|^2\SNR)\}.
\end{eqnarray}
Using (\ref{eq:upper1}), (\ref{eq:upper2}), and $\INR = \SNR^{\rho}$, we have the upper bounds in the asymptotic case:
\begin{eqnarray}
\lim_{\SNR \rightarrow \infty}\Big(\frac{R_{{\rm min}}}{\log\SNR} + \frac{R_{{\rm max}}}{\log\SNR}\Big)&\leq& \min\{1,\rho\}+\max\{1,\rho\},
\end{eqnarray}
yielding the upper bound on the sum symmetric GDoF as
\begin{equation}
d_{\mbox{\tiny{sum}}}(\rho)=\lim_{\SNR \rightarrow \infty} \frac{R_{{\rm min}}+R_{{\rm max}}}{\log\SNR} \leq 1 + \rho.
\end{equation}

\subsection{Achievable scheme}

We will present coding schemes to achieve two corner points $(d_{1}(\rho),d_{2}(\rho)) = (1,\rho)$ and $(d_{1}(\rho),d_{2}(\rho)) = (\rho,1)$. One coding scheme achieves the first corner point (see Section~\ref{subsec:proof1}) and the other two coding schemes achieve the second corner point $(d_{1}(\rho),d_{2}(\rho)) = (\rho,1)$ depending on the interference level $\rho$ (see Section~\ref{subsec:proof2} for $\rho < 1$ and see Section~\ref{subsec:proof3} for $\rho \geq 1$).

\subsubsection{$(d_{1}(\rho),d_{2}(\rho)) = (1,\rho)$}\label{subsec:proof1}

We use the achievable rates given in Theorem \ref{thm:DPC}. It is immediately shown that the achievable GDoF of message 1 (cognitive user), obtained by
\begin{equation}\label{eq:GDoF1}
d_{1}(\rho) = \lim_{\SNR \rightarrow \infty} \frac{\log(1+|h_{11}|^2\SNR)}{\log\SNR} = 1.
\end{equation} In this proof, we show that message 2 (non-cognitive user) achieves GDoF equal to $\rho$,
by carefully choosing the power scaling factor $\beta \in \Pc$. The effective channel for Scaled PCoF is given by
$\tilde{\hv}(\beta) = [h_{21}\sqrt{\INR}, \beta(h_{22}\sqrt{\SNR}-\alpha_{1,\mbox{\tiny{MMSE}}}(h_{12}h_{21}/h_{11})\SNR^{\rho-\frac{1}{2}})]$ and can be rewritten as
\begin{equation}
\tilde{\hv}(\beta) = \SNR^{\rho/2}[h_{21},\beta\tilde{h}_{22}]
\end{equation} where $\tilde{h}_{22} = h_{22}\SNR^{(1-\rho)/2} - h \SNR^{(\rho - 1)/2}$ and $h=\alpha_{1,\mbox{\tiny{MMSE}}}(h_{12}h_{21}/h_{11})$.
Here, we choose
$\beta=\beta^{\star} \triangleq h_{21}/(\tilde{h}_{22}\gamma)$, where $\gamma \geq 1$ is an integer with
$\gamma=\lceil|h_{21}/\tilde{h}_{22}|\rceil \in \ZZ_{+}$.
This produces a kind of ``aligned" channel:
\begin{equation}
\tilde{\hv}= \SNR^{\rho/2}[h_{21},h_{21}/\gamma].
\end{equation}
Letting $b_{1}=\gamma$, and $b_{2}=1$, the effective noise in (\ref{eq:enoise}) is obtained by
\begin{eqnarray}
\zv_{\mbox{\tiny{eff}}}(\tilde{\hv},\bv) = \frac{\gamma}{h_{21}\SNR^{\rho/2}}\underline{\zv}_{2},
\end{eqnarray}{\BLUE and accordingly, its variance is given by
\begin{equation}
\sigma_{{\rm eff}}^2(\beta^{\star}) = \frac{\gamma^2}{|h_{21}|^2\SNR^{\rho}}.\label{eq:effnoise_proof}
\end{equation}}
This shows that non-integer penalty is completely eliminated. Also, we can use the zero forcing precoding over $\FF_{q}$ since the chosen integer coefficients $b_{1}=\gamma$ and $b_{2}=1$ are non-zero. {\BLUE Plugging (\ref{eq:effnoise_proof}) and transmit power 1 into (\ref{eq:R2_proof}), we have the lower bound on the achievable rate of Scaled PCoF:
\begin{equation}
\max_{\beta}R_{2}(\beta) \geq R_{2}(\beta^{*}) =\log\left(\frac{1}{\sigma_{{\rm eff}}^2(\beta^{\star})}\right)= \rho\log(|h_{21}|^2\SNR) - 2\log(\gamma).
\end{equation}} The lower and upper bounds on $\gamma$ is given by
\begin{eqnarray}
1\leq \gamma \leq 1+\left|\frac{h_{21}}{h_{22}\SNR^{(1-\rho)/2}-h\SNR^{(\rho-1)/2}}\right|
\end{eqnarray} where $\gamma$ converges to a constant as $\SNR \rightarrow \infty$. Finally, the achievable GDoF of the non-cognitive transmitter is derived as
\begin{equation}\label{eq:GDoF2}
d_{2}(\rho)  \geq \lim_{\SNR,\INR \rightarrow \infty}\frac{R_{2}(\beta^{*})}{\log\SNR}= \rho.
\end{equation}From (\ref{eq:GDoF1}) and (\ref{eq:GDoF2}), the proposed scheme can achieve the corner point $(d_{1}(\rho),d_{2}(\rho))=(1,\rho)$.

\subsubsection{$(d_{1}(\rho),d_{2}(\rho) = (\rho,1)$, $\rho < 1$}\label{subsec:proof2}

Since $R_{1} < R_{2}$, the user messages have the following form:
\begin{eqnarray}
\underline{\wv}_{1} = [\underline{\wv}_{11}, \underline{\wv}_{12}=0]\mbox{ and }\underline{\wv}_{2} = [\underline{\wv}_{21}, \underline{\wv}_{22}],
\end{eqnarray} where notice that $R_{1} = R_{11}$, $R_{2}=R_{21}+R_{22}$, and $R_{11}=R_{21}$. Accordingly, transmitter 1 knows $\underline{\wv}_{11}$, $\underline{\wv}_{21}$, and $\underline{\wv}_{22}$, and transmitter 2 knows $\underline{\wv}_{11} \oplus \underline{\wv}_{21}$ and $\underline{\wv}_{22}$. We let $\bv=[b_{1},b_{2}] \in \ZZ[j]^{2}$ denote the integer coefficients vector used at receiver 2 for the CoF receiver mapping (\ref{eq:cof}), and we let $q_{k} = [b_{k}]_{q}$.
Again, it is assumed that  $q_{1},q_{2} \neq 0$ over $\FF_q$. The proposed achievable scheme proceeds as follows:
\begin{itemize}
\item Transmitter 2 produces the lattice codewords $\underline{\vv}_{21} = f(\underline{\wv}_{11} \oplus \underline{\wv}_{21})$ and $\underline{\vv}_{22} = f(\underline{\wv}_{22})$. Then, it transmits the channel input:
\begin{equation*}
\underline{\xv}_{2} = \sqrt{ \SNR^{\rho -1}} \underline{\xv}_{21}+ \sqrt{1- \SNR^{\rho -1}}\underline{\xv}_{22},
\end{equation*} where $\underline{\xv}_{21} =\beta\underline{\xv}_{21}'$ with power scaling factor $\beta \in \CC$ with $|\beta|=1$, $\underline{\xv}'_{21} =[\underline{\vv}_{21} + \underline{\dv}_{21}] \mod \Lambda$, and $\underline{\xv}_{22} = [\underline{\vv}_{22} + \underline{\dv}_{22}] \mod \Lambda$.
\item Transmitter 1 performs the DPC encoding as in Section~\ref{subsec:proof1}:
\begin{equation*}
\underline{\xv}_{1} = [\underline{\vv}_{1} - (h_{12}/h_{11})\sqrt{\SNR^{\rho-1}}\underline{\xv}_{2} + \underline{\dv}_{1}] \mod \Lambda,
\end{equation*}
where $\underline{\vv}_{1} = f(m\underline{\wv}_{11})$ with $m=(q_{1})^{-1}(-q_{2})$.
\end{itemize}

From the standard DPC result (see Section~\ref{sec:CIC-GA}), the coding rate $R_{11}$ is achievable if
\begin{equation}
R_{11} \leq \log(|h_{11}|^2\SNR).
\end{equation} Also, receiver 2 observes:
\begin{eqnarray}
\underline{\yv}_{2} &=&h_{21} \sqrt{\INR}\underline{\xv}_{1} + h_{22}\sqrt{\SNR}\underline{\xv}_{2}+\underline{\zv}_{2}\\
&=& h_{21}\sqrt{\INR}\underline{\xv}_{1} + h_{22}\sqrt{\INR}\underline{\xv}_{21}+ h_{22}\sqrt{\SNR - \SNR^{\rho}}\underline{\xv}_{22} + \underline{\zv}_{2}.
\end{eqnarray} Treating the undesired signals $\underline{\xv}_{1}$ and $\underline{\xv}_{21}$ as noise, receiver 2 is able to decode the message $\underline{\wv}_{22}$ if
\begin{equation}\label{eq:R22}
R_{22} \leq \log\left(1+\frac{|h_{22}|^2(\SNR-\SNR^{\rho})}{1+(|h_{21}|^2+|h_{22}|^2)\SNR^{\rho}}\right).
\end{equation} Subtracting the decoded signal $\underline{\xv}_{22}$ from $\underline{\yv}_{2}$, receiver 2 has:
\begin{equation}
\underline{\yv}'_{2} = h_{21}\sqrt{\INR}(\underline{\vv}_{1} + \underline{\dv}_{1} + \lambdav) + \tilde{h}_{22}\underline{\xv}_{21} + \underline{\zv}_{2},
\end{equation} where $\tilde{h}_{22}= \sqrt{\SNR^{\rho}}h_{22} - \sqrt{\SNR^{3\rho-2}}h_{12}h_{21}/h_{11}$ and $\lambdav = Q_{\Lambda}(\underline{\vv}_{1} - (h_{12}/h_{11})\sqrt{\SNR^{\rho-1}}\underline{\xv}_{2} + \underline{\dv}_{1})$. Receiver 2 applies the CoF receiver mapping in (\ref{eq:cof}) with integer coefficients vector $\bv$ and scaling factor $\alpha_{2} =\frac{b_{1}}{h_{21}\sqrt{\INR}}$, yielding
\begin{eqnarray*}
\hat{\underline{\yv}'}_{2} &=& [\alpha_{2}\underline{\yv}_{2}' - b_{1}\underline{\dv}_{1} - b_{2}\underline{\dv}_{21}] \mod \Lambda\\
&=& \left[b_{1}\underline{\vv}_{1}+b_{2}\underline{\vv}_{21}+(\alpha_{2}\beta\tilde{h}_{22}-b_{2})\underline{\xv}_{21}'+\alpha_{2}h_{21}\sqrt{\INR}\lambdav+\alpha_{2}\underline{\zv}_{2}\right] \mod \Lambda\\
&\stackrel{(a)}{=} & \left[b_{1}\underline{\vv}_{1}+b_{2}\underline{\vv}_{21}+(b_{1}\beta\tilde{h}_{22}/(h_{21}\sqrt{\INR})-b_{2})\underline{\xv}'_{21}+(b_{1}/(h_{21}\sqrt{\INR}))\underline{\zv}_{2}\right] \mod \Lambda\\
&\stackrel{(b)}{=} & \left[([b_{2}] \mod p\ZZ[j])f(\underline{\wv}_{21})+(b_{1}\beta\tilde{h}_{22}/(h_{21}\sqrt{\INR})-b_{2})\underline{\xv}'_{21}+(b_{1}/(h_{21}\sqrt{\INR}))\underline{\zv}_{2}\right] \mod \Lambda\\
&\stackrel{(c)}{=} & \left[([b_{2}] \mod p\ZZ[j])f(\underline{\wv}_{21})+(b_{1}/(h_{21}\sqrt{\INR}))\underline{\zv}_{2}\right] \mod \Lambda,
\end{eqnarray*} where $\lambdav = Q_{\Lambda}(\underline{\vv}_{1} - (h_{12}/h_{11})\sqrt{\SNR^{\rho-1}}\underline{\xv}_{2} + \underline{\dv}_{1})$, (a) is due to the fact that $\alpha_{2}h_{21}\sqrt{\INR}\lambdav = b_{1}\lambdav \in \Lambda$, (b) follows from the fact that $m$ is chosen such that $[b_{1}g(m)+b_{2}] \mod p\ZZ[j] = 0$, and (c) is due to the fact that $\beta = h_{21}\sqrt{\INR}/(\tilde{h}_{22}b_{1})$, $b_{1} = \left\lceil|h_{21}\sqrt{\INR}/\tilde{h}_{22}\right\rceil$, and $b_{2}=1$. Then, receiver 2 can reliably decode the message $\underline{\wv}_{21}$ if
\begin{equation}
R_{21} \leq \log^{+}\left(\frac{|h_{21}|^2\INR}{b_{1}^2}\right)
\end{equation} where notice that $b_{1}$ is a constant when $\SNR \rightarrow \infty$. Since $R_{11}=R_{21}$, we have:
\begin{eqnarray}
R_{11}=R_{21} &\leq& \min\left\{\log(|h_{11}|^2\SNR),\log^{+}\left(\frac{|h_{21}|^2\INR}{b_{1}^2}\right)\right\}\nonumber\\
&=& \log^{+}\left(\frac{|h_{21}|^2\INR}{b_{1}^2}\right)\label{eq:R21}.
\end{eqnarray} Using (\ref{eq:R21}) and (\ref{eq:R22}), we have:
\begin{eqnarray}
d_{1}(\rho) &=& \lim_{\SNR \rightarrow \infty} \frac{R_{11}}{\log{\SNR}} = \rho\\
d_{2}(\rho) &=& \lim_{\SNR \rightarrow \infty} \frac{R_{21}+R_{22}}{\log{\SNR}} = 1.
\end{eqnarray}

\subsubsection{$(d_{1}(\rho),d_{2}(\rho) = (\rho,1)$, $\rho \geq 1$}\label{subsec:proof3}

Since $R_{1} \geq R_{2}$, the user messages have the following form:
\begin{eqnarray}
\underline{\wv}_{1} = [\underline{\wv}_{11}, \underline{\wv}_{12}]\mbox{ and }\underline{\wv}_{2} = [\underline{\wv}_{21}, \underline{\wv}_{22}=0],
\end{eqnarray} where notice that $R_{1} = R_{11}+R_{12}$, $R_{2}=R_{21}$, and $R_{11}=R_{21}$. Accordingly, transmitter 1 knows $\underline{\wv}_{11}$, $\underline{\wv}_{12}$, and $\underline{\wv}_{21}$, and transmitter 2 knows $\underline{\wv}_{11} \oplus \underline{\wv}_{21}$ and $\underline{\wv}_{12}$. We let $\bv=[b_{1},b_{2}] \in \ZZ[j]^{2}$ denote the integer coefficients vector used at receiver 1 for the CoF receiver mapping (\ref{eq:cof}), and we let $q_{k} = [b_{k}]_{q}$.
Again, it is assumed that  $q_{1},q_{2} \neq 0$ over $\FF_q$. The proposed achievable scheme proceeds as follows:
\begin{itemize}
\item Transmitter 2 produces the lattice codewords $\underline{\vv}_{21} = f(\underline{\wv}_{11} \oplus \underline{\wv}_{21})$ and $\underline{\vv}_{22} = f(\underline{\wv}_{12})$. Then, it transmits the channel input:
\begin{equation}
\underline{\xv}_{2} = \sqrt{ \SNR^{1-\rho}} \underline{\xv}_{21}+ \sqrt{1- \SNR^{1-\rho}}\underline{\xv}_{22}
\end{equation} where $\underline{\xv}_{21} = \beta \underline{\xv}_{21}'$ with power scaling factor $\beta \in \CC$ with $|\beta|=1$, $\underline{\xv}_{21}' = [\underline{\vv}_{21} + \underline{\dv}_{21}] \mod \Lambda$, and $\underline{\xv}_{22} = [\underline{\vv}_{22} + \underline{\dv}_{22}] \mod \Lambda$.
\item Transmitter 1 performs the DPC encoding as in Section~\ref{subsec:proof1} with the {\em primary} user message $\underline{\wv}_{21}$:
\begin{equation}
\underline{\xv}_{1} = [\underline{\vv}_{1} - (h_{22}/h_{21})\underline{\xv}_{2} + \underline{\dv}_{1}] \mod \Lambda,
\end{equation}
where $\underline{\vv}_{1} = f(m\underline{\wv}_{21})$ with $m=q_{1}^{-1}(-q_{2})$. Differently from the previous coding schemes, DPC is performed to cancel the known interference at receiver 2.
\end{itemize}

Thanks to DPC, receiver 2 can reliably decode the message $\underline{\wv}_{21}$ if
\begin{equation}\label{eq:R21'}
R_{21} \leq \log(|h_{12}|^2\INR).
\end{equation} Also, receiver 1 observes:
\begin{eqnarray}
\underline{\yv}_{1} &=&h_{11} \sqrt{\SNR}\underline{\xv}_{1}+h_{12} \sqrt{\INR}\underline{\xv}_{2}+\underline{\zv}_{1}\\
&=& h_{11}\sqrt{\SNR}\underline{\xv}_{1} + h_{12}\sqrt{\SNR}\underline{\xv}_{21}+h_{12} \sqrt{\SNR^{\rho}-\SNR}\underline{\xv}_{22}+\underline{\zv}_{1}.
\end{eqnarray}Treating the undesired signals $\underline{\xv}_{1}$ and $\underline{\xv}_{21}$ as noise, receiver 1 can reliably decode the message $\underline{\wv}_{12}$ if
\begin{equation}\label{eq:R12'}
R_{12} \leq \log\left(1+\frac{|h_{21}|^2(\SNR^{\rho}-\SNR)}{1+(|h_{11}|^2+|h_{12}|^2)\SNR}\right).
\end{equation} Subtracting the known signal $\underline{\xv}_{22}$, receiver 1 has:
\begin{equation}
\underline{\yv}_{1}' = h_{11}\sqrt{\SNR}(\underline{\vv}_{1}+\underline{\dv}_{1} + \lambdav) + \tilde{h}_{12}\underline{\xv}_{21} + \underline{\zv}_{1},
\end{equation}where $\tilde{h}_{12} = h_{12}\sqrt{\SNR} - \sqrt{\SNR^{2-\rho}}h_{11}h_{22}/h_{21}$ and $\lambdav = Q_{\Lambda}(\underline{\vv}_{1} - (h_{22}/h_{21})\underline{\xv}_{2} + \underline{\dv}_{1})$. Receiver 1 applies the CoF receiver mapping in (\ref{eq:cof}) with integer coefficients vector $\bv$ and scaling factor $\alpha_{1} =\frac{b_{1}}{h_{11}\sqrt{\SNR}}$:
\begin{eqnarray*}
\hat{\underline{\yv}'}_{1} &=&[\alpha_{1}\underline{\yv}'_{1} - b_{1}\underline{\dv}_{1}-b_{2}\underline{\dv}_{21}] \mod \Lambda\\
&=& \left[b_{1}\underline{\vv}_{1}+b_{2}\underline{\vv}_{21} + (b_{1}\beta\tilde{h}_{12}/(h_{11}\sqrt{\SNR})-b_{2})\underline{\xv}'_{21}+ (b_{1}/(h_{11}\sqrt{\SNR}))\underline{\zv}_{1}\right] \mod \Lambda\\
&\stackrel{(a)}{=}& \left[([b_{2}] \mod p\ZZ[j])f(\underline{\wv}_{11}) + (b_{1}\beta\tilde{h}_{12}/(h_{11}\sqrt{\SNR})-b_{2})\underline{\xv}'_{21}+ (b_{1}/(h_{11}\sqrt{\SNR}))\underline{\zv}_{1}\right] \mod \Lambda\\
&\stackrel{(b)}{=}& \left[([b_{2}] \mod p\ZZ[j])f(\underline{\wv}_{11}) + (b_{1}/(h_{11}\sqrt{\SNR}))\underline{\zv}_{1}\right] \mod \Lambda,
\end{eqnarray*} where (a) follows from the fact that $m$ is chosen such that $[b_{1}g(m)+b_{2}] \mod p\ZZ[j] = 0$ and (b) follows from the fact that $\beta = h_{11}\sqrt{\SNR}/(\tilde{h}_{12}b_{1})$, $b_{1}=\left\lceil|h_{11}\sqrt{\SNR}/\tilde{h}_{12}|\right\rceil$, and $b_{2}=1$. Then, receiver 1 can reliably decode the message $\underline{\wv}_{11}$ if
\begin{equation}\label{eq:R11'}
R_{11} \leq \log^{+}\left(\frac{|h_{11}|^2\SNR}{b_{1}^2}\right),
\end{equation}where notice that $b_{1}$ is a constant when $\SNR \rightarrow \infty$. Since $R_{11}=R_{21}$, we have that
\begin{equation}\label{eq:rate}
R_{11}=R_{21} = \min\left\{\log^{+}\left(\frac{|h_{11}|^2\SNR}{b_{1}^2}\right), \log(1+|h_{12}|^2\INR)\right\} = \log^{+}\left(\frac{|h_{11}|^2\SNR}{b_{1}^2}\right).
\end{equation} From (\ref{eq:R12'}) and (\ref{eq:rate}), we can get:
\begin{eqnarray}
d_{1}(\rho) &=& \lim_{\SNR\rightarrow\infty}\frac{R_{11}+R_{12}}{\log\SNR} = \rho\\
d_{2}(\rho) &=& \lim_{\SNR\rightarrow\infty}\frac{R_{21}}{\log\SNR} = 1.
\end{eqnarray}This completes the proof.

\section{Proof of Corollary~\ref{cor:DoF}}\label{proof:DoF}

For the DoF proof, we assume that $P_{\rm{sum}}$ goes to infinity and equivalently, $\SNR$ goes to infinity.  In this proof, we will show that the
individual messages rate $R$ (equal for all messages) grows as $\log{\SNR}$, i.e.,
\begin{equation}
\lim_{\SNR\rightarrow \infty}\frac{R}{\log{\SNR}} = 1.
\end{equation}  Assuming that we use exact IFR (see Section~\ref{subsec:CoF}, eq. (\ref{exact-IFR})), we have:
\begin{equation}
R_{\rm{comp}}(\Hm_{k}\Cm_{k},\Bm_{k},\SNR) \geq \log(\SNR)  - \max_{\ell}\log\left(\|(\Hm_{k}^{-1})^{\herm}\bv_{k,\ell}\|^2\right).
\end{equation} This definitely shows that  $\lim_{\SNR\rightarrow \infty}\frac{R_{\rm{comp}}(\Hm_{k}\Cm_{k},\Bm_{k},\SNR)}{\log{\SNR}} = 1$. Accordingly, $R$ grows as $\log{\SNR}$. However, $\Hm_{k}$ must be full-rank in order to allow exact IFR.
Since $\Hm_{k} = \Fm_{k1}\Vm_{1}$ for $k=1,2$ and $\Hm_{k} = \Fm_{k3}\Vm_{3}$ for $k=3,4$, we need to show that
$\Vm_{1}$ and $\Vm_{3}$ are full rank.  For the alignment, we use the following construction method proposed in \cite{Gou}:
\begin{eqnarray}
\vv_{1,\ell+1} &=& (\Fm_{11}^{-1}\Fm_{12}\Fm_{22}^{-1}\Fm_{21})^{\ell}\vv_{1,1}\\
\vv_{2,\ell} &=& (\Fm_{22}^{-1}\Fm_{21}\Fm_{11}^{-1}\Fm_{12})^{\ell-1}\Fm_{22}^{-1}\Fm_{21}\vv_{1,1}\label{ali:const}
\end{eqnarray}for $\ell=1,\ldots,M-1$. Once $\vv_{1,1}$ is determined, other vectors are completely determined by the above equations.
As argued in \cite{Gou}, since the channel matrices are drawn form a continuous distribution then
$\Fm\triangleq(\Fm_{11}^{-1}\Fm_{12}\Fm_{22}^{-1}\Fm_{21})$ has all distinct eigenvalues almost surely. From \cite[Thm 1.3.9]{Horn}, $\Fm$ is diagonalizable such as
\begin{equation*}
\Fm = \Em\left[
           \begin{array}{ccc}
             \lambda_{1} &  &  \\
              & \ddots &  \\
              &  & \lambda_{M} \\
           \end{array}
         \right]\Em^{-1}
\end{equation*} where the $i$-th column of $\Em$ is an eigenvector of $\Fm$ associated with $\lambda_{i}$.
Choosing $\vv_{1,1}= \Em \onev$, where $\onev$ denotes the all 1's vector, the alignment precoding matrix $\Vm_{1}$ can be rewritten as
\begin{equation}
\Vm_{1}=\Em
\underbrace{\left[
         \begin{array}{cccc}
           1 & \lambda_{1} & \cdots & \lambda_{1}^{M-1} \\
           \vdots & \vdots & \ddots & \vdots \\
           1 & \lambda_{M} & \cdots & \lambda_{M}^{M-1} \\
         \end{array}
       \right]}_{\eqdef \Jm}
\end{equation} where $\Jm$ denotes the Vandermonde matrix. Therefore, the determinant of the $\Vm_{1}$ is computed by
\begin{eqnarray}
\det(\Vm_{1}) &=& \det(\Em)\det(\Jm)\\
&=&\det(\Em) \prod_{1\leq i < j \leq M}(\lambda_{j} - \lambda_{i}) \neq 0.
\end{eqnarray}
This shows that $\Vm_{1}$ is full rank . With the same procedure, we can show that $\Vm_{3}$ is full rank.

\section{Proof of Theorem~\ref{thm:DoFCIC}}\label{proof:MIMOCIC}

We prove that PCoF with CIA and DPC can achieve the optimal $2M$ DoF.
This scheme can be regarded as MIMO extension of Scaled PCoF and DPC  proposed
 in Section~\ref{sec:CIC-GA}. Consider the MIMO IC in (\ref{model:brn}).
 Let $\{\underline{\wv}_{k,\ell}: \ell=1,\ldots,M\}$ denote the independent messages to be intended for receiver $k$, for $k=1,2$. Without loss of generality, it is assumed that  transmitter 2 knows $\underline{\Wm}_{1} \oplus \underline{\Wm}_{2}$.
Our achievable scheme proceeds as follows.

\textbf{Encoding:}
\begin{itemize}
\item Transmitter 2 independently produces the $M$ lattice codewords $\underline{\tv}_{2,\ell} = f(\underline{\wv}_{1,\ell} \oplus \underline{\wv}_{2,\ell})$ for $\ell=1,\ldots,M$ and transmits the channel input:
\begin{equation}
\underline{\Xm}_{2} = \Vm\underline{\Xm}_{2}',
\end{equation} for some $\Vm \in \CC^{M \times M}$, where $\underline{\Xm}_{2}' = [\underline{\Tm}_{2} + \underline{\Dm}_{2}] \mod \Lambda$.
\item Transmitter 1 performs the DPC using the known interference signal $\Fm_{12}\underline{\Xm}_{2}$  to get:
\begin{equation}
\underline{\Xm}_{1}=\left[\underline{\Tm}_{1} - \Fm_{11}^{-1}\Fm_{12}\underline{\Xm}_{2} + \underline{\Dm}_{1}\right] \mod \Lambda
\end{equation} where  $\underline{\Tm}_{1}=f(-\underline{\Wm}_{1})$ denotes the lattice codewords corresponding to precoded messages $-\underline{\Wm}_{1}$.
\end{itemize}

\textbf{Decoding:}
\begin{itemize}
\item Receiver 1 performs the modulo-lattice mapping as $\hat{\underline{\Ym}}_{1} = [\Fm_{11}^{-1}\underline{\Ym}_{1} - \underline{\Dm}_{1}] \mod \Lambda$ that yields:
\begin{eqnarray*}
\hat{\underline{\Ym}}_{1} &=& [\underline{\Tm}_{1} - \underline{\Tm}_{1} +\underline{\Xm}_{1} + \Fm_{11}^{-1}\Fm_{12}\underline{\Xm}_{2} - \underline{\Dm}_{1} + \Fm_{11}^{-1}\underline{\Zm}_{1}] \mod \Lambda\\
&=&[\underline{\Tm}_{1} + \underline{\Xm}_{1} - \left([\underline{\Tm}_{1}-\Fm_{11}^{-1}\Fm_{12}\underline{\Xm}_{2}+\underline{\Dm}_{1} ] \mod \Lambda\right) +
\Fm_{11}^{-1}\underline{\Zm}_{1}] \mod \Lambda\\
&=& [\underline{\Tm}_{1} + \Fm_{11}^{-1}\underline{\Zm}_{1}] \mod \Lambda,
\end{eqnarray*} where the last equality is due to the fact that $\underline{\Xm}_{1}=\left[\underline{\Tm}_{1} - \Fm_{11}^{-1}\Fm_{12}\underline{\Xm}_{2} + \underline{\Dm}_{1}\right] \mod \Lambda$.  This shows that receiver 1 has interference-free channel, thus achieving the $M$ DoF. Since we have $M$ parallel point-to-point channels, the following sum-rate is achievable:
\begin{equation}
R_{{\rm sum},1}=\sum_{\ell=1}^{M} \log\left(1+\frac{\SNR}{\|\Fm_{11}^{-1}(\ell)\|^2}\right).
\end{equation}
\item Receiver 2 applies the CoF receiver mapping (\ref{eq:cof}) with integer coefficients $\Bm^{\herm}=\left[
                                \begin{array}{cc}
                                  \Id & \Id \\
                                \end{array}
                              \right]$ and scaling factor $\Fm_{21}^{-1}$, yielding
\begin{eqnarray*}
\hat{\underline{\Ym}}_{2} &=&\left[\Fm_{21}^{-1}\underline{\Ym}_{2} - \underline{\Dm}_{1} - \underline{\Dm}_{2}\right]\\
&=& \left[\underline{\Tm}_{1}+\underline{\Tm}_{2}
 + ((\Fm_{21}^{-1}\Fm_{22} - \Fm_{11}^{-1}\Fm_{12})\Vm-\Id)\underline{\Xm}_{2} '+ \Fm_{21}^{-1}\underline{\Zm}_{2}\right] \mod \Lambda\\
&\stackrel{(a)}{=}& [\underline{\Tm}_{1}+\underline{\Tm}_{2}+\Fm_{21}^{-1}\underline{\Zm}_{2}] \mod \Lambda\\
&\stackrel{(b)}{=}&[f(\underline{\Wm}_{2}) + \Fm_{21}^{-1}\underline{\Zm}_{2}] \mod \Lambda,
\end{eqnarray*}where  $(a)$ is due to the fact that the precoding matrix is chosen as $\Vm=(\Fm_{21}^{-1}\Fm_{22} - \Fm_{11}^{-1}\Fm_{12})^{-1}$ and $(b)$ follows the precoding over $\FF_{q}$ using the coefficient $(-1)$ at the cognitive transmitter .  Then, receiver 2 can achieve the $M$ DoF. Also, from $M$ parallel point-to-point channels, we can achieve the sum-rate of
\begin{equation}
R_{{\rm sum},2}= \sum_{\ell=1}^{M}\log\left(1+\frac{\SNR}{\|\Fm_{21}^{-1}(\ell)\|^2\|\Vm(\ell)\|^2}\right).
\end{equation}
\end{itemize} The proof is done from $R_{{\rm sum}} = R_{{\rm sum},1} + R_{{\rm sum},2}$.

\section*{Acknowledgment}

This work was partially supported by NSF Grant CCF 1161801 and by a collaborative project with ETRI.


\end{document}